\newtheorem{theo}{Theorem}[section]
\newtheorem{prop}[theo]{Proposition}
\newtheorem{coro}[theo]{Corollary}
\newtheorem{lemm}[theo]{Lemma}
\theoremstyle{definition}
\newtheorem{def1}[theo]{Definition}
\theoremstyle{remark}
\newtheorem{rema}[theo]{Remark}
\newcommand{\Op}{\operatorname{Op}}
\newcommand{\nwc}{\newcommand}
\nwc{\eps}{\epsilon}
\nwc{\ep}{\epsilon}
\nwc{\vareps}{\varepsilon}
\nwc{\Oph}{\operatorname{Op}_\hbar}
\nwc{\la}{\langle}
\nwc{\ra}{\rangle}
\nwc{\mf}{\mathbf} %Latex (as in \bf not tilted math letters)
\nwc{\blds}{\boldsymbol} %Latex 
\nwc{\ml}{\mathcal} %Latex
\nwc{\defeq}{\stackrel{\rm{def}}{=}}
\nwc{\cE}{\ml{E}}
\nwc{\cN}{\ml{N}}
\nwc{\cO}{\ml{O}}
\nwc{\cP}{\ml{P}}
\nwc{\cU}{\ml{U}}
\nwc{\cV}{\ml{V}}
\nwc{\cW}{\ml{W}}
\nwc{\tU}{\widetilde{U}}
\nwc{\IN}{\mathbb{N}}
\nwc{\IR}{\mathbb{R}}
\nwc{\IZ}{\mathbb{Z}}
\nwc{\IC}{\mathbb{C}}
\nwc{\IT}{\mathbb{T}}
\nwc{\IS}{\mathbb{S}}
\nwc{\tP}{\widetilde{P}}
\nwc{\tPi}{\widetilde{\Pi}}
\nwc{\tV}{\widetilde{V}}
\nwc{\supp}{\operatorname{supp}}
\nwc{\rest}{\restriction}
\begin{document}

\title[Topology of Pollicott-Ruelle resonant states]{Topology of Pollicott-Ruelle resonant states}

\author[Nguyen Viet Dang]{Nguyen Viet Dang}

\address{Institut Camille Jordan (U.M.R. CNRS 5208), Universit\'e Claude Bernard Lyon 1, B\^atiment Braconnier, 43, boulevard du 11 novembre 1918, 
69622 Villeurbanne Cedex }

\email{dang@math.univ-lyon1.fr}

\author[Gabriel Rivi\`ere]{Gabriel Rivi\`ere}

\address{Laboratoire Paul Painlev\'e (U.M.R. CNRS 8524), U.F.R. de Math\'ematiques, Universit\'e Lille 1, 59655 Villeneuve d'Ascq Cedex, France}

\email{gabriel.riviere@math.univ-lille1.fr}

\begin{abstract} 
We prove that the twisted De Rham cohomology of a flat vector bundle over some smooth manifold 
is isomorphic to the cohomology of invariant Pollicott--Ruelle resonant states associated with Anosov and 
Morse--Smale flows. As a consequence, we obtain generalized Morse inequalities for such flows. In the case of Morse--Smale flows, we relate 
the resonances lying on the imaginary axis with the twisted Fuller measures used by Fried in his work on Reidemeister torsion. In particular, when $V$ is a nonsingular 
Morse-Smale flow, we 
show that the Reidemeister torsion can be recovered from
the only knowledge of dynamical
resonances on the imaginary axis by expressing the torsion  
as a zeta regularized infinite product of these  
resonances.
\end{abstract}

\maketitle

\section{Introduction}

Consider $M$ a smooth ($\ml{C}^{\infty}$), compact and oriented manifold which has no boundary and which is of dimension $n\geq 1$. We say that 
$f\in\ml{C}^{\infty}(M,\IR)$ is a Morse function if it has finitely many critical points all of them being nondegenerate. Denote by $c_k(f)$ the 
number of critical points of index $k$ and by $b_k(M)$ the $k$-th Betti number of the manifold. A famous result of Morse states that the following 
inequalities hold~\cite{Mo25}:
$$\forall 0\leq k\leq n,\quad \sum_{j=0}^k(-1)^{k-j}b_j(M)\leq\sum_{j=0}^k(-1)^{k-j}c_j(f),$$
with equality when $k=n$. These inequalities are usually called Morse inequalities. Later on, Thom~\cite{Th49} and Smale~\cite{Sm60} 
observed that, if we are given a Riemannian metric on $M$, then we can define a cell decomposition of the manifold by considering the stable 
manifolds associated with each critical point by the induced gradient flow. Then, $c_j(f)$ is thought as the number of stable manifolds of 
dimension $j$ and Smale gave a new proof of these inequalities~\cite{Sm60} which had in some sense a more dynamical flavour. This dynamical 
approach to Morse inequalities was then pursued by many people and generalized to much more general dynamical systems. Among others, we refer 
the reader to the books of Conley~\cite{Co78} and Franks~\cite{Fr82} for examples of such generalizations. This dynamical approach to differential 
topology also leads to dynamical interpretation of other topological invariants such as the Reidemeister torsion by Fried~\cite{F87}. 
Inspired 
by the theory of currents, but still from a dynamical perspective, we can also mention the works of Laudenbach~\cite{Lau92} and 
Harvey--Lawson~\cite{HaLa01} who showed how to realize the cell decomposition of Thom and Smale in terms of De Rham currents.

More than fifty years after Morse's seminal work, Witten introduced in the context of Hodge theory another approach to Morse 
inequalities~\cite{Wi82}. This point of view was further developped by Helffer and Sj\"ostrand using tools from semiclassical 
analysis~\cite{HeSj85} and brought a new spectral perspective on these results. In particular, the coefficients $c_j(f)$ are interpreted by 
Witten as the number of small eigenvalues (counted with multiplicity) of a certain deformation of the Hodge Laplacian acting on forms of 
degree $j$. This also lead to many developments in analysis and topology that would be again hard to describe in details. We can for 
instance quote the works of Bismut-Zhang~\cite{BisZhang92} and Burghelea-Friedlander-Kappeler~\cite{burg96} who developped the 
relationship
between Witten's approach and Reidemeister torsion.

If we come back to the dynamical perspective, a spectral approach slightly different from the ones arising from Hodge-Witten theory 
was recently developped by several authors to study dynamical systems with hyperbolic behaviour. We shall describe these results more 
precisely below. In particular, this theory applies to the dynamical systems 
studied by Smale~\cite{Sm67}: Anosov flows~\cite{BuLi07, FaSj11, DyZw13, FaTs17}, Axiom A flows~\cite{DyGu14}
and
Morse-Smale flows~\cite{DaRi16, DaRi17a}. 
These different works were initially motivated by the study of the correlation function in dynamical systems 
and by the analytic properties of the so-called Ruelle zeta function~\cite{Ru76}. In the present work, 
we aim at showing how these results can be also used to obtain classical results from differential topology in a manner which is somewhat 
intermediate between the spectral approach of Witten and the more dynamical one of Thom and Smale. As a by-product, it will also unveil the topological informations contained 
in the correlation spectrum (also called Pollicott-Ruelle spectrum),
resonances on the imaginary axis in particular,
studied in the above works.

\section{Statement of the main results}
\label{s:mainresults}
\subsection{Dynamical framework}
Let $M$ be a smooth, compact, oriented, boundaryless manifold of dimension $n\geq 1$ and let $\mathbf{p}:\ml{E}\rightarrow M$ be a smooth complex vector 
bundle of rank $N$. Suppose that $\ml{E}$ is endowed with a flat connection $\nabla:\Omega^0(M,\ml{E})\rightarrow\Omega^1(M,\ml{E})$~\cite[Ch.~12]{Lee09}. 
This allows us to define an exterior derivative $d^{\nabla}:\Omega^{\bullet}(M,\ml{E})\rightarrow\Omega^{\bullet+1}(M,\ml{E})$ which satisfies 
$d^{\nabla}\circ d^{\nabla}=0$. In particular, one can introduce the twisted De Rham complex $(\Omega^*(M,\ml{E}),d^{\nabla})$ associated with $d^{\nabla}$:
$$0\xrightarrow{d^{\nabla}} \Omega^0(M,\ml{E})\xrightarrow{d^{\nabla}} \Omega^{1}(M,\ml{E})\xrightarrow{d^{\nabla}} \ldots\xrightarrow{d^{\nabla}} 
\Omega^n(M,\ml{E})\xrightarrow{d^{\nabla}} 0.$$
The $k$th-cohomology group of this complex is denoted by $\mathbf{H}^k(M,\ml{E})$. This kind of 
complex appears for instance naturally in Hodge theory~\cite{BDIP96, Mn14}.

Fix now $V$ a smooth vector field on $M$ and denote by $\varphi^t$ the induced flow on $M$. Using the flat connection $\nabla$, one can lift this flow 
into a flow $\Phi_k^t$ on the vector bundle $\mathbf{p}_k:\Lambda^k(T^*M)\otimes\ml{E}\mapsto M$ for every $0\leq k\leq n$. For every $t\in\IR$, one 
has $\mathbf{p}_k\circ \Phi_k^t=\varphi^t\circ\mathbf{p}_k$. 
If we define the corresponding Lie derivative on $\ml{E}$
\begin{equation}\label{e:lie}
\ml{L}_{V,\nabla}^{(k)}=\left(d^{\nabla}+\iota_V\right)^2:\Omega^k(M,\ml{E})\rightarrow \Omega^k(M,\ml{E}),
\end{equation}
where $\iota_V$ is the contraction by the vector field $V$, then, for any section $\psi_0$ in $\Omega^k(M,\ml{E})$ and for any $t\geq 0$, 
$\Phi_k^{-t*}(\psi_0)$ is the solution of the following partial differential equation:
\begin{equation}\label{e:heat-Lie-derivative}
\partial_t\psi=-\ml{L}_{V,\nabla}^{(k)}\psi,\ \quad\psi(t=0)=\psi_0\in\Omega^k(M,\ml{E}).
\end{equation}
As a tool to describe the long time behaviour of this equation, it is natural to form the following function
\begin{equation}\label{e:correlation}
\forall (\psi_1,\psi_2)\in\Omega^{n-k}(M,\ml{E}')\times\Omega^k(M,\ml{E}),\ C_{\psi_1,\psi_2}(t):=\int_M \psi_1\wedge\Phi_k^{-t*}(\psi_2),
\end{equation}
which we will call the \emph{correlation function}.

\subsection{Pollicott-Ruelle resonances}

Set $\mathcal{E}^\prime$ to be the dual bundle to $\cE$. Following the works of Pollicott~\cite{Po85} and Ruelle~\cite{Ru87a}, it is sometimes simpler to consider first the Laplace transform of this quantity,
\begin{equation}\label{e:correlation-laplace}
\forall (\psi_1,\psi_2)\in\Omega^{n-k}(M,\ml{E}')\times\Omega^k(M,\ml{E}),\ \hat{C}_{\psi_1,\psi_2}(z):=
\int_{0}^{+\infty}e^{-zt}\left(\int_M \psi_1 \wedge \Phi_k^{-t*}(\psi_2)\right)dt,
\end{equation}
which is well defined for $\text{Re}(z)>0$ large enough. In the above references, Pollicott and Ruelle showed that, for Axiom $A$ 
vector fields~\cite{Sm67} and for $\psi_1$ and $\psi_2$ 
compactly supported near the basic sets of the flow, this function admits a meromorphic extension to some 
half plane $\text{Re}(z)>-\delta$ with $\delta>0$ small enough. The poles and 
residues of this meromorphic continuation describe in some sense the fine structure of the flow 
$\Phi_k^{-t*}=e^{-t\ml{L}_{V,\nabla}^{(k)}}$ as $t\rightarrow+\infty$. More recently, it was proved by 
Butterley and Liverani that, for Anosov vector fields, the Laplace transformed correlators have meromorphic 
extensions to the entire complex 
plane without any restrictions on the supports of $\psi_1$ and $\psi_2$~\cite{BuLi07}. A different proof based on microlocal 
techniques was given by 
Faure and Sj\"ostrand~\cite{FaSj11} -- see also~\cite{DyZw13, DyGu14} for related 
results\footnote{We refer to section~\ref{s:spectral} for a more detailed account.}. In~\cite{DaRi16, DaRi17a}, we also proved this meromorphic extension in the case 
of Morse-Smale vector fields which are $\ml{C}^1$-linearizable -- see appendix~\ref{a:flows} for the precise definition.

Fix now $0\leq k\leq n$. To summarize, these recent developments showed that, for vector fields $V$ which are 
either Anosov or ($\ml{C}^1$-linearizable) Morse-Smale, there exists a minimal discrete\footnote{We mean that it has no accumulation point. In particular, it is at most countable.} set 
$\ml{R}_k(V,\nabla)\subset\IC$ such that, given any $(\psi_1,\psi_2)\in\Omega^{n-k}(M,\ml{E}')\times\Omega^{k}(M,\ml{E})$, the map $z\mapsto \hat{C}_{\psi_1,\psi_2}(z)$
 has a meromorphic extension whose poles are contained inside $\ml{R}_k(V,\nabla)$. These poles are called the \emph{Pollicott-Ruelle resonances}. Moreover, given any such 
 $z_0\in\ml{R}_k(V,\nabla)$, there exists an integer $m_k(z_0)$ and a linear map of \emph{finite rank}
\begin{equation}\label{e:spectral-projector}\pi_{z_0}^{(k)}:\Omega^k(M,\ml{E})\rightarrow\ml{D}^{\prime k}(M,\ml{E})\end{equation}
 such that, given any $(\psi_1,\psi_2)\in\Omega^{n-k}(M,\ml{E}')\times\Omega^{k}(M,\ml{E})$, one has, in a small neighborhood of $z_0$,
 $$\hat{C}_{\psi_1,\psi_2}(z)=\sum_{l=1}^{m_k(z_0)}(-1)^{l-1}\frac{\left\la(\ml{L}_{V,\nabla}^{(k)}+z_0)^{l-1}\pi_{z_0}^{(k)}(\psi_2),\psi_1\right\ra}{(z-z_0)^l}+R_{\psi_1,\psi_2}(z),$$
 where $R_{\psi_1,\psi_2}(z)$ is holomorphic. 
 Here, we use the convention that $\ml{D}^{\prime k}(M,\ml{E})$ represents the currents of degree $k$ with values in $\ml{E}$. 
 Elements inside the range of $\pi_{z_0}^{(k)}$ are called \emph{Pollicott-Ruelle resonant states} and, as we shall explain it below, they can be interpreted as the generalized eigenvectors 
 of the operator $-\ml{L}_{V,\nabla}^{(k)}$ acting on some appropriate Sobolev space. For any resonance $z_0$, we denote by $C_{V,\nabla}^k(z_0)$ the range of the 
 operator $\pi_{z_0}^{(k)}$ which is a finite dimensional space.

\subsection{Realization of De Rham cohomology via Pollicott-Ruelle resonant states}

 From the spectral interpretation of $C^{\bullet}_{V,\nabla}(z_0)$ -- see section~\ref{s:spectral} for details, one can deduce that, 
 for any\footnote{Whenever $z_0$ is not a resonance, we use the convention $C^k_{V,\nabla}(z_0)=\{0\}$.} $z_0\in\IC$,
\begin{equation}\label{e:ruelle-complex}0\xrightarrow{d^{\nabla}} C_{V,\nabla}^0(z_0)\xrightarrow{d^{\nabla}} C_{V,\nabla}^1(z_0)\xrightarrow{d^{\nabla}} \ldots\xrightarrow{d^{\nabla}} 
C_{V,\nabla}^n(z_0)\xrightarrow{d^{\nabla}} 0\end{equation}
defines a cohomological complex. We will denote by $\mathbf{H}^k(C_{V,\nabla}^{\bullet}(z_0),d^{\nabla})$ the cohomology of that complex. Our first main result states that this complex 
is isomorphic to the twisted De Rham complex when $z_0=0$:
\\
\\
\fbox{
\begin{minipage}{0.94\textwidth}  
\begin{theo}\label{t:quasiisomorphism} Let $\ml{E}\rightarrow M$ be a smooth complex vector bundle of dimension $N$ which is endowed 
with a flat connection $\nabla$. Suppose that $V$ is a vector field which is either Morse-Smale and $\ml{C}^1$-linearizable or Anosov. Then, for every 
$ 0\leq k\leq n$, the maps
$$\pi_{0}^{(k)}:\Omega^k(M,\ml{E})\rightarrow C_{V,\nabla}^k(0)$$
induce isomorphisms between $\mathbf{H}^k(M,\ml{E})$ and $\mathbf{H}^k(C_{V,\nabla}^{\bullet}(0),d^{\nabla})$.
\end{theo}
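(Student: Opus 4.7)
The plan is to build a chain homotopy between $\pi_0^{(\bullet)}$ and the identity using Cartan's magic formula together with a Laurent expansion of the resolvent, and then to promote this into a quasi-isomorphism by comparing with an intermediate anisotropic complex. The starting point is the identity
$$
\ml{L}_{V,\nabla}^{(k)} = d^\nabla\circ\iota_V + \iota_V\circ d^\nabla
$$
contained in definition~\eqref{e:lie}, which in particular gives $[d^\nabla,\ml{L}_{V,\nabla}^{(\bullet)}]=0$. Since $\pi_0^{(k)}$ is a contour integral of the resolvent $R_k(z):=(\ml{L}_{V,\nabla}^{(k)}+z)^{-1}$ around $z=0$ on an appropriate anisotropic Sobolev space $\ml{H}^\bullet$ (produced by the spectral theory recalled in \S\ref{s:spectral}), it will commute with $d^\nabla$. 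This ensures that $d^\nabla$ preserves each $C_{V,\nabla}^k(0)$ and that $\pi_0^{(\bullet)}$ is a morphism of complexes from $(\Omega^\bullet(M,\ml{E}),d^\nabla)$ to $(C_{V,\nabla}^\bullet(0),d^\nabla)$, inducing a well-defined map on cohomology.

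To invert this map in cohomology, I would start from the resolvent identity $\ml{L}_{V,\nabla}^{(k)}R_k(z) = \mathrm{Id} - zR_k(z)$ and inject Cartan's formula to get
$$
\mathrm{Id}-zR_k(z) = d^\nabla\bigl[\iota_V R_k(z)\bigr]+\bigl[\iota_V R_k(z)\bigr]d^\nabla.
$$
Writing the Laurent expansion $R_k(z)=\sum_{j\geq -m}z^jA_j^{(k)}$ near $z=0$, the principal polar coefficients encode $\pi_0^{(k)}$ together with its nilpotent companions. Because $[d^\nabla,\ml{L}_{V,\nabla}]=0$, the polar parts on the right-hand side cancel, and extracting the constant term yields the chain homotopy
$$
\mathrm{Id}-\pi_0^{(k)} = d^\nabla H^{(k)}+H^{(k)}d^\nabla,\qquad H^{(k)}:=\iota_V A_0^{(k)},
$$
as an identity of bounded operators on $\ml{H}^\bullet$.

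The theorem will then follow from a two-step comparison through the intermediate complex $(\ml{H}^\bullet,d^\nabla)$. First, the inclusion $\Omega^\bullet(M,\ml{E})\hookrightarrow\ml{H}^\bullet$ should be a quasi-isomorphism, since the twisted De Rham cohomology can be computed with distributional sections via standard mollifier regularization of currents. Second, combined with the obvious identity $\pi_0^{(\bullet)}\circ\iota=\mathrm{Id}$ on $C_{V,\nabla}^\bullet(0)$, the chain homotopy above realizes $\pi_0^{(\bullet)}$ and the inclusion $C_{V,\nabla}^\bullet(0)\hookrightarrow\ml{H}^\bullet$ as mutually inverse chain-homotopy equivalences, so the latter is also a quasi-isomorphism. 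Composing, $\pi_0^{(\bullet)}:\Omega^\bullet(M,\ml{E})\to C_{V,\nabla}^\bullet(0)$ is itself a quasi-isomorphism, which is exactly the content of Theorem~\ref{t:quasiisomorphism}.

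The hardest part will be the analytic bookkeeping: ensuring that $\pi_0^{(k)}$, $A_0^{(k)}$ and $\iota_V$ all act continuously on the same anisotropic space, handling possible Jordan blocks at $z=0$ (where the one-line derivation above requires a finite number of correction terms built from the nilpotent data $A_{-m}^{(k)},\dots,A_{-2}^{(k)}$), and verifying that in both the Morse-Smale and the Anosov settings the anisotropic spaces of~\cite{FaSj11,DaRi16,DaRi17a} enjoy the density and regularization properties needed for the first quasi-isomorphism above.
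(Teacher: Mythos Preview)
Your proposal is correct and follows essentially the same route as the paper: the homotopy operator $H^{(k)}=\iota_V A_0^{(k)}$ you extract from the constant term of the resolvent is exactly the paper's $R^{(k)}=\iota_V\circ(\ml{L}_{V,\nabla}^{(k)})^{-1}\circ(\text{Id}-\pi_0^{(k)})$ (the reduced resolvent at $0$), and both arguments then reduce to the Sobolev-refined de Rham comparison, which the paper handles via elliptic regularity (Remark~\ref{r:regularity-derham}) rather than mollifiers. One small simplification: your worry about Jordan-block correction terms is unnecessary, since equating constant Laurent coefficients in $\text{Id}-zR_k(z)=d^\nabla\iota_V R_k(z)+\iota_V R_{k+1}(z)d^\nabla$ gives the chain homotopy $\text{Id}-\pi_0^{(k)}=d^\nabla H^{(k)}+H^{(k+1)}d^\nabla$ cleanly, regardless of the pole order.
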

\end{minipage}
}
\\
\\

Observe that a direct corollary of this result is that the twisted De Rham cohomology is finite dimensional. Its dimension is usually denoted by 
$b_k(M,\ml{E})$ and it is called the $k$-th Betti number of $(\ml{E},\nabla)$. We will also explain in section~\ref{s:poincare} how to recover 
from our analysis that $b_k(M,\ml{E})=b_{n-k}(M,\ml{E}')$ for any $0\leq k\leq n$. In the statement of our Theorem, the assumption that the 
vector field is either Anosov or Morse-Smale comes from the fact that we have a good spectral theory for such systems 
thanks to~\cite{BuLi07, FaSj11, DyZw13, DaRi17a}. Yet, as we shall see, our proof is rather robust and this result should hold for any type of vector 
fields which has a good spectral behaviour yielding in particular an isolated eigenvalue at $z=0$ of finite multiplicity. Thanks to a classical 
result of Peixoto~\cite{Pe62} and to the Sternberg-Chen Theorem~\cite{Ne69, WWL08}, the assumption of being a $\ml{C}^1$-linearizable 
Morse-Smale vector field is in fact generic when $\text{dim} (M)=2$ -- see appendix~\ref{a:flows}.

In the case of Morse-Smale 
\emph{gradient} flows, this theorem was proved in~\cite{DaRi16} under the assumption that $\ml{E}$ is the trivial bundle $M\times\IC$ and it
provided a new (spectral) proof of earlier results due to Laudenbach~\cite{Lau92} and Harvey-Lawson~\cite{HaLa00, HaLa01}. For other flows, 
this result seems to be new and it gives a \emph{realization of the twisted De Rham cohomology as the cohomology of Pollicott-Ruelle 
resonant states}. Recall that in the case of Morse-Smale flows~\cite{DaRi17b}, we described explicitely a family of twisted 
currents generating the Pollicott-Ruelle resonant states and whose support are given by the stable manifolds. In a work related 
to the Anosov case~\cite{DyZw16}, Dyatlov and Zworski showed how to identify 
the Pollicott-Ruelle resonant states $u\in\ml{D}^{\prime k}(M,\IC)$ satisfying $\iota_V(u)=0$ with the De Rham 
cohomology of $X$ when $V$ is the geodesic vector field on $M:=SX$ (with $X$ negatively curved and $\text{dim}(X)=2$). Finally, in the case of Hodge-De Rham 
theory~\cite{BDIP96, Mn14}, the cohomology of the De Rham complex can be represented by the 
kernel of the Laplace operator $(d^{\nabla}+(d^{\nabla})^*)^2$, which is also finite dimensional. 

\subsection{Generalized Morse inequalities}

The gain compared with Hodge theory is that we do not have to go through Witten's deformation procedure to derive the Morse inequalities and that 
we can deal directly with the limit operator $-\ml{L}_{V,\nabla}$. This is of course 
at the expense of the spectral analysis of $-\ml{L}_{V,\nabla}$. Let us now draw some 
generalizations of the Morse-Smale inequalities from this first result. 
By standard arguments on exact sequences -- see 
e.g.~\cite[Par.~8.3]{DaRi16} for a brief reminder, one can verify that the following holds:
\\
\\
\fbox{
\begin{minipage}{0.94\textwidth}  
\begin{coro}[Spectral Morse inequalities]\label{c:morse-spectral}
 Suppose the assumptions of Theorem~\ref{t:quasiisomorphism} are satisfied. Then, 
the following holds:
$$\forall 0\leq k\leq n,\ \sum_{j=0}^k(-1)^{k-j}\operatorname{dim}C_{V,\nabla}^j(0)\geq \sum_{j=0}^k(-1)^{k-j}b_j(M,\ml{E}),$$
with equality in the case $k=n$.
\end{coro}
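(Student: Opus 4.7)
The plan is to reduce the statement to a purely homological lemma about finite dimensional cochain complexes, once we know that the complex \eqref{e:ruelle-complex} at $z_0=0$ has finite dimensional terms. Finiteness is already guaranteed by the dynamical setup: each space $C_{V,\nabla}^{j}(0)$ is, by construction, the range of the finite rank spectral projector $\pi_0^{(j)}$, and the differentials $d^{\nabla}$ genuinely map $C_{V,\nabla}^{j}(0)$ to $C_{V,\nabla}^{j+1}(0)$ because $d^{\nabla}$ commutes with $\ml{L}_{V,\nabla}^{(\bullet)}=(d^{\nabla}+\iota_V)^2$ and hence with all spectral projectors attached to it.

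The first step is to invoke Theorem~\ref{t:quasiisomorphism}, which yields
\begin{equation*}
\dim \mathbf{H}^k\bigl(C_{V,\nabla}^{\bullet}(0),d^{\nabla}\bigr)=\dim \mathbf{H}^k(M,\ml{E})=b_k(M,\ml{E}),\qquad 0\leq k\leq n.
\end{equation*}
So it suffices to prove that any finite dimensional cochain complex $0\to C^0\to C^1\to\cdots\to C^n\to 0$ satisfies
\begin{equation*}
\sum_{j=0}^{k}(-1)^{k-j}\dim C^j \;\geq\; \sum_{j=0}^{k}(-1)^{k-j}\dim \mathbf{H}^j(C^{\bullet}),
\end{equation*}
with equality when $k=n$.

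The second step is this standard algebraic lemma, proved by a short telescoping argument. Writing $Z^j=\ker(d^{\nabla}\!\restriction_{C^j})$ and $B^j=\mathrm{im}(d^{\nabla}\!\restriction_{C^{j-1}})$, the rank--nullity identity gives $\dim C^j=\dim Z^j+\dim B^{j+1}$, while $\dim \mathbf{H}^j=\dim Z^j-\dim B^j$, so that
\begin{equation*}
\dim C^j-\dim \mathbf{H}^j=\dim B^j+\dim B^{j+1}.
\end{equation*}
Taking the alternating sum from $j=0$ up to $j=k$, all middle terms cancel and, since $B^0=0$, one is left with $\dim B^{k+1}\geq 0$, which is the desired inequality. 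When $k=n$, the complex terminates so $B^{n+1}=0$ and equality holds, giving the Euler characteristic identity $\sum_j (-1)^j\dim C_{V,\nabla}^j(0)=\sum_j(-1)^j b_j(M,\ml{E})$.

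There is no real obstacle here once Theorem~\ref{t:quasiisomorphism} is in hand: all the analytic difficulty has been absorbed into the proof of the quasi-isomorphism and into the existence of the finite rank projectors $\pi_0^{(k)}$ coming from~\cite{BuLi07, FaSj11, DyZw13, DaRi17a}. The only point to emphasize in the write-up is that $d^{\nabla}$ preserves the resonant subspaces at $z_0=0$, so that \eqref{e:ruelle-complex} is genuinely a subcomplex of the twisted De Rham complex and the algebraic Morse inequalities apply verbatim. As suggested in the statement, a brief pointer to \cite[Par.~8.3]{DaRi16} is sufficient to complete the argument.
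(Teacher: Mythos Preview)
Your proof is correct and follows essentially the same approach as the paper: the authors explicitly state that Corollary~\ref{c:morse-spectral} follows from Theorem~\ref{t:quasiisomorphism} by ``standard arguments on exact sequences'' and refer to \cite[Par.~8.3]{DaRi16}, which is precisely the algebraic Morse inequality for finite-dimensional cochain complexes that you spell out via the rank--nullity telescoping argument. Your remark that $d^{\nabla}$ preserves $C_{V,\nabla}^{\bullet}(0)$ is also in line with the paper's reasoning (see~\eqref{e:d-commute-projector}).
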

\end{minipage}
}
\\
\\
Recall that, for $k=n$, one recovers the Euler characteristic $\chi(M,\ml{E})$. If $V$ is a Morse-Smale gradient vector field (which is 
$\ml{C}^1$-linearizable) and if $\ml{E}=M\times\IC$ is the trivial bundle, we have shown in~\cite{DaRi16} that elements inside 
$C_{V,\nabla}^j(0)$ are generated by the currents of integration on unstable manifolds associated with critical points of index $k$. In particular, 
the dimension of that space is equal to the number of critical points of index $j$. Hence, this corollary applied to Morse-Smale gradient flows 
recovers in that case the classical Morse inequalities. Let us now apply this result in the more general framework of Morse-Smale vector fields. 
By definition~\cite{Sm60}, the nonwandering set of a Morse-Smale flow is composed of finitely many critical points and closed orbits, 
each of them being hyperbolic. Closed orbits $\Lambda$ can be divided into two categories: untwisted (if the corresponding stable manifold
\footnote{See the appendix of~\cite{DaRi17a} for a brief reminder.} $W^s(\Lambda)$ is 
orientable) and twisted (otherwise). We define the twisting index $\Delta_{\Lambda}$ of a closed orbit $\Lambda$ to be equal to $+1$ 
in the untwisted case and to $-1$ in the twisted case. Suppose now for simplicity of exposition that $\nabla$ preserves a 
Hermitian structure on $\ml{E}$ i.e. parallel transport by 
the flat connection $\nabla$ preserves the fiber metric of $\cE$. It implies that, for every closed orbit, the monodromy matrix $M_{\ml{E}}(\Lambda)$ for the parallel transport 
is a unitary matrix. In particular, its spectrum is included in $\IS^1$. For every closed orbit $\Lambda$, we then define $m_{\Lambda}$ to 
be the multiplicity of $\Delta_{\Lambda}$ as an eigenvalue of $M_{\ml{E}}(\Lambda)$ (this multiplicity can vanish and depends on $\nabla$). 
Using these conventions, we can state the following~:
 
\begin{coro}[Generalized Morse-Smale inequalities]\label{c:morse-smale-inequality} 
Let $\ml{E}\rightarrow M$ be a smooth, complex, 
hermitian 
vector bundle of rank $N$ endowed with a flat unitary connection $\nabla$. 
Suppose that $V$ is a Morse-Smale vector field. Denote by $c_k(V)$ the number of critical points of index\footnote{This means that the 
stable manifold of the point is of dimension $k$.} $k$. Then, the following holds, for every $0\leq k\leq n$,
$$\boxed{\sum_{\Lambda:\operatorname{dim}W^s(\Lambda)=k+1}m_{\Lambda}+N\sum_{j=0}^k(-1)^{k-j}c_k(V)\geq \sum_{j=0}^k(-1)^{k-j}b_j(M,\ml{E}),}$$
with equality in the case $k=n$.
\end{coro}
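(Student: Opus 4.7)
\emph{Plan.} The strategy is to invoke Corollary~\ref{c:morse-spectral} and reduce the problem to an explicit dimension count of the resonant eigenspaces $C^j_{V,\nabla}(0)$ for a Morse--Smale flow. By Smale's structure theorem the nonwandering set of such a flow consists of finitely many hyperbolic critical points and finitely many hyperbolic closed orbits, and the spectral analysis developed by the authors in~\cite{DaRi17a, DaRi17b} furnishes a local decomposition of the resonant eigenspaces into contributions attached to each such critical element.

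The key step is to establish the identity
\begin{equation*}
\dim C^j_{V,\nabla}(0) \;=\; N\, c_j(V) \;+\; a_j \;+\; a_{j+1},\qquad a_j \;:=\; \!\!\sum_{\dim W^s(\Lambda)=j}\!\!m_\Lambda.
\end{equation*}
Near a hyperbolic zero $x_0$ of index $j$, the local model recalled in~\cite{DaRi16, DaRi17a} shows that the resonant space at $z=0$ is $N$-dimensional and concentrated in form-degree $j$, spanned by the current of integration on $W^u(x_0)$ tensored with a parallel frame of $\ml{E}_{x_0}$; this produces the $Nc_j(V)$ term. Near a closed orbit $\Lambda$, the $\ml{C}^1$-linearisability assumption allows one to conjugate the flow to the product of a translation along $\Lambda\cong \IS^1$ with its hyperbolic Poincar\'e linearisation, and the resonant spectrum factorises. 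The transverse hyperbolic factor localises the resonant current onto $W^u(\Lambda)$, contributing the form-degree $\dim W^s(\Lambda)-1$, while the longitudinal factor is the twisted de Rham complex of $\IS^1$ with local system $\Delta_\Lambda\cdot M_\ml{E}(\Lambda)$. The unitarity of $M_\ml{E}(\Lambda)$ together with the twisting index $\Delta_\Lambda$ implies that $z=0$ appears in both degrees $0$ and $1$ of that $\IS^1$-complex with multiplicity equal to the multiplicity of $1$ as an eigenvalue of $\Delta_\Lambda\cdot M_\ml{E}(\Lambda)$, i.e.\ to $m_\Lambda$. Combining the longitudinal and transverse degrees yields the contributions $a_j$ (from the $0$-form factor on $\IS^1$ for orbits with $\dim W^s(\Lambda)=j$) and $a_{j+1}$ (from the $1$-form factor for orbits with $\dim W^s(\Lambda)=j+1$).

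Once that identity is in hand, the corollary follows by a direct telescoping. Since $\Lambda\subset W^s(\Lambda)$, every closed orbit has $\dim W^s(\Lambda)\geq 1$, so $a_0=0$, and similarly $a_{n+1}=0$. One computes
\begin{equation*}
\sum_{j=0}^k(-1)^{k-j}(a_j + a_{j+1}) \;=\; (-1)^k a_0 \,+\, a_{k+1} \;=\; \!\!\sum_{\dim W^s(\Lambda)=k+1}\!\!m_\Lambda,
\end{equation*}
which combined with the critical-point term $N\sum_{j=0}^k(-1)^{k-j}c_j(V)$ reproduces the left-hand side of the announced inequality; Corollary~\ref{c:morse-spectral} then yields the inequality itself, and the equality at $k=n$ comes from the corresponding equality in Corollary~\ref{c:morse-spectral} together with the vanishing $a_{n+1}=0$. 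The genuine difficulty is thus concentrated in the first step, namely the identification of the local resonant complex at a closed orbit with the twisted de Rham cohomology of $\IS^1$ with coefficient $\Delta_\Lambda\cdot M_\ml{E}(\Lambda)$. This microlocal normal-form analysis is exactly the content of the authors' Morse--Smale spectral framework in~\cite{DaRi17a, DaRi17b}; granting it, the remainder of the argument is purely combinatorial.
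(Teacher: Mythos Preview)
Your reduction to the dimension formula $\dim C^j_{V,\nabla}(0)=Nc_j(V)+a_j+a_{j+1}$ and the subsequent telescoping is correct and coincides with the paper's argument in paragraph~\ref{ss:morse}. However, there is a genuine gap concerning the hypotheses under which you invoke the spectral machinery.

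The corollary is stated for an \emph{arbitrary} Morse--Smale vector field, with no linearisability assumption. But Corollary~\ref{c:morse-spectral} requires the hypotheses of Theorem~\ref{t:quasiisomorphism}, i.e.\ that $V$ be $\ml{C}^1$-linearisable, and the explicit dimension count you quote from~\cite{DaRi17b} is established only under the stronger $\ml{C}^\infty$-diagonalisability assumption. You slip this in when you write ``the $\ml{C}^1$-linearisability assumption allows one\ldots'', but that assumption is not part of the statement being proved. The paper handles this as follows: it first proves the inequality for $\ml{C}^\infty$-diagonalisable Morse--Smale fields exactly as you do, and then removes the extra hypothesis by a perturbation argument. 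Given a general Morse--Smale $V_0$, one uses the openness of the Morse--Smale condition together with the Sternberg--Chen theorem to find a nearby $\ml{C}^\infty$-diagonalisable Morse--Smale field $V$; for a sufficiently small perturbation the critical elements keep their indices and the closed orbits stay in the same homotopy classes, so the monodromies $M_{\ml{E}}(\Lambda)$ and twisting indices $\Delta_\Lambda$ are unchanged. Hence every term on the left-hand side of the inequality is the same for $V$ and $V_0$, and the inequality for $V_0$ follows. Without this density step your argument only establishes the corollary under an additional hypothesis that the authors explicitly point out is \emph{not} needed.
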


We will give the proof of this result in paragraph~\ref{ss:morse} and the main additional ingredient compared with Corollary~\ref{c:morse-spectral} 
is that we make use of the complete description of the Pollicott-Ruelle resonances for Morse-Smale flows that we gave in~\cite{DaRi17b}.  The fact 
that we require our flat connection to preserve a Hermitian 
structure is not optimal but makes the statement simpler to state. Recall in fact that our description of the fine structure 
of Pollicott-Ruelle resonances holds under weaker assumptions -- see~\cite[Sec.~4]{DaRi17b} for related discussion. We emphasize 
that this assumption is often made in Hodge theory~\cite{BDIP96, Mn14} especially 
for problems related to analytic torsion even if it is not completely necessary there too~\cite{Mu93}. Compared 
to Theorem~\ref{t:quasiisomorphism} and Corollary~\ref{c:morse-spectral}, 
we do not need to make the linearization assumption in that statement.

Again, when $V$ is a gradient flow, we recover the classical Morse inequalities. If 
$\ml{E}=M\times\IC$ is the trivial bundle, then $m_{\Lambda}=1$ for untwisted orbits and $m_{\Lambda}=0$ otherwise. In other words, 
the sum over the closed orbits is exactly the number of untwisted closed orbits whose stable manifold 
has dimension $k+1$. This inequality is slightly stronger than the original result of Smale~\cite{Sm60} whose upper bound involved also the number 
of twisted closed orbits. This stronger version of Morse-Smale inequalities was in fact proved by Franks in~\cite[Ch.~8]{Fr82} by a completely different approach than ours. 
In the case of more general vector bundles, we obtain Morse type inequalities which seem to be new.

\subsection{Koszul homological complexes}

As for the case of the coboundary operator, we can verify from the spectral definition of the spaces $ C_{V,\nabla}^\bullet(0)$ that
the sequence of linear maps~:
\begin{equation}\label{e:ruelle-koszul-complex-intro}0\xrightarrow{\iota_V} C_{V,\nabla}^n(0)\xrightarrow{\iota_V} C_{V,\nabla}^{n-1}(0)\xrightarrow{\iota_V} \ldots\xrightarrow{\iota_V} 
C_{V,\nabla}^0(0)\xrightarrow{\iota_V} 0,\end{equation}
defines a complex. We shall refer to this complex as the \textbf{Morse-Smale-Koszul} complex. In~\cite{DaRi16}, we showed that, for 
Morse-Smale gradient flows and for the trivial vector bundle $M\times \IC$, the Euler characteristic of the manifold is equal to the Euler 
characteristic of the Morse-Koszul complex. The same remains in fact true for Morse-Smale flows~:
\\
\\
\fbox{
\begin{minipage}{0.94\textwidth}  
\begin{theo}\label{t:koszul} 
Let $\ml{E}\rightarrow M$ be a smooth, complex, 
hermitian 
vector bundle of rank $N$ endowed with a flat unitary connection $\nabla$. 
Suppose that $V$ is a vector field which is Morse-Smale and $\ml{C}^{\infty}$-diagonalizable.\\
Then, the Euler characteristic $\chi(M,\ml{E})$ of $\ml{E}$ is equal to the Euler characteristic of the \textbf{Morse-Smale-Koszul} complex 
$(C_{V,\nabla}^\bullet(0),\iota_V)$. 
\end{theo}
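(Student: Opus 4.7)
The plan is to combine two elementary observations. First, the Euler characteristic of any bounded complex of finite-dimensional vector spaces depends only on the dimensions of its terms, not on the differential. Second, the equality case of Corollary \ref{c:morse-spectral} at $k=n$ already identifies $\sum_k (-1)^k \dim C_{V,\nabla}^k(0)$ with $\chi(M,\mathcal{E})$. Together these will give the result as soon as \eqref{e:ruelle-koszul-complex-intro} is shown to be a genuine complex of finite-dimensional vector spaces.

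The substantive step is then to check that $\iota_V$ maps $C_{V,\nabla}^k(0)$ into $C_{V,\nabla}^{k-1}(0)$. Since $(d^\nabla)^2 = 0$ and $\iota_V^2 = 0$, the definition \eqref{e:lie} unfolds to Cartan's magic formula $\mathcal{L}_{V,\nabla}^{(k)} = d^\nabla\iota_V + \iota_V d^\nabla$, which in turn gives the intertwining $\iota_V \circ \mathcal{L}_{V,\nabla}^{(k)} = \mathcal{L}_{V,\nabla}^{(k-1)} \circ \iota_V$ on smooth forms. Writing $\pi_0^{(k)}$ as a contour integral of the resolvent of $\mathcal{L}_{V,\nabla}^{(k)}$ on the anisotropic Sobolev space on which the spectral theory of \cite{FaSj11, DaRi17a} is developed, this intertwining transfers to $\iota_V \circ \pi_0^{(k)} = \pi_0^{(k-1)} \circ \iota_V$, so that $\iota_V$ preserves the resonant spaces; combined with $\iota_V^2 = 0$, this produces a bona fide complex of finite-dimensional vector spaces.

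With the complex established, standard linear algebra yields
$$\chi\bigl(C_{V,\nabla}^\bullet(0),\iota_V\bigr) \;=\; \sum_{k=0}^n (-1)^k \dim C_{V,\nabla}^k(0) \;=\; \chi\bigl(C_{V,\nabla}^\bullet(0),d^\nabla\bigr),$$
and Corollary \ref{c:morse-spectral} at $k=n$ identifies this common value with $\chi(M,\mathcal{E})$. The main obstacle is therefore purely the functional-analytic commutation of $\iota_V$ with $\pi_0^{(k)}$; the $\mathcal{C}^\infty$-diagonalizability hypothesis helps here because, by the explicit description of resonant states in \cite{DaRi17b}, the operator $\mathcal{L}_{V,\nabla}^{(k)}$ then carries no Jordan block at $0$, so $C_{V,\nabla}^k(0) = \ker \mathcal{L}_{V,\nabla}^{(k)}$ and preservation of this kernel by $\iota_V$ follows directly from the intertwining above, without any Jordan-block bookkeeping.
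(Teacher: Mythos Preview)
Your main argument is correct and is genuinely different from the paper's. The paper establishes Theorem~\ref{t:koszul} by first proving the stronger Proposition~\ref{p:koszul}, which computes the Koszul homology in each degree as $Nc_k(V)$ via the explicit basis of resonant states from~\cite{DaRi17b}, and then invokes the equality case of Corollary~\ref{c:morse-smale-inequality}. You instead bypass the degree-by-degree computation entirely, using only that the Euler characteristic of a finite-dimensional complex is the alternating sum of the dimensions of its terms, together with the equality case of Corollary~\ref{c:morse-spectral}. Your route is more elementary and, notably, does not actually use the $\ml{C}^{\infty}$-diagonalizability hypothesis: the commutation $\iota_V\circ\pi_0^{(k)}=\pi_0^{(k-1)}\circ\iota_V$ is already~\eqref{e:d-commute-projector}, which requires only $\ml{C}^1$-linearizability. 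What you lose is the finer information in Proposition~\ref{p:koszul}, which the paper needs later (e.g.\ in the proof of Theorem~\ref{t:fried0}).

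One point in your final paragraph is incorrect, though it does not affect the validity of the proof. The $\ml{C}^{\infty}$-diagonalizability hypothesis does \emph{not} rule out Jordan blocks for $\ml{L}_{V,\nabla}^{(k)}$ at $z_0=0$: as recalled at the end of paragraph~\ref{sss:local-coordinates-closed-orbit} (and used in the proof of Lemma~\ref{l:support}), the basis elements $U^\Lambda$ satisfy $(\ml{L}_{V,\nabla}-z_0)U^\Lambda=0$ only on $W^u(\Lambda)$, and in general only a generalized eigenvalue equation on $M$. So $C_{V,\nabla}^k(0)$ need not coincide with $\ker\ml{L}_{V,\nabla}^{(k)}$. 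Fortunately you do not need this: the commutation of $\iota_V$ with the spectral projector follows directly from the contour-integral formula~\eqref{e:spectral-proj} and the intertwining of $\iota_V$ with the resolvent, with no Jordan-block bookkeeping required. You should simply delete that final sentence.
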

\end{minipage}
}
\\
\\
We will prove this Theorem in section~\ref{s:koszul} and we will in fact show something slightly stronger as we will compute exactly the homology 
of that complex in each degree in terms of the critical points of $V$. The assumption of being $\ml{C}^{\infty}$ diagonalizable is defined in 
appendix~\ref{a:flows} and it holds as soon as certain (generic) nonresonance assumptions are satisfied by the Lyapunov exponents. This 
hypothesis appeared in our previous work~\cite{DaRi17b} in order to simplify the exposition and it could probably be 
removed if we are only interested in the part of the spectrum corresponding to resonances having a small enough real part. 
Yet, this would be at the expense of some extra (probably technical) work that was beyond the scope of~\cite{DaRi17b} and of the present article.

\subsection{Resonances on the imaginary axis and Reidemeister torsion}
\label{ss:mainresultstorsion}

We conclude by relating the Pollicott-Ruelle resonances to the results of Fried in~\cite[p.~44--53]{F87}. Recall that, in this reference, Fried introduced 
the \emph{twisted Fuller measure} of $(V,\nabla)$~:
\begin{def1}[Twisted Fuller measure]
In the notations of the 
previous paragraphs, the \emph{twisted Fuller measure} 
$ \mu_{V,\nabla}$ is a distribution in $\ml{D}'(\IR_+^*)$ defined by
the formula~:
\begin{eqnarray*}\
 \mu_{V,\nabla}(t) &:= &-\frac{N}{t}\sum_{\Lambda\ \text{fixed point}}\frac{\text{det}\left(\text{Id}-d_\Lambda\varphi^t\right)}
 {\left|\text{det}\left(\text{Id}-d_\Lambda\varphi^t\right)\right|}\\
 & +&  \frac{1}{t}
 \sum_{\Lambda\ \text{closed orbit}}\ml{P}_{\Lambda}\sum_{m\geq 1}\frac{\text{det}\left(\text{Id}-P_\Lambda^m\right)}
 {\left|\text{det}\left(\text{Id}-P_\Lambda^m\right)\right|}\text{Tr}\left(M_{\ml{E}}(\Lambda)^m\right)\delta(t-m\ml{P}_{\Lambda}),
 \end{eqnarray*}
where $P_{\Lambda}$ is a linearized Poincar\'e map associated with the closed orbit $\Lambda$ and where $\ml{P}_{\Lambda}$ is the minimal period of 
$\Lambda$. 
\end{def1}
In fact, Fried only defined this quantity when there are no critical points and 
he observed~\cite[p.~49]{F87} that, in this case, 
$t\mu_{V,\nabla}(t)$ coincides with the distributional traces of Guillemin and Sternberg~\cite[Ch.~VI]{GuSt90} -- see also~\cite{DyZw13} for a brief and 
self-contained account. We refer to the recent survey of Gou\"ezel~\cite[Sect.~2.4]{Go15} for a brief account on the role of these 
distributional traces in the study of transfer operators and Ruelle zeta functions~\cite{Ru76}. Here, we adopt a slightly more general definition than Fried's one which 
encompasses the case of critical points. If there are no closed orbits, then our definition also coincides with the distributional traces of~\cite{GuSt90}. In the mixed case, the formula would be slightly more subtle to justify 
from the functional perspective as we would have to deal with vector bundles having some ``jumps''.
Recall 
we denote by $C_{V,\nabla}^k(z_0)$ the range of the spectral
projector $\pi^{(k)}_{z_0}$ on the eigenspace of eigenvalue $z_0$ acting
on $k$--forms.
Our last Theorem relates these twisted Fuller measures 
with the correlation spectrum on the imaginary axis $i\IR$:
\\
\\
\fbox{
\begin{minipage}{0.94\textwidth}
\begin{theo}[Spectral interpretation of Fuller measures]
\label{t:fried0}
Let $\ml{E}\rightarrow M$ be a smooth, complex, 
hermitian 
vector bundle of rank $N$ endowed with a flat unitary connection $\nabla$. 
Suppose that $V$ is a vector field which is Morse-Smale and $\ml{C}^{\infty}$-diagonalizable.

Then, one has
$$\mu_{V,\nabla}(t)=\sum_{k=0}^n(-1)^{n-k+1}\sum_{z_0\in\ml{R}_k(V,\nabla)\cap i\IR}\operatorname{dim} 
\left(C_{V,\nabla}^k(z_0)\cap\operatorname{Ker}(\iota_V)\right)\frac{e^{t z_0}}{t}$$
in the sense of distributions in $\IR_+^*$.
\end{theo}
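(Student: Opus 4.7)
The plan is to compute the distributional trace $\operatorname{Tr}^{\flat}(\Phi_{k}^{-t*}|_{\operatorname{Ker}\iota_{V}})$ in two different ways and then combine the results by taking the alternating sum $\sum_{k}(-1)^{n-k+1}$.

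For the geometric side, since $V$ is invariant under its own flow, $\iota_{V}$ commutes with $\Phi_{k}^{-t*}$, and hence the subbundle $\operatorname{Ker}\iota_{V}\subset\Lambda^{k}T^{*}M\otimes\ml{E}$ is preserved. The Guillemin--Sternberg distributional trace formula, whose applicability in the Morse--Smale setting is established in our previous works~\cite{DaRi17a,DaRi17b}, then expresses this flat trace as a sum of localized contributions: at a critical point $p$ the contraction vanishes and one recovers the full linearization $d\varphi^{-t}|_{p}$, while at a closed orbit $\Lambda$ the restriction to $\operatorname{Ker}\iota_{V}$ replaces the linearized flow by the transverse Poincar\'e map $P_{\Lambda}$, weighted by the monodromy trace $\operatorname{Tr}(M_{\ml{E}}(\Lambda)^{m})$. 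Applying the alternating sum and using the Lefschetz identity $\sum_{k}(-1)^{k}\operatorname{Tr}(\wedge^{k}A)=\det(I-A)$ to each local piece reassembles, after recognising $t\mu_{V,\nabla}(t)$ as Fried's distributional trace~\cite[p.~49]{F87}, precisely the definition of $\mu_{V,\nabla}(t)$.

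For the spectral side, the $\ml{C}^{\infty}$-diagonalizability assumption ensures that $\ml{L}_{V,\nabla}^{(k)}$ acts as the scalar $-z_{0}\operatorname{Id}$ on each generalized eigenspace $C^{k}_{V,\nabla}(z_{0})$, so that as a distribution in $t\in\IR_{+}^{*}$,
\[
\operatorname{Tr}^{\flat}\bigl(\Phi_{k}^{-t*}|_{\operatorname{Ker}\iota_{V}}\bigr) = \sum_{z_{0}\in\ml{R}_{k}(V,\nabla)} e^{tz_{0}}\dim\bigl(C^{k}_{V,\nabla}(z_{0})\cap\operatorname{Ker}\iota_{V}\bigr).
\]
To show that only imaginary-axis resonances contribute to the alternating sum, I would invoke Cartan's formula $\ml{L}_{V,\nabla}^{(k)}=d^{\nabla}\iota_{V}+\iota_{V}d^{\nabla}$: combined with $\ml{L}=-z_{0}\operatorname{Id}$, this exhibits $-d^{\nabla}/z_{0}$ as a contracting homotopy for $\iota_{V}$ on $C^{\bullet}(z_{0})$ whenever $z_{0}\ne 0$, so the Koszul complex $(C^{\bullet}(z_{0}),\iota_{V})$ is acyclic off the origin. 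Combined with the explicit description of the resonant states for Morse--Smale flows from~\cite{DaRi17b}---Lyapunov-exponent lattices at each critical point and Floquet translates at each closed orbit---this acyclicity allows one to establish $\sum_{k}(-1)^{n-k+1}\dim(C^{k}_{V,\nabla}(z_{0})\cap\operatorname{Ker}\iota_{V})=0$ whenever $\operatorname{Re}(z_{0})<0$. The infinite sum over $z_{0}=2\pi im/\mathcal{P}_{\Lambda}$ at each closed orbit is then interpreted by Poisson summation, reproducing the $\delta$-distributions at $t=j\mathcal{P}_{\Lambda}$, while the $z_{0}=0$ contribution at each critical point matches the $-N/t$ times sign fixed-point term of $\mu_{V,\nabla}$.

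The main obstacle is the cancellation of off-imaginary-axis contributions: the Koszul acyclicity alone yields $\sum_{k}(-1)^{k}\dim C^{k}(z_{0})=0$, which does not directly give the required identity for $\dim(C^{k}(z_{0})\cap\operatorname{Ker}\iota_{V})$. To close this gap one must exploit the fine bigraded structure of the resonant states established in~\cite{DaRi17b}---in particular how $\iota_{V}$ interacts with the Lyapunov and Floquet filtrations on each $C^{k}(z_{0})$---to exhibit explicit cancellations within every off-axis resonance. A secondary technical point is the rigorous justification of the distributional convergence of the spectral sum on $\IR_{+}^{*}$ (especially as $t\to 0^{+}$), which should follow from the microlocal resolvent estimates developed in the papers cited above.
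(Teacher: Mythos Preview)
Your approach via flat traces is substantially different from the paper's proof, which is much more direct and sidesteps all of the obstacles you identify.

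The paper never computes a distributional trace. It starts from the explicit definition of $\mu_{V,\nabla}(t)$, expands the closed-orbit part in terms of the eigenvalues $e^{2i\pi\gamma_j^\Lambda}$ of the monodromy matrices, and obtains
\[
t\mu_{V,\nabla}(t)=-N\sum_{\Lambda\ \text{fixed}}(-1)^{\dim W^u(\Lambda)}
-\sum_{\Lambda,j}\ml{P}_{\Lambda}(-1)^{\dim W^u(\Lambda)}\sum_{m\geq 1}e^{2i\pi m(\gamma_j^{\Lambda}+\varepsilon_{\Lambda})}\delta(t-m\ml{P}_{\Lambda}).
\]
Poisson summation then converts the sum of $\delta$-masses on $\ml{P}_\Lambda\IZ$ into the Fourier series $\sum_{l\in\IZ}e^{2i\pi t(l+\gamma_j^\Lambda+\varepsilon_\Lambda)/\ml{P}_\Lambda}$. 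The exponents appearing are \emph{exactly} the imaginary-axis resonances as enumerated in paragraph~\ref{sss:local-coordinates-closed-orbit}, and the identification of the multiplicities with $\dim(C^k_{V,\nabla}(z_0)\cap\operatorname{Ker}\iota_V)$ is then read off directly from the computation of the Koszul kernels in paragraphs~\ref{ss:proof-koszul} and~\ref{ss:koszul-nonzero}. Off-axis resonances never enter, so there is nothing to cancel: the paper goes from the geometric definition straight to the imaginary-axis spectral sum, rather than first producing a sum over all resonances and then trimming it.

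Two further issues with your outline. First, the $\ml{C}^{\infty}$-diagonalizability hypothesis refers to the linearization of the flow near each critical element, not to the absence of Jordan blocks for $\ml{L}_{V,\nabla}^{(k)}$; the paper explicitly notes that the basic resonant states satisfy $(\ml{L}_{V,\nabla}+z_0)U^\Lambda=0$ only on $W^u(\Lambda)$ and are in general merely \emph{generalized} eigenvectors on $M$, so your scalar-action claim is incorrect. Second, the paper explicitly flags that interpreting $t\mu_{V,\nabla}(t)$ as a Guillemin trace is ``slightly more subtle to justify'' in the mixed case with both critical points and closed orbits, because of bundle discontinuities; your geometric side would require additional work even before reaching the cancellation problem.
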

\end{minipage}
}
\\
\\
We will prove this Theorem in section~\ref{s:fried}. 
In the particular case where $V$ is non singular, the Fuller measure can be related
to the Guillemin trace as follows.
We denote by $V^\perp\subset T^*M$ the vector bundle $\ker(\iota_V)$ of invariant 
$1$-forms by the flow and by 
$\Lambda^kV^\perp$ the bundle of invariant $k$-forms. 
The Guillemin trace on twisted invariant forms of 
degree $k$ is denoted by $TR^\flat_{\Lambda^kV^\perp\otimes\cE}$.
Then, following Fried~\cite[p.~49--50 in particular equation VII]{F87}, a direct application of 
the Guillemin trace formula yields an identity relating twisted Fuller measures, Guillemin traces and
imaginary resonances as follows~:
\begin{eqnarray*}
 t\mu_{V,\nabla}(t) &=&\sum_{k=0}^n(-1)^{n-k+1} TR^\flat_{\Lambda^kV^\perp\otimes\cE}\left(e^{-t\mathcal{L}_{V,\nabla}} \right)\\
&=& \sum_{k=0}^n(-1)^{n-k+1}\sum_{z_0\in\ml{R}_k(V,\nabla)\cap i\IR}\operatorname{dim} 
\left(C_{V,\nabla}^k(z_0)\cap\operatorname{Ker}(\iota_V)\right)e^{tz_0}
\end{eqnarray*}
in the sense of distributions in $\IR_+^*$.

Let us now briefly recall the topological content of these twisted 
Fuller measures (and thus of the Pollicott-Ruelle spectrum). 
Fried observed that the twisted Fuller measure shares a lot of 
properties with some topological invariants such as the Euler 
characteristic or the Reidemeister torsion~\cite[p.~46-49]{F87}. 
In fact, we can observe that 
the coefficients defining $\mu_{V,\nabla}(t)$ are of purely topological nature even if the support of the measure depends on the 
parametrization of the flow and hence is not topological. In order to get rid of these nontopological informations, Fried evaluated 
the ``total mass'' of these Fuller measures and showed that in certain cases this defines indeed a topological invariant. For that 
purpose, one can introduce the following zeta function~:
$$
\zeta_{V,\nabla}^{\flat}(s,z):=\frac{1}{\Gamma(s)}\int_{0}^{+\infty} e^{-tz}\mu_{V,\nabla}(t)t^{s}dt
$$
and following Fried~\cite[p.~51]{F87}, we define the corresponding \textbf{torsion function} 
$$Z_{V,\nabla}(z)=\exp\left(-\frac{d}{ds}\zeta^\flat_{V,\nabla}(s,z)|_{s=0}\right)$$
as the exponential of $-\partial_{s}\zeta^{\flat}_{V,\nabla}(0,z)$.
We emphasize that Fried directly used the Laplace transform to define his twisted zeta function as he only considered nonsingular vector fields. 
Here, due to the fact that we allow critical points, we have to make this extra regularization involving a Mellin transform. In our slightly 
more general framework, the terminology ``torsion function'' may sound a little bit abusive as we do not recover the Reidemeister torsion in general. Computing 
the derivative of $\zeta_{V,\nabla}^{\flat}(s,z)$ w.r.t. $s$ at $s=0$, 
one can verify (see paragraph~\ref{ss:hodge-torsion}) that, for $\text{Re}(z)$ large enough,
$$\partial_{s}\zeta^{\flat}_{V,\nabla}(0,z)=\chi(M,\ml{E})\ln z+\sum_{\Lambda\ \text{closed orbit}}(-1)^{\text{dim}\ W^u(\Lambda)}
\ln \text{det}\left(\text{Id}-e^{-\ml{P}_{\Lambda }z}\Delta_{\Lambda}M_{\ml{E}}(\Lambda)\right).$$
%In that manner, in the case $V$ is \textbf{nonsingular} we can think informally of the derivative
%$\partial_{s}\zeta^{\flat}_{V,\nabla}(0,z)$ as the regularized version of the pairing of the twisted Fuller measure 
%$\langle \mu_{V,\nabla},\frac{e^{-tz}}{t} \rangle$ against
%the test function $\frac{e^{-tz}}{t}$ and the \textbf{torsion function} coincides with some alternate product
%of flat determinants as~:
%$$\prod_{k=0}^n\text{det}_{\Lambda^kV^\perp\otimes\cE}^\flat\left( z\right)^{(-1)^{n+1-k}}:=\prod_{k=0}^n
%\exp\left(\int_{0^+}^\infty \frac{e^{-zt}}{t}TR_{\Lambda^kV^\perp\otimes \cE}^\flat\left(e^{-t\mathcal{L}_{V,\nabla}} \right)dt \right)^{(-1)^{n+1-k}}$$
%where $\det^\flat$ is defined as in~\cite[2.4]{Go15} and~\cite[Chapter 6]{Ba16}.
Therefore, the torsion function $Z_{V,\nabla}(z)$ satisfies the
identity~:
\begin{eqnarray*}
\boxed{ Z_{V,\nabla}(z)=z^{-\chi(M,\ml{E})} \prod_{\Lambda\ \text{closed orbit}}
\text{det}\left(\text{Id}-e^{-\ml{P}_{\Lambda }z}\Delta_{\Lambda}M_{\ml{E}}(\Lambda)\right)^{-(-1)^{\text{dim}\ W^u(\Lambda)}}}
\end{eqnarray*}
for $\text{Re}(z)$ large enough. The right hand side of the above equality is a weighted Ruelle zeta function~\cite[equation (13) p.~19]{Go15},
the weight being given by the \textbf{monodromy of the flat connection}. 
%If $m_{\Lambda}=0$ for every closed orbit, 
%we can remark that formally the leading term as $z\rightarrow 0^+$ is given by the Euler characteristic. Moreover, 
If the vector field is nonsingular 
which means the Euler characteristic term vanishes
and if $m_{\Lambda}=0$ for every closed orbit\footnote{In particular, $(\ml{E},\nabla)$ is acyclic.}, then $Z_{V,\nabla}$ converges as 
$z\rightarrow 0^+$ and Fried showed that the modulus of the limit is equal to the Reidemeister torsion~\cite[Sect.~3]{F87}.

Following the works of Ray--Singer on analytic torsion~\cite{RaySi71}, we now introduce the following:
\begin{def1}[Spectral zeta determinant of the nonzero resonances on the imaginary axis]
We define the \emph{spectral zeta function}:
$$
\zeta_{RS}(s,z):=\sum_{k=0}^n(-1)^{n-k}k\sum_{z_0\in\ml{R}_k(V,\nabla)\cap i\IR^*}\text{dim} \left(C_{V,\nabla}^k(z_0)\right)
\left(z-z_0\right)^{-s}
$$
and the corresponding spectral zeta determinant\footnote{This regularizes the infinite product
$ \prod_{z_0\in \ml{R}_k(V,\nabla)\cap i\IR^*} \left(z-z_0\right)^{(-1)^kk\text{dim} \left(C_{V,\nabla}^k(\lambda)\right)}  $
of nonzero resonances on the imaginary axis.} $$\boxed{Z_{RS}(z):=\exp\left(-\frac{d}{ds}\zeta_{RS}(s,z)|_{s=0}\right)}.$$

\end{def1}
Then, for $\text{Re}(z)$ large enough, as a consequence of Theorem~\ref{t:fried0} and of standard arguments from Hodge theory 
(see paragraph~\ref{ss:hodge-torsion}), 
both the flat zeta function $\zeta_{V,\nabla}^{\flat}(s,z)$ and 
the
\emph{spectral zeta function}
$ \zeta_{RS}(s,z)$ admit analytic continuations in $s\in \mathbb{C}$
and are related by the equation~:
\begin{eqnarray*}
\boxed{\zeta_{V,\nabla}^{\flat}(s,z)=-\left(\chi(M,\cE)+\sum_{\Lambda\ \text{closed orbit}}(-1)^{\text{dim}\ W^u(\Lambda)}m_{\Lambda}\right)z^{-s}+\zeta_{RS}(s,z).}
\end{eqnarray*}

Hence, as a direct corollary of our Theorem and of Fried's Theorem, we have:
\begin{coro}\label{c:Spectralzetaruelle}
Suppose that the assumptions of Theorem~\ref{t:fried0} are satisfied. Then, one has:
\begin{eqnarray*}
\boxed{Z_{RS}(z)=\prod_{\Lambda\ \text{closed orbit}}
\left(z^{-m_{\Lambda}}\operatorname{det}\left(\operatorname{Id}-e^{-\ml{P}_{\Lambda }z}\Delta_{\Lambda}M_{\ml{E}}(\Lambda)\right)\right)^{-(-1)^{\operatorname{dim}\ W^u(\Lambda)}}.}
\end{eqnarray*}

In particular if $V$ is non singular and $m_\Lambda=0$ for every periodic orbit $\Lambda$ then $\vert Z_{RS}(0) \vert$ coincides
with the \textbf{Reidemeister torsion}. 
\end{coro}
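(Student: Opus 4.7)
The plan is to exponentiate the boxed identity
$$\zeta_{V,\nabla}^{\flat}(s,z)=-\left(\chi(M,\cE)+\sum_{\Lambda\ \text{closed orbit}}(-1)^{\dim W^u(\Lambda)}m_{\Lambda}\right)z^{-s}+\zeta_{RS}(s,z)$$
that sits immediately above the corollary, and then combine it with the explicit factorization of $Z_{V,\nabla}(z)$ derived in paragraph~\ref{ss:hodge-torsion}. Since Theorem~\ref{t:fried0} has already done the real work of identifying the Fuller measure with the spectrum on $i\IR$, everything else is accounting.

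Set $C:=\chi(M,\cE)+\sum_{\Lambda}(-1)^{\dim W^u(\Lambda)}m_{\Lambda}$. First, I would differentiate the identity above with respect to $s$ at $s=0$, using the elementary relation $\partial_s z^{-s}|_{s=0}=-\ln z$, and get
$$-\partial_s\zeta_{V,\nabla}^{\flat}(s,z)\big|_{s=0}=-C\ln z\;-\;\partial_s\zeta_{RS}(s,z)\big|_{s=0}.$$
Exponentiating both sides yields the multiplicative identity $Z_{V,\nabla}(z)=z^{-C}Z_{RS}(z)$, valid for $\Re(z)$ large and extended to a meromorphic identity in $z$ by unique continuation. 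Solving for $Z_{RS}(z)$ and substituting the boxed product formula for $Z_{V,\nabla}(z)$, the $z^{-\chi(M,\cE)}$ factor cancels the corresponding part of $z^{C}$ and the surviving power $z^{\sum_{\Lambda}(-1)^{\dim W^u(\Lambda)}m_{\Lambda}}$ can be absorbed into the closed-orbit product as $z^{-m_{\Lambda}}$ raised to the sign $-(-1)^{\dim W^u(\Lambda)}$. This produces precisely the claimed formula.

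For the final assertion, I would argue as follows. When $V$ is nonsingular, the Poincar\'e--Hopf theorem forces $\chi(M)=0$, hence $\chi(M,\cE)=N\chi(M)=0$. When moreover $m_{\Lambda}=0$ for every closed orbit, each factor $\det(\operatorname{Id}-\Delta_{\Lambda}M_{\cE}(\Lambda))$ is nonzero, so every $z^{-m_{\Lambda}}$ disappears and $Z_{RS}(z)$ extends continuously to $z=0$ with $Z_{RS}(0)=Z_{V,\nabla}(0)$. Fried's theorem \cite[Sect.~3]{F87} then identifies $|Z_{V,\nabla}(0)|$ with the Reidemeister torsion, which finishes the proof.

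The only genuinely delicate point is the analytic continuation of $\zeta_{V,\nabla}^{\flat}(s,z)$ and $\zeta_{RS}(s,z)$ to $s=0$ that is implicitly used when differentiating; this is exactly the content of the boxed identity preceding the statement, itself a consequence of Theorem~\ref{t:fried0} together with the standard Hodge-theoretic zeta-regularization machinery of Ray--Singer. Once that identity is granted, the rest of the argument is a short manipulation of prefactors.
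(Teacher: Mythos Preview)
Your proposal is correct and follows essentially the same route as the paper: the corollary is stated as a direct consequence of Theorem~\ref{t:fried0}, the boxed identity $\zeta_{V,\nabla}^{\flat}(s,z)=-Cz^{-s}+\zeta_{RS}(s,z)$, the explicit product formula for $Z_{V,\nabla}(z)$ derived in paragraph~\ref{ss:hodge-torsion}, and Fried's theorem for the identification with Reidemeister torsion. Your bookkeeping of the prefactors $z^{-\chi(M,\cE)}$ and $z^{\sum_{\Lambda}(-1)^{\dim W^u(\Lambda)}m_{\Lambda}}$ is accurate, and your justification that $\chi(M,\cE)=N\chi(M)=0$ via Poincar\'e--Hopf in the nonsingular case is a clean way to dispatch the Euler-characteristic term (the paper simply asserts this vanishing).
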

Informally, the above discussion means that we can recover the Euler characteristic and the Reidemeister torsion 
(under proper assumptions) from the Pollicott-Ruelle spectrum lying on the imaginary axis.
%The following corollary shows that the Pollicott--Ruelle spectrum on the imaginary axis corresponds exactly to the poles and zeros of Fried's torsion function:
These different results illustrate that the Pollicott-Ruelle spectrum on the imaginary axis has a nice topological interpretation in the case of Morse-Smale flows. 
More specifically, the topological content seems to be contained in the first band of resonances in the terminology of 
Faure and Tsujii~\cite{FaTs13, FaTs17}. Recall that, for Anosov contact flows, they showed that 
the Pollicott-Ruelle spectrum exhibits a band structure. We proved in~\cite{DaRi17b} that the same band structure remains true 
in the case of Morse-Smale flows and that the first band of resonances is in that case contained inside the imaginary axis. 
Hence, the main observation of these results is that, at least in certain acyclic cases, 
\textbf{we can keep track of topological invariants inside the first band of the correlation spectrum and not only inside the kernel} 
which is here reduced to $\{0\}$.

In section~\ref{s:torsion}, we will be even more precise on the topological features of the first band of resonant states for 
Morse-Smale flows. In fact, for every 
$z_0\in i\IR^*$, $(C^\bullet_{V,\nabla}(z_0),d^{\nabla})$ 
defines an acyclic cohomological complex of finite dimensional spaces -- see paragraph~\ref{ss:chain-homotopy}. Thus, once we have 
specified a preferred basis of that space, we can compute the torsion of that complex~\cite{F87, Mn14} which is a certain determinant 
related to the coboundary operator. In the case of Morse-Smale flows, such a preferred basis of Pollicott-Ruelle resonant states can be naturally 
introduced following our previous work~\cite{DaRi17b} -- see paragraphs~\ref{sss:local-coordinates-critical-point} 
and~\ref{sss:local-coordinates-closed-orbit} for a brief reminder. Then, we will introduce the infinite dimensional space
$$C^{\bullet}_{V,\nabla}(i\IR^*)=\bigoplus_{z_0\in \ml{R}_{\bullet}(V,\nabla)\cap i\IR^*}C^\bullet_{V,\nabla}(z_0),$$
which still defines an acyclic complex. We will define a notion of regularized torsion of this infinite dimensional ``complex'' 
by regularizing the infinite product of torsions of 
every individual finite dimensional complex of Pollicott-Ruelle resonant states 
associated to the 
spectrum on the imaginary axis. 
In paragraph~\ref{s:fried}, we will see that this 
regularized torsion is related to the behaviour of the torsion function $Z_{V,\nabla}(z)$ at $z=0$ and to the spectral zeta function $\zeta_{RS}$. 
Thus, it is also linked to the Reidemeister torsion whenever 
$V$ has no critical points and where $m_{\Lambda}$ is equal to $0$ for every closed orbit.

\subsection{Conventions}
All along the article, $M$ will denote a smooth, compact, oriented, boundaryless manifold of dimension $n$. We shall denote by $V$ a smooth vector 
field on $M$ and by $\ml{E}\rightarrow M$ a complex vector bundle of dimension $N$ which is endowed with a flat connection $\nabla$. Except mention 
of the contrary, $\ml{E}$ is not necessarly endowed with a Hermitian structure compatible with $\nabla$.

\subsection*{Acknowledgements} In the spring 2016, Viviane Baladi asked us if there were some relations between our work~\cite{DaRi16} 
and the works of Fried on Reidemeister torsion: we warmly thank her for pointing to us this series of works which motivated part of the results 
presented here. We also thank Fr\'ed\'eric Faure for many explanations on his works with Johannes Sj\"ostrand and Masato Tsujii. We also acknowledge 
useful discussions related to this article and its companion articles~\cite{DaRi17a, DaRi17b} with Livio Flaminio, 
Colin Guillarmou, Benoit Merlet, Fr\'ed\'eric Naud and Patrick Popescu Pampu. The second author is partially supported 
by the Agence Nationale de la Recherche through the Labex CEMPI (ANR-11-LABX-0007-01) and the 
ANR project GERASIC (ANR-13-BS01-0007-01).

\section{Spectral theory of Anosov and Morse-Smale vector fields}\label{s:spectral}

We begin with a brief account on the existence of Pollicott-Ruelle resonant states following the microlocal approach of 
Faure and Sj\"ostrand~\cite{FaSj11} -- see also~\cite{DyZw13, DaRi16} for extensions of these results. In the Anosov case, another approach would be to 
use the earlier results of Butterley and Liverani via spaces of anisotropic H\"older 
distributions~\cite{BuLi07}. We could probably also proceed via coherent states like in the works of Faure and Tsujii~\cite{Ts12, FaTs17}. 
We also refer to~\cite{Ba16} for a recent account by Baladi concerning the related case of hyperbolic diffeomorphisms. 

\subsection{Escape function}

The key ingredient in Faure-Sj\"ostrand's analysis is the construction of a so-called \textbf{escape function}, or equivalently a Lyapunov function for the 
Hamiltonian flow on $T^*M$ induced by
$$\forall(x,\xi)\in T^*M,\quad H_V(x,\xi):=\xi(V(x)).$$
Recall that the corresponding Hamiltonian flow is given by
$$\forall t\in\IR,\quad\Phi_V^t(x,\xi):=\left(\varphi^t(x,\xi), \left(d\varphi^t(x)^T\right)^{-1}\xi\right),$$
where $\varphi^t$ is the flow induced by the vector field $V$ on $M$. We shall denote by $X_V(x,\xi)$ the Hamiltonian vector field induced by $\Phi_V^t$.
Using the terminology of~\cite{FaSj11}, an \textbf{escape function} for the flow $\Phi_V^t$ on 
$T^*M$ is a function of the form
$$G(x,\xi):=m(x,\xi)\log\sqrt{1+f(x,\xi)^2},$$
meeting the following requirements:
%\\
%\\
%\fbox{
%\begin{minipage}{0.94\textwidth}
\begin{itemize}
 \item \textbf{$m$ is a symbol of order $0$}. This means $m(x,\xi)$ belongs to $\ml{C}^{\infty}(T^*M)$ and, for all multi-indices $(\alpha,\beta)\in\IN^{2n}$, 
 $$\partial^{\alpha}_x\partial_{\xi}^{\beta} m(x,\xi)=\ml{O}\left((1+\|\xi\|_x^2)^{-\frac{|\beta|}{2}}\right).$$
 \item \textbf{$f$ has the correct homogeneity in $\xi$}. $f(x,\xi)$ belongs to $\ml{C}^{\infty}(T^*M)$, and, for $\|\xi\|_x\geq 1$, $f(x,\xi)> 0$ is positively homegeneous of degree $1$.
 \item \textbf{The symbol $H_V$ is micro elliptic in some conical open set $\ml{N}_0$}. There exist a conical open set $\ml{N}_0$ and a constant $C>0$ such that, for every $(x,\xi)\in \ml{N}_0$ satisfying $\|\xi\|_x\geq C$, one has 
 $$|H_V(x,\xi)|\geq \frac{1}{C}(1+\|\xi\|_x),$$
 and $f(x,\xi)=|H_V(x,\xi)|$. Moreover, outside some slightly bigger conical open set $(\ml{N}_0\subset)\ml{N}_1$, one has $f(x,\xi)=\|\xi\|_x$ for $\|\xi\|_x\geq 1$.
 \item \textbf{Outside $\ml{N}_0$, $G$ has uniform decrease along the Hamiltonian flow.} There exist $R\geq 1$ and $c>0$ such that, for every $(x,\xi)\in T^*M$ satisfying $\|\xi\|_x\geq R$,
 $$ X_V.G(x,\xi)\leq 0\ \text{and}\ (x,\xi)\notin \ml{N}_0\Longrightarrow X_V.G(x,\xi)\leq -c.$$
\end{itemize}
%\end{minipage}
%}
%\\
%\\
In other words, the escape function is strictly decreasing except in the directions where the Hamiltonian $H_V$ is elliptic. Observe that $\ml{N}_0$ and $\ml{N}_1$ 
can be even chosen to be empty -- this was for instance the case for gradient flows~\cite{DaRi16}. Note that compared with Lemma~1.2 in~\cite{FaSj11}, we do not 
require anything on the value of the \textbf{order function} $m$ in certain directions of phase space. 
The reason for this is that the above axioms are the key ingredients to build a convenient spectral theory for the vector field $V$. The other requirements 
on $m$ are additional informations on the structure Sobolev space which are specific to the Anosov framework. According to~\cite{FaSj11, DaRi17a}, examples of vector fields possessing such an escape function are 
given by Anosov or Morse-Smale vector fields which are $\ml{C}^1$-linearizable. Motivated by the main results from~\cite{FaSj11} and in order to alleviate the 
notations in the upcoming sections, we shall say that a flow $\varphi^t$ possessing such an escape function is \textbf{microlocally tame}. We shall make this assumption in the following and we 
keep in mind that the only known examples of such flows are the ones mentionned above.

\subsection{Anisotropic Sobolev spaces}
Let us now briefly recall the construction of Faure-Sj\"ostrand based on the existence of such a function~\cite{FaSj11}. Note that, in this reference, the authors only treated the case 
of the trivial bundle $M\times\IC$. The extension of this microlocal approach to more general bundles was made by Dyatlov and Zworski in~\cite{DyZw13} by some slightly different point of view. 
Fix $s>0$ some large parameter 
(corresponding to the Sobolev regularity we shall require) and $0\leq k\leq n$ (the degree of the forms). We also fix some Riemannian metric $g$ on $M$ and some hermitian 
structure $\la.,\ra_{\ml{E}}$ on $\ml{E}$, none of them being a priori related to $V$ and $\nabla$. We can fix the inner product $\la,\ra_{g^*}^{(k)}$ on $\Lambda^k(T^*M)$ 
which is induced by the metric $g$ on $M$. Then, the
the Hodge star operator is the unique isomorphism 
$\star_k :\Lambda^k(T^*M)\rightarrow \Lambda^{n-k}(T^*M)$ such that, for every $\psi_1$ in $\Omega^k(M)\simeq\Omega^k(M,\IC)$ and $\psi_2$ in $\Omega^{n-k}(M)\simeq\Omega^{n-k}(M,\IC)$, 
$$\int_M\psi_1\wedge\psi_2=\int_M\la \psi_1,\star_k^{-1}\psi_2\ra_{g^*(x)}^{(k)}\omega_g(x),$$
where $\omega_g$ is the volume form induced by the Riemannian metric on $M$. This induces a map $\star_k:\Omega^k(M,\ml{E})\rightarrow \Omega^{n-k}(M,\ml{E})$ which acts trivially on the 
$\ml{E}$-coefficients. Using the metric $g_{\ml{E}}$ on $\ml{E}$, we can also introduce the following pairing, for every $0\leq k,l\leq n$,
$$\la.\wedge.\ra_{\ml{E}}:\Omega^k(M,\ml{E})\otimes_{\ml{C}^{\infty}(M)}\Omega^l(M,\ml{E})\rightarrow \Omega^{k+l}(M).$$
Combining these, we can define the positive definite Hodge inner product on $\Omega^{k}(M,\ml{E})$ as
$$(\psi_1,\psi_2)\in\Omega^k(M,\ml{E})\times\Omega^k(M,\ml{E})\mapsto\int_M\la \psi_1\wedge\star_k(\psi_2)\ra_{\ml{E}}.$$
In particular, we can define $L^2(M,\Lambda^k(T^*M)\otimes\ml{E})$ as the completion of $\Omega^k(M,\ml{E})$ for this scalar product.

Set now 
\begin{equation}\label{A_mweightsobo}
A_s^{(k)}(x,\xi):=\exp \left((s m(x,\xi)+n-k)\log \left(1+f(x,\xi)^2)^{\frac{1}{2}}\right)\right),
\end{equation}
where $G(x,\xi)=m(x,\xi)\log \left(1+f(x,\xi)^2)^{\frac{1}{2}}\right)$ is the escape function. We can define $$\mathbf{A}_s^{(k)}(x,\xi):=A_s^{(k)}(x,\xi)\textbf{Id}_{\Lambda^k(T^*M)\otimes\ml{E}}$$ 
belonging to $\text{Hom}(\Lambda^k(T^*M)\otimes\ml{E})$ 
and introduce an anisotropic Sobolev space of currents by setting
$$\ml{H}^{sm+n-k}_k(M,\ml{E})=\Op(\mathbf{A}_s^{(k)})^{-1}L^2(M,\Lambda^k(T^*M)\otimes\ml{E}),$$
where $\Op(\mathbf{A}_s^{(k)})$ is a (essentially selfadjoint) pseudodifferential operator with principal symbol $\mathbf{A}_s^{(k)}$. 

\begin{rema} Note that this requires to deal with symbols of variable order $m(x,\xi)$ 
whose symbolic calculus was described in Appendix~A of~\cite{FaRoSj08}. This can be done as the symbol $m(x,\xi)$ 
belongs to the standard class of symbols $S^0(T^*M)$. We also refer to~\cite[App.~C.1]{DyZw13} for a brief reminder of pseudodifferential operators 
with values in vector bundles. In particular, adapting the proof of~\cite[Cor.~4]{FaRoSj08} to the vector bundle valued framework, one can verify that $\mathbf{A}_s^{(k)}$ is an elliptic symbol, and 
thus $\Op(\mathbf{A}_s^{(k)})$ can be chosen to be invertible. 
\end{rema}

We observe from the composition rule for pseudodifferential operators that
\begin{equation}\label{e:exterior-derivative-sobolev}
 \forall 0\leq k\leq n-1,\quad d^{\nabla}:\ml{H}^{sm+n-k}_k(M,\ml{E})\rightarrow \ml{H}^{sm+n-(k+1)}_{k+1}(M,\ml{E}).
\end{equation}
Mimicking the proofs of~\cite{FaRoSj08}, we can also deduce some properties of these spaces of currents. First of all, they are endowed with a Hilbert 
structure inherited from the $L^2$-structure on $M$. The space $$\ml{H}^{sm+n-k}_k(M,\ml{E})^{\prime}=\Op(\mathbf{A}_s^{(k)})L^2(M,\Lambda^k(T^*M)\otimes\ml{E})$$
is the topological dual of $\ml{H}^m_k(M,\ml{E})$ which is in fact reflexive. The following injections holds and they are continuous:
$$\Omega^k(M,\ml{E}) \subset \ml{H}_k^{m}(M,\ml{E}) \subset \mathcal{D}^{\prime,k}(M,\ml{E}),$$ 
The Riemannian metric on $\ml{E}$ allows to define a canonical isomorphism $\tau_{\ml{E}}:\ml{E}\rightarrow\ml{E}'$ by setting, for every $\psi$ in $\ml{E}$, $\tau_{\ml{E}}(\psi)=\la\psi,.\ra_{\ml{E}}$. 
Then, combined with the Hodge star map, it induces an isomorphism from 
$\ml{H}^{m}_k(M,\ml{E})^{\prime}$ to $\ml{H}^{-m}_{n-k}(M,\ml{E}')$, 
whose Hilbert structure is given by the scalar product
$$(\psi_1,\psi_2)\in\ml{H}^{-m}_{n-k}(M,\ml{E}')^2\mapsto \la \star_k^{-1}\tau_{\ml{E}}^{-1}(\psi_1),\star_k^{-1}\tau_{\ml{E}}^{-1}(\psi_2)\ra_{\ml{H}_k^m(M,\ml{E})'}.$$
Thus, the topological dual of $\ml{H}_k^m(M,\ml{E})$ can be identified with $\ml{H}^{-m}_{n-k}(M,\ml{E}')$, where, for every $\psi_1$ in $\Omega^k(M,\ml{E})$ 
and $\psi_2$ in $\Omega^{n-k}(M,\ml{E}')$, one has the following duality relation:
\begin{eqnarray*}\la\psi_2,\psi_1\ra_{\ml{H}_{n-k}^{-m}(M,\ml{E}')\times\ml{H}_k^m(M,\ml{E})} & = & \int_{M}\psi_2\wedge \psi_1\\
& = &\la \Op(\mathbf{A}_m^{(k)})^{-1}\star_k^{-1}\tau_{\ml{E}}^{-1}(\overline{\psi_2}),\Op(\mathbf{A}_m^{(k)})\psi_1\ra_{L^2(M,\Lambda^k(T^*M)\otimes\ml{E})}\\
& = &\la\star_k^{-1}\tau_{\ml{E}}^{-1}(\psi_2),\psi_1\ra_{\ml{H}_k^m(M,\ml{E})\times\ml{H}_{k}^m(M,\ml{E})^{\prime}}.\end{eqnarray*}

\subsection{Pollicott-Ruelle resonances} Now that we have settled the functional framework for our microlocally tame vector field, one can establish 
the existence of Pollicott-Ruelle resonances. More precisely, one can show that, for every $T_0>0$, there exists some $s>0$ large enough such that, for every $0\leq k\leq n$,
$$-\ml{L}_{V,\nabla}^{(k)}:\ml{H}^{sm+n-k}_k(M,\ml{E})\rightarrow\ml{H}^{sm+n-k}_k(M,\ml{E})$$
has a \textbf{discrete spectrum} on the half plane $\text{Im}(z)>-T_0$, consisting of eigenvalues of finite algebraic multiplicity.
We refer to~\cite[Th.~1.4]{FaSj11} for a more precise statement. This result essentially follows from the fact that Faure and Sj\"ostrand proved that 
$(\ml{L}_{V,\nabla}+z)$ is a Fredholm depending analytically on $z$. The eigenvalues are the so-called \textbf{Pollicott-Ruelle resonances} and 
the corresponding generalized eigenmodes are called \textbf{resonant states.} These objects correspond exactly to the poles and the residues of the 
function $\hat{C}_{\psi_1,\psi_2}(z)$ defined in~\eqref{e:correlation-laplace}.

\begin{rema}
In~\cite[Sect.~3]{FaSj11}, the authors only treated the case $k=0$ and $\ml{E}=M\times \IC$. Yet, their proof can be adapted almost
verbatim to our framework except that we have to deal with pseudodifferential operators with values in $\Lambda^k(T^*M)\otimes\ml{E}$ -- 
see~\cite[Part~I.3]{BDIP96}~\cite[App.~C]{DyZw13} for a brief review. As was already observed 
in~\cite{DyZw13}, the main point to extend directly Faure-Sj\"ostrand's proof to more general bundles is that 
the (pseudodifferential) operators under consideration have a \emph{diagonal symbol}. In fact, given any local basis $(e_j)_{j=1,\ldots J_k}$ 
of $\Lambda^k(T^*M)\otimes\ml{E}$ and any family $(u_j)_{j=1,\ldots J_k}$ of smooth functions $\ml{C}^{\infty}(M)$, one has
$$\ml{L}_{V,\nabla}^{(k)}\left(\sum_{j=1}^{J_k}u_je_j\right)=\sum_{j=1}^{J_k}\ml{L}_{V}(u_j)e_j+\sum_{j=1}^{J_k}\ml{L}_{V,\nabla}^{(k)}(e_j)u_j,$$
where the second part of the sum in the right-hand side is a lower order term (of order $0$). In other words, the principal symbol of 
$\ml{L}_{V,\nabla}^{(k)}$ is $\xi(V(x))\mathbf{Id}_{\Lambda^k(T^*M)\otimes\ml{E}}$. This diagonal form allows to adapt the 
proofs of~\cite{FaSj11} to this vector bundle framework.
\end{rema}

We conclude this preliminary paragraph with some properties of this spectrum:
\begin{itemize}
 \item Faure and Sj\"ostrand implicitely show~\cite[Lemma~3.3]{FaSj11} that, 
for every $z$ in $\mathbb{C}$ satisfying $\text{Im} z>C_0$ (for some $C_0>0$ large enough), one has
\begin{equation}\label{e:norm-resolvent}
 \left\|\left(\ml{L}_{V,\nabla}^{(k)}+z\right)^{-1}\right\|_{\ml{H}^m_k(M,\ml{E})\rightarrow\ml{H}_k^m(M,\ml{E})}\leq\frac{1}{\text{Re}(z)-C_0}.
\end{equation}
In particular, combining with the Hille-Yosida Theorem~\cite[Cor.~3.6, p.~76]{EnNa00}, it implies that
\begin{equation}\label{e:semigroup}e^{-t\ml{L}_{V,\nabla}^{(k)}}=(\Phi_k^{-t})^*:\ml{H}^m_k(M,\ml{E})\rightarrow \ml{H}_k^m(M,\ml{E}),\end{equation}
generates a strongly continuous semigroup which is defined for every $t\geq 0$ and whose norm is bounded by $e^{tC_0}.$
 \item As in~\cite[Th.~1.5]{FaSj11}, we can show that the eigenvalues (counted with their algebraic multiplicity) 
and the eigenspaces 
of $-\ml{L}_{V,\nabla}^{(k)}:\ml{H}_k^{sm+n-k}(M,\ml{E})\rightarrow \ml{H}_k^{sm+n-k}(M,\ml{E})$ are in fact independent of the choice of escape function and of the parameter $s$. 
 \item  By duality, the same spectral properties holds for the dual operator
\begin{equation}\label{e:dual}(-\ml{L}_{V,\nabla}^{(k)})^{\dagger}=-\ml{L}_{-V,\nabla^{\dagger}}^{(n-k)}:\ml{H}^{-m}_{n-k}(M,\ml{E}')\rightarrow\ml{H}^{-m}_{n-k}(M,\ml{E}'),\end{equation}
with $\nabla^{\dagger}$ the dual connection~\cite[Sect.~5]{DaRi17a}.
 \item Given any $z_0$ in $\IC$, the corresponding spectral projector $\pi_{\lambda}^{(k)}$ is given by~\cite[Appendix]{HeSj86}:
\begin{equation}\label{e:spectral-proj}\pi_{z_0}^{(k)}:=\frac{1}{2i\pi}\int_{\gamma_{z_0}}(z+\ml{L}_{V,\nabla}^{(k)})^{-1}dz:\ml{H}_k^{sm+n-k}(M,\ml{E})\rightarrow \ml{H}_k^{sm+n-k}(M,\ml{E}),\end{equation}
 where $\gamma_{z_0}$ is a small contour around $z_0$ which contains at most the eigenvalue $z_0$ in its interior.
 \item Given any $z_0$ in $\IC$ with $\operatorname{Re} (\lambda)>-T_0$, there exists $m_k(z_0)\geq 1$ such that, in a small neighborhood of $z_0$, one has
 \begin{equation}\label{e:resolvent}
  (z+\ml{L}_{V,\nabla}^{(k)})^{-1}=\sum_{j=1}^{m_k(z_0)}(-1)^{j-1}\frac{(\ml{L}_{V,\nabla}^{(k)}+z_0)^{j-1}\pi_{z_0}^{(k)}}{(z-z_0)^j}+R_{z_0,k}(z):\ml{H}_k^{sm+n-k}(M,\ml{E})\rightarrow \ml{H}_k^{sm+n-k}(M,\ml{E}),
 \end{equation}
with $R_{z_0,k}(z)$ an holomorphic function.
\end{itemize}
The last two statements allow to make the connection between this spectral framework and the functions $\hat{C}_{\psi_1,\psi_2}(z)$ appearing in the introduction.

\section{Morse-Smale complex and generalized Morse inequalities}\label{s:proof-isomorphism}

In this section, we will give the proof of Theorem~\ref{t:quasiisomorphism} and draw some consequences in terms of 
Morse-Smale inequalities and of spectral trace asymptotics. In all this section, $V$ is a smooth vector field which 
is microlocally tame in the sense of section~\ref{s:spectral} and $\nabla$ is a flat connection on $\ml{E}$. 

\begin{rema}
Except mention of the contrary, we will not suppose that $\nabla$ preserves some hermitian structure on $\ml{E}$ (or equivalently that $\ml{E}$ is endowed 
with a flat unitary connection $\nabla$). 
In that sense, it is slightly different from the case of Hodge theory~\cite{BDIP96, Mn14}. Recall that preserving a hermitian structure would mean that, for any $(\psi_1,\psi_2)$ in 
$\Omega^{0}(M,\ml{E})\times\Omega^{0}(M,\ml{E})$, one has
$$d(\la \psi_1,\psi_2\ra_{\ml{E}})=\la \nabla \psi_1,\psi_2\ra_{\ml{E}}+\la  \psi_1,\nabla\psi_2\ra_{\ml{E}}.$$
This last property ensures that, for any $\psi_1$ in $\Omega^{k-1}(M)$ and for any $\psi_2$ in $\Omega^{n-k}(M)$,
\begin{equation}\label{e:duality-coboundary-2}\int_M\la d^{\nabla}\psi_1,\psi_2\ra_{\ml{E}}=(-1)^k\int_{M}\la \psi_1,d^{\nabla}\psi_2\ra_{\ml{E}}.\end{equation}
It also imposes that the monodromy matrices for the parallel transport are unitary -- see~\cite[Sect.~4]{DaRi17b}.
\end{rema}

\subsection{De Rham cohomology}\label{ss:cohomology}

We start with a brief reminder on de Rham cohomology~\cite{dRh80, BDIP96}. An element $\omega$ in $\Omega^*(M,\ml{E})$ 
such that $d^{\nabla}\omega=0$ is called a \textbf{cocycle} while an 
element $\omega$ which is equal to $d^{\nabla}\alpha$ for some $\alpha\in \Omega^*(M,\ml{E})$ is called a \textbf{coboundary}. We define then
$$Z^k(M,\ml{E})=\text{Ker}(d^{\nabla}) \cap\Omega^k(M,\ml{E}),\ \text{and}\ B^k(M,\ml{E})=\text{Ran}(d^{\nabla}) \cap\Omega^k(M,\ml{E}).$$
Obviously, $B^k(M,\ml{E})\subset Z^k(M,\ml{E})$, and the quotient space $\mathbf{H}^k(M,\ml{E})= Z^k(M,\ml{E})/B^k(M,\ml{E})$ is called the 
\textbf{$k$-th de Rham cohomology}. The complex is said to be \textbf{acyclic} if all the cohomology is reduced to $\{0\}$.

The coboundary operator $d^{\nabla}$ can be extended into a map acting 
on the space of currents $\ml{D}^{\prime,*}(M,\ml{E})$. This allows to define another 
cohomological complex $(\ml{D}^{\prime,*}(M,\ml{E}),d^{\nabla})$:
$$0\xrightarrow{d^{\nabla}} \ml{D}^{\prime,0}(M,\ml{E})\xrightarrow{d^{\nabla}} \ml{D}^{\prime,1}(M,\ml{E})
\xrightarrow{d^{\nabla}}\ldots\xrightarrow{d^{\nabla}} \ml{D}^{\prime,n}(M,\ml{E})
\xrightarrow{d^{\nabla}} 0,$$
where we recall that $\ml{D}^{\prime,k}(M,\ml{E})$ is the topological dual of $\Omega^{n-k}(M,\ml{E})$. One can similarly define the $k$-th cohomology  
of that complex. A remarkable result of de Rham is that these two cohomologies 
coincide~\cite[Ch.~4]{dRh80}:
\begin{theo}[de Rham]\label{t:deRham} Let $u$ be an element in $\ml{D}^{\prime,k}(M,\ml{E})$ satisfying $d^{\nabla}u=0$. 
\begin{enumerate}
 \item There exists $\omega$ in $\Omega^k(M,\ml{E})$ such that $u-\omega$ belongs to $d^{\nabla}\left(\ml{D}^{\prime,k-1}(M,\ml{E})\right)$.
 \item If $u=d^{\nabla}v$ with $(u,v)$ in $\Omega^k(M,\ml{E})\times \mathcal{D}^{\prime,k-1}(M,\ml{E})$, then there exists 
 $\omega$ in $\Omega^{k-1}(M,\ml{E})$ such that $u=d^{\nabla}\omega$. 
\end{enumerate}
\end{theo}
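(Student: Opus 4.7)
The plan is to prove both statements simultaneously by constructing a pair of linear operators: a regularizing operator
$R_{\varepsilon}:\ml{D}^{\prime,k}(M,\ml{E})\rightarrow \Omega^{k}(M,\ml{E})$
and a homotopy operator
$A_{\varepsilon}:\ml{D}^{\prime,k}(M,\ml{E})\rightarrow \ml{D}^{\prime,k-1}(M,\ml{E})$,
satisfying the chain homotopy identity
$$\operatorname{Id}-R_{\varepsilon}=d^{\nabla}\circ A_{\varepsilon}+A_{\varepsilon}\circ d^{\nabla}$$
on currents of every degree, with the extra property that $A_{\varepsilon}$ restricts to a continuous map $\Omega^{k}(M,\ml{E})\rightarrow\Omega^{k-1}(M,\ml{E})$. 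Once such operators are available, both assertions of the theorem become short algebraic consequences.

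For assertion (1), given $u\in\ml{D}^{\prime,k}(M,\ml{E})$ with $d^{\nabla}u=0$, the homotopy identity reduces to $u-R_{\varepsilon}u=d^{\nabla}(A_{\varepsilon}u)$. Setting $\omega:=R_{\varepsilon}u\in\Omega^{k}(M,\ml{E})$ yields $u-\omega\in d^{\nabla}\ml{D}^{\prime,k-1}(M,\ml{E})$. For assertion (2), apply the same identity to the current $v$: we get $v-R_{\varepsilon}v=d^{\nabla}(A_{\varepsilon}v)+A_{\varepsilon}(d^{\nabla}v)=d^{\nabla}(A_{\varepsilon}v)+A_{\varepsilon}u$. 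Applying $d^{\nabla}$ to both sides and using $d^{\nabla}\circ d^{\nabla}=0$, we obtain $u=d^{\nabla}v=d^{\nabla}(R_{\varepsilon}v)+d^{\nabla}(A_{\varepsilon}u)=d^{\nabla}(R_{\varepsilon}v+A_{\varepsilon}u)$. Here $R_{\varepsilon}v$ is smooth by construction and $A_{\varepsilon}u$ is smooth because $u\in\Omega^{k}(M,\ml{E})$ and $A_{\varepsilon}$ preserves smoothness, so $\omega:=R_{\varepsilon}v+A_{\varepsilon}u\in\Omega^{k-1}(M,\ml{E})$ does the job.

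The main technical obstacle is therefore the construction of the pair $(R_{\varepsilon},A_{\varepsilon})$. Following the strategy of de Rham's book, the plan is to build them locally by standard convolution. Since $\nabla$ is flat, around any point of $M$ we can choose a local frame of $\ml{E}$ in which parallel transport is trivial, so that $d^{\nabla}$ becomes the ordinary exterior derivative acting componentwise; this reduces everything to the case of the trivial bundle. On a Euclidean chart one then convolves with a mollifier $\rho_{\varepsilon}$ to define a local smoothing, and constructs a local primitive operator by integrating along straight lines (the Poincaré homotopy), obtaining an identity $\operatorname{Id}-\rho_{\varepsilon}\ast=dh_{\varepsilon}+h_{\varepsilon}d$ on Euclidean forms and currents.

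Finally one globalizes by means of a finite partition of unity subordinate to a trivializing cover and patches the local homotopies together; some care is needed because multiplication by a cutoff does not commute with $d^{\nabla}$, but the commutators can be absorbed into $A_{\varepsilon}$ thanks to the freedom in the construction. The operators extend to currents by taking the formal transposes of analogous smoothing and homotopy operators defined on test forms valued in the dual bundle $\ml{E}'$. This produces the required operators, and the two assertions of the theorem then follow from the elementary computations above.
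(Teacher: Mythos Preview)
Your argument is essentially correct and follows the classical route of de Rham's regularization operators, but it is genuinely different from the paper's proof. The paper does not build smoothing/homotopy operators by convolution and patching; instead it fixes an auxiliary Riemannian metric and hermitian structure, forms the Hodge--Laplace operator $\Delta_{g,\ml{E}}=d^{\nabla}(d^{\nabla})^{*}+(d^{\nabla})^{*}d^{\nabla}$, and uses an elliptic parametrix $A_k$ of order $-2$. Writing $u-\Delta_{g,\ml{E}}A_k u\in\Omega^k(M,\ml{E})$ and using $d^{\nabla}u=0$, the paper peels off $u-\omega=d^{\nabla}\big((d^{\nabla})^{*}A_k u\big)$ with $\omega$ smooth; part~(2) is obtained by the same manipulation applied to $v$.

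What each approach buys: your construction is more elementary and avoids any auxiliary geometry, relying only on the flatness of $\nabla$ to trivialize $d^{\nabla}$ locally. The paper's elliptic argument is shorter and, more importantly for its purposes, makes the primitive explicit as $(d^{\nabla})^{*}A_k u$, a pseudodifferential operator of order $-1$ applied to $u$. This immediately gives the refinement stated in Remark~\ref{r:regularity-derham}, namely that if $u\in\ml{H}_k^{sm+n-k}(M,\ml{E})$ then the primitive lies in $\ml{H}_{k-1}^{sm+n-k+1}(M,\ml{E})$; that Sobolev control is what the paper actually needs in the proof of Theorem~\ref{t:quasiisomorphism}. Your operator $A_{\varepsilon}$, built from Poincar\'e homotopies and cutoffs, does not obviously have order $-1$ in this anisotropic calculus, so while your proof establishes the theorem as stated, it would require additional work to recover that remark.

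One minor caution: your globalization step is sketched rather loosely. The traditional de Rham procedure composes local regularizers (each equal to the identity outside its chart and each commuting with $d$) rather than summing against a partition of unity; the latter route forces you to control commutators $[d^{\nabla},\chi_i]$ explicitly, which you acknowledge but do not carry out. This is standard and fixable, but as written it is the least solid part of the argument.
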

\begin{rema}\label{r:regularity-derham}
We will in fact need something slightly more precise involving the anisotropic Sobolev spaces we have introduced -- see also~\cite[Lemma~2.1]{DyZw16} for related statements in terms of wavefront sets. 
More precisely, if $u$ belongs to $\ml{H}_k^{sm+n-k}(M,\ml{E})$, then $u-\omega$ belongs to $d^{\nabla}\ml{H}_{k-1}^{sm+n-k+1}(M,\ml{E})$. 
This observation will be crucial in our proof of Theorem~\ref{t:quasiisomorphism}.
\end{rema}

Let us for instance recall the argument from~\cite{DyZw16} which is based on elliptic regularity in order to verify that Remark~\ref{r:regularity-derham} indeed holds.

\begin{proof} First of all, fix a Riemannian metric $g$ on $M$ and a hermitian structure $\la.,.\ra_{\ml{E}}$ on $\ml{E}$. Following~\cite[Th.~4.11]{BDIP96}, we can compute 
the formal adjoint for the Hilbert structure on $L^2(M,\Lambda^k(T^*M)\otimes\ml{E})$. More precisely, for $\psi_1$ in $\Omega^{k}(M,\ml{E})$ and for 
$\psi_2$ in $\Omega^{k+1}(M,\ml{E})$, one has
$$\la d^{\nabla}\psi_1,\psi_2\ra_{L^2}=\int_{M}d^{\nabla}\psi_1\wedge \tau_{\ml{E}}\star_{k+1}(\psi_2)=(-1)^k\int_{M}\psi_1\wedge d^{\nabla^{\dagger}}\tau_{\ml{E}}\star_{k+1}(\psi_2),$$
 where the second equality follows from the Stokes Theorem. This tells us that\footnote{Under the additional assumptions that $\nabla$ preserves the Hermitian structure, we would have 
 $\tau_{\ml{E}}^{-1}d^{\nabla^{\dagger}}\tau_{\ml{E}}=d^{\nabla}$.} $(d^{\nabla})^*=(-1)^{k}\star_k^{-1}\tau_{\ml{E}}^{-1}d^{\nabla^{\dagger}}\tau_{\ml{E}}\star_{k+1}.$ We 
 then form the corresponding Laplace Beltrami operator $\Delta_{g,\ml{E}}=d^{\nabla}(d^{\nabla})^*+(d^{\nabla})^*d^{\nabla},$ which is formally selfadjoint. According to~\cite[p.19]{BDIP96}, the 
 principal symbol of this pseudodifferential operator is $-\|\xi\|_x^2\mathbf{Id}_{\Lambda^k(T^*M)\otimes\ml{E}}$. In particular, this defines an elliptic operator. We can now make use of 
 elliptic regularity to prove De Rham Theorem. 
 
 Fix $u$ in $\ml{D}^{\prime,k}(M,\ml{E})$ satisfying $d^{\nabla}u=0$. According to~\cite[Part.~I.3]{BDIP96}, there exists a 
 pseudodifferential operator $A_k$ of order $-2$ such that $u-\Delta_{g,\ml{E}}A_k u$ belongs to $\Omega^k(M,\ml{E})$. 
 Using the fact that $u$ is a cocycle, 
 we find that $d^{\nabla}\Delta_{g,\ml{E}}A_k u=\Delta_{g,\ml{E}}d^{\nabla}A_k u$ belongs to $\Omega^{k+1}(M)$. Thus, by elliptic regularity, $d^{\nabla}A_k u$ 
 belongs to $\Omega^{k+1}(M)$ and, by composing with $(d^{\nabla})^*$, $(d^{\nabla})^*d^{\nabla}A_k u$ belongs to $\Omega^{k}(M)$. Hence, we can find $\omega$ in $\Omega^k(M,\ml{E})$ such that 
 $u-\omega=d^{\nabla}(d^{\nabla})^*A_k u$ which proves point $(2)$. Note that, if $u$ belongs to $\ml{H}_k^{sm+n-k}(M,\ml{E})$, then we can verify from the composition rules of pseudodifferential 
 operators that $(d^{\nabla})^*A_k u$ belongs to $\ml{H}_{k-1}^{sm+n-k+1}(M,\ml{E})$ as pointed out in Remark~\ref{r:regularity-derham}.
 
 For the proof of point $(2)$, we proceed similarly except that we replace $u$ by $v$ which is not a priori a cocyle. Using the fact that 
 $u=d^{\nabla}v$ is smooth, we can still conclude that there exists $\omega\in \Omega^{k-1}(M,\ml{E})$ such that 
 $v-\omega=d^{\nabla}(d^{\nabla})^*A_k v$. Applying $d^{\nabla}$, we get the expected conclusion. 
\end{proof}

\subsection{Chain homotopy equation}\label{ss:chain-homotopy}
Thanks to~\eqref{e:exterior-derivative-sobolev} and to the fact that $d^{\nabla}\circ\ml{L}_{V,\nabla}^{(k)}=\ml{L}_{V,\nabla}^{(k+1)}\circ d^{\nabla}$, 
we can define, for every $z_0\in\IC$ the cohomological complex $(\text{Ker}(\ml{L}_{V,\nabla}^{(*)}+z_0)^{m_k(z_0)},d^{\nabla})$
$$0\xrightarrow{d^{\nabla}} \text{Ker}(\ml{L}_{V,\nabla}^{(0)}+z_0)^{m_0(0)}\xrightarrow{d^{\nabla}} 
\text{Ker}(\ml{L}_{V,\nabla}^{(1)}+z_0)^{m_1(0)}\xrightarrow{d^{\nabla}} \ldots
\xrightarrow{d^{\nabla}} \text{Ker}(\ml{L}_{V,\nabla}^{(n)}+z_0)^{m_n(0)}\xrightarrow{d^{\nabla}} 0.$$
By construction, this cohomological complex coincide with the complex $(C_{V,\nabla}^{\bullet}(z_0),d^{\nabla})$ 
defined by~\eqref{e:ruelle-complex}. Observe that, for $z_0\neq 0$, $(C_{V,\nabla}^{\bullet}(z_0),d^{\nabla})$ is 
\textbf{acyclic} and we shall come back to this case later on. Before that, we focus on the case $z_0=0$ and we 
start with a few preliminary observations. First, as $(\ml{L}_{V,\nabla}+z)$ commutes with $d^{\nabla}$ and $\iota_V$, 
we can verify from~\eqref{e:spectral-proj} that, for every $0\leq k\leq n$,
\begin{equation}\label{e:d-commute-projector}
 d^{\nabla}\circ\pi_0^{(k)}=\pi_0^{(k+1)}\circ d^{\nabla}\ \text{and}\ \iota_V\circ\pi_0^{(k+1)}=\pi_0^{(k)}\circ \iota_V.
\end{equation}
Hence, one has
\begin{equation}\label{e:Lie-commute-projector}
 \ml{L}_{V,\nabla}^{(k)}\circ\pi_0^{(k)}=\pi_0^{(k)}\circ \ml{L}_{V,\nabla}^{(k)}.
\end{equation}
According to~\cite[Appendix~A.18]{HeSj86}, the operator $\ml{L}_{V,\nabla}^{(k)}$ is invertible on the space 
$$\ml{H}_k^{sm+n-k}(M,\ml{E})\cap\text{Ker}\pi_0^{(k)}=(\text{Id}-\pi_0^{(k)})\ml{H}_k^{sm+n-k}(M,\ml{E}).$$
The same holds in degree $k+1$. Hence, thanks to the relation
$$d^{\nabla}\circ\ml{L}_{V,\nabla}^{(k)}=\ml{L}_{V,\nabla}^{(k+1)}\circ d^{\nabla}:\ml{H}_k^{sm+n-k}(M,\ml{E})\cap\text{Ker}\pi_0^{(k)}
\rightarrow\ml{H}_{k+1}^{sm+n-(k+1)}(M,\ml{E})\cap\text{Ker}\pi_0^{(k+1)},$$
we can deduce that
\begin{equation}\label{e:commute-inverse}
 (\ml{L}_{V,\nabla}^{(k+1)})^{-1} \circ d^{\nabla}\circ(\text{Id}_{\ml{H}^{sm+n-k}_k(M,\ml{E})}-\pi_0^{(k)})
 =d^{\nabla}\circ(\ml{L}_{V,\nabla}^{(k)})^{-1}\circ (\text{Id}_{\ml{H}^{sm+n-k}_k(M,\ml{E})}-\pi_0^{(k)}).
\end{equation}
Now that we have settled these equalities, we write, for any $\psi_1$ 
in $\ml{H}_k^{sm+n-k}(M,\ml{E})$,
\begin{eqnarray*}
 \psi_1 &= &\pi_0^{(k)}(\psi_1)+(\text{Id}_{\ml{H}^{sm+n-k}_k(M,\ml{E})}-\pi_0^{(k)})(\psi_1)\\
  & = &\pi_0^{(k)}(\psi_1)+(d^{\nabla}\circ \iota_V+\iota_V\circ d^{\nabla})\circ (\ml{L}_{V,\nabla}^{(k)})^{-1}\circ(\text{Id}_{\ml{H}^{sm+n-k}_k(M,\ml{E})}-
  \pi_0^{(k)})(\psi_1)\\
& = &\pi_0^{(k)}(\psi_1)+d^{\nabla}\circ \left(\iota_V\circ (\ml{L}_{V,\nabla}^{(k)})^{-1}\circ(\text{Id}_{\ml{H}^{sm+n-k}_k(M,\ml{E})}-\pi_0^{(k)})\right)(\psi_1)\\
 & &+\left(\iota_V\circ (\ml{L}_{V,\nabla}^{(k)})^{-1}\circ(\text{Id}_{\ml{H}^{sm+n-(k+1)}_{k+1}(M,\ml{E})}-\pi_0^{(k+1)})\right)\circ d^{\nabla}(\psi_1),
\end{eqnarray*}
where the last equality follows from~\eqref{e:d-commute-projector} and~\eqref{e:commute-inverse}. Thus, if we set
\begin{equation}\label{e:chain-contraction-map}R^{(k)}:=\iota_V^{(k)}\circ (\ml{L}_V^{(k)})^{-1}\circ(\text{Id}_{\ml{H}^{sm+n-k}_k(M,\ml{E})}-\pi_0^{(k)})
:\ml{H}^{sm+n-k}_k(M,\ml{E})\rightarrow \ml{H}^{sm+n-k}_{k-1}(M,\ml{E}),
\end{equation}
we obtain the following \textbf{chain homotopy equation}:
\begin{equation}\label{e:chain-homotopy}\forall\psi_1\in\ml{H}^{sm+n-k}_k(M,\ml{E}),\ \psi_1=\pi_0^{(k)}(\psi_1)
 +d^{\nabla}\circ R^{(k)}(\psi_1)+R^{(k+1)}\circ d^{\nabla}(\psi_1).
\end{equation}
\begin{rema}\label{r:chain-contraction-map} Note that the map $R$ will play a crucial in our calculations related to the torsion. 
By a similar argument as the one used to prove~\eqref{e:commute-inverse}, we can verify that
$$(\ml{L}_{V,\nabla}^{(k)})^{-1} \circ \iota_V\circ(\text{Id}_{\ml{H}^{sm+n-k}_k(M,\ml{E})}-\pi_0^{(k)})
 =\iota_V\circ(\ml{L}_{V,\nabla}^{(k)})^{-1}\circ (\text{Id}_{\ml{H}^{sm+n-k}_k(M,\ml{E})}-\pi_0^{(k)}).$$
In particular, for every $z_0\neq 0$, $R$ is a 
\textbf{chain contraction map} for the acyclic complex $(C_{V,\nabla}^{\bullet}(z_0),d^{\nabla})$~\cite{Mn14}, in the sense that 
$\text{Id}=(d^{\nabla}+R)^2$ and $R^2=0$ for that complex.
\end{rema}

\subsection{Proof of Theorem~\ref{t:quasiisomorphism}}

We can now give the proof of Theorem~\ref{t:quasiisomorphism}. Thanks to~\eqref{e:d-commute-projector}, the induced maps 
on the cohomology are well defined. Suppose now that $\psi$ is a cocycle in 
$\Omega^k(M,\ml{E})$ which verifies $\pi_0^{(k)}(\psi)=0$. Thanks to~\eqref{e:chain-homotopy}, we deduce that
$$\psi=d^{\nabla}\circ R^{(k)}(\psi).$$
From the second part of de Rham's Theorem, we deduce that there exists $\omega\in\Omega^{k-1}(M,\ml{E})$ such that $\psi=d\omega$. Hence, $\psi$ 
is a coboundary. This shows the injectivity of the map. Consider now surjectivity and fix $\psi$ a cocycle in 
$\operatorname{Ker}(\ml{L}_V^{(k)})^{m_k(0)}$. From 
the first of de Rham's Theorem (and more precisely Remark~\ref{r:regularity-derham}), 
there exists a cocycle $\omega$ in $\Omega^k(M,\ml{E})$ such that $\psi-\omega$ belongs to 
$d^{\nabla}\ml{H}_{k-1}^{sm+n-(k-1)}(M,\ml{E})$. Using~\eqref{e:chain-homotopy} one more time, we can write
$$\omega=\pi_0^{(k)}(\omega)+d^{\nabla}\circ R^{(k)}(\omega).$$
Hence, we deduce that $\psi-\pi_0^{(k)}(\omega)$ is in the range of $\ml{H}^{sm+n-k}_{k-1}(M,\ml{E})$ under $d^{\nabla}$. 
Using the second part of de Rham's Theorem, there exists $\tilde{\omega}$ in $\ml{H}^{sm+n-k}_{k-1}(M,\ml{E})$ such that
$$\psi=\pi_0^{(k)}(\omega)+d^{\nabla}\tilde{\omega}.$$
Recall that the Pollicott-Ruelle spectrum is independent of the choice of the anisotropic Sobolev 
space~\cite[Th.~1.5]{FaSj11}. In particular, the resonant state associated with $z_0=0$ are the same for the 
spaces $\ml{H}_{\bullet}^{sm+n-\bullet}$ and $\ml{H}_{\bullet}^{sm+n-\bullet-1}$. Hence, applying the spectral projector $\pi_0^{(k)}$ 
and implementing relation~\eqref{e:d-commute-projector}, we find that
$$\psi=\pi_0^{(k)}(\omega)+d^{\nabla}\circ\pi_0^{(k-1)}(\tilde{\omega}),$$
which concludes the proof of the surjectivity.

\subsection{Morse type inequalities}\label{ss:morse} In this paragraph, we will explain how to prove Corollary~\ref{c:morse-smale-inequality}. The 
proof of Corollary~\ref{c:morse-spectral} follows from a standard argument on finite dimensional cohomological complexes and we omit 
it -- see~\cite[Sect.~8]{DaRi16} for a brief reminder. In the case of Corollary~\ref{c:morse-smale-inequality}, we remind that, in the case where $V$ 
is a Morse-Smale vector field which is $\ml{C}^{\infty}$-diagonalizable, the main result of~\cite{DaRi17b} (see paragraphs~\ref{sss:local-coordinates-critical-point} 
and~\ref{sss:local-coordinates-closed-orbit} for a brief reminder) implies that 
the dimension of $C^k_{\nabla,V}(0)$ is equal to 
$$Nc_k(V)+\sum_{\Lambda:\text{dim} W^s(\Lambda)=k}m_{\Lambda}+\sum_{\Lambda:\text{dim} W^s(\Lambda)=k-1}m_{\Lambda}.$$
Thus, under the additional assumption that $V$ is $\ml{C}^{\infty}$-diagonalizable, we can conclude the proof of the Corollary~\ref{c:morse-smale-inequality} thanks to Corollary~\ref{c:morse-spectral}. 
Fix now a general 
Morse-Smale vector field $V_0$ on $M$. Knowing that the set of Morse-Smale vector fields is open inside the set of smooth vector fields, we can deduce that, 
for every $V$ in a small neighborhood of $V_0$, the vector field is still Morse-Smale. Thanks to the Sternberg-Chen Theorem~\cite{Ne69, WWL08} 
(see also~\cite[Appendix]{DaRi17a} for a brief reminder), we know that the assumption of being $\ml{C}^{\infty}$-diagonalizable is satisfied as soon as all 
the eigenvalues of the linearized systems satisfies certain nonresonance assumptions and as soon as they are distinct. In particular, all these conditions are satisfied by a 
dense subset of vector fields and we can find $V$ arbitrarily close to $V_0$ which is Morse-Smale and $\ml{C}^{\infty}$-diagonalizable. In order to conclude, we 
observe that, for a small enough perturbation, we do not modify the number of critical elements of the flow of index $k$ and every 
closed orbit stays in the same homotopy class. In particular, all the terms in the sum defining the upper bound of corollary~\ref{c:morse-smale-inequality} are equal for $V$ and $V_0$.

\subsection{Trace asymptotics and spectral determinants}\label{ss:trace} Let us now draw some nice spectral interpretation of these results. Fix $U$ a bounded open set in $\IC$. We define 
$$\Pi_U^{(k)}=\sum_{z_0\in U}\pi_{z_0}^{(k)}.$$
Note that, as the spectrum is discrete, this defines a finite sum. We now write
$$C_{V,\nabla}^{\text{even}}(z_0)=\bigoplus_{k=0\ \text{mod}\ 2}C_{V,\nabla}^{k}(z_0)\quad \text{and}\quad C_{V,\nabla}^{\text{odd}}(z_0)=\bigoplus_{k=1\ \text{mod}\ 2}C_{V,\nabla}^{k}(z_0).$$
Thanks to Remark~\ref{r:chain-contraction-map}, we can note that $Q=d^{\nabla}+R$ defines an operator which exchanges the chiralities and that, for $z_0\neq 0$, $Q$ is an isomorphism 
on $C_{V,\nabla}^{\text{even}}(z_0)\oplus C_{V,\nabla}^{\text{odd}}(z_0)$ (as $Q^2=\text{Id}$). In particular, 
$$\forall z_0\neq 0,\ \text{dim}\ C_{V,\nabla}^{\text{even}}(z_0)=\text{dim}\ C_{V,\nabla}^{\text{odd}}(z_0).$$
Aguing as in~\cite[Sect.~7]{DaRi16}, we can then derive from the case of equality in Corollary~\ref{c:morse-spectral}:
\begin{coro}\label{c:spectral-det} Suppose the assumptions of Theorem~\ref{t:quasiisomorphism} are satisfied. Then, for every bounded open set $U$ in $\IC$ containing $0$ and for every $z\in\IC^*$,
 $$\prod_{k=0}^n\operatorname{det}\left(\Pi_U^{(k)}\left(z+\ml{L}_{V,\nabla}^{(k)}\right)\Pi_U^{(k)}\right)^{(-1)^k}=z^{\chi(M,\ml{E})},$$
 and, for all $t>0$,
 $$\sum_{k=0}^n(-1)^k\operatorname{Tr}\left(\Pi_U^{(k)}e^{-t\ml{L}_{V,\nabla}^{(k)}}\Pi_U^{(k)}\right)=\chi(M,\ml{E}).$$
\end{coro}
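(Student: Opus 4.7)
The plan is to reduce both identities to a computation on each finite-dimensional generalized eigenspace $C_{V,\nabla}^k(z_0)$ for $z_0\in U$, and then to show that the alternating sum of dimensions $\sum_k(-1)^k\dim C_{V,\nabla}^k(z_0)$ vanishes unless $z_0=0$, in which case it equals $\chi(M,\ml{E})$.

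First I would observe that, by construction of the spectral projector and the Laurent expansion~\eqref{e:resolvent}, the operator $\ml{L}_{V,\nabla}^{(k)}+z_0$ is nilpotent on $\operatorname{Ran}(\pi_{z_0}^{(k)})=C_{V,\nabla}^k(z_0)$. Hence the restriction of $\ml{L}_{V,\nabla}^{(k)}$ to $C_{V,\nabla}^k(z_0)$ has $-z_0$ as unique generalized eigenvalue, and so
\[
\det\Bigl(\bigl(z+\ml{L}_{V,\nabla}^{(k)}\bigr)\bigr|_{C_{V,\nabla}^k(z_0)}\Bigr)=(z-z_0)^{\dim C_{V,\nabla}^k(z_0)},\qquad
\operatorname{Tr}\Bigl(e^{-t\ml{L}_{V,\nabla}^{(k)}}\bigr|_{C_{V,\nabla}^k(z_0)}\Bigr)=e^{tz_0}\dim C_{V,\nabla}^k(z_0),
\]
the second identity using that the trace of a nilpotent operator on a finite-dimensional space is zero. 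Since $\Pi_U^{(k)}$ projects onto the direct sum of these spaces over $z_0\in U$, the left-hand sides of Corollary~\ref{c:spectral-det} become
\[
\prod_{z_0\in U}(z-z_0)^{\sum_k(-1)^k\dim C_{V,\nabla}^k(z_0)}\quad\text{and}\quad\sum_{z_0\in U}e^{tz_0}\sum_k(-1)^k\dim C_{V,\nabla}^k(z_0).
\]

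The next step is to analyse the exponent $\chi(z_0):=\sum_k(-1)^k\dim C_{V,\nabla}^k(z_0)$. For $z_0\neq 0$, the paragraph preceding the corollary already notes that $Q=d^{\nabla}+R$ satisfies $Q^{2}=\operatorname{Id}$ on $C_{V,\nabla}^{\bullet}(z_0)$ and exchanges even and odd chiralities (indeed, $R$ given by~\eqref{e:chain-contraction-map} is a chain contraction for the acyclic complex $(C_{V,\nabla}^\bullet(z_0),d^{\nabla})$, by Remark~\ref{r:chain-contraction-map}). Therefore $Q$ induces a linear isomorphism $C_{V,\nabla}^{\text{even}}(z_0)\simeq C_{V,\nabla}^{\text{odd}}(z_0)$, so $\chi(z_0)=0$. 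For $z_0=0$, Theorem~\ref{t:quasiisomorphism} identifies the cohomology of $(C_{V,\nabla}^{\bullet}(0),d^{\nabla})$ with the twisted De~Rham cohomology, and the standard fact that the Euler characteristic of a finite-dimensional complex coincides with that of its cohomology yields $\chi(0)=\sum_k(-1)^k b_k(M,\ml{E})=\chi(M,\ml{E})$ (equivalently, this is the case of equality in Corollary~\ref{c:morse-spectral}).

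Plugging $\chi(z_0)=0$ for $z_0\neq 0$ and $\chi(0)=\chi(M,\ml{E})$ into the two displayed expressions gives precisely $z^{\chi(M,\ml{E})}$ and $\chi(M,\ml{E})$, as desired. There is no real obstacle: the only nontrivial ingredient is the cancellation for $z_0\neq 0$, but this has already been prepared by the chain contraction argument built from Remark~\ref{r:chain-contraction-map}; everything else is linear algebra on the finite-dimensional generalized eigenspaces.
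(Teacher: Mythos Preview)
Your proof is correct and follows exactly the approach the paper has in mind: the paper reduces the corollary to the equality case in Corollary~\ref{c:morse-spectral} together with the chirality isomorphism $Q=d^{\nabla}+R$ on $C_{V,\nabla}^{\bullet}(z_0)$ for $z_0\neq 0$ established just before the statement, referring to~\cite[Sect.~7]{DaRi16} for the elementary linear algebra on the finite-dimensional generalized eigenspaces that you have spelled out explicitly.
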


\subsection{Poincar\'e duality}\label{s:poincare}

Consider now the dual complex associated with the vector field $-V$ and the connection $\nabla^{\dagger}$, i.e.
$$0\xrightarrow{d^{\nabla^{\dagger}}} \text{Ker}(\ml{L}_{-V,\nabla^{\dagger}}^{(0)})^{m_d(0)}\xrightarrow{d^{\nabla^{\dagger}}} \text{Ker}(\ml{L}_{-V,\nabla^{\dagger}}^{(1)})^{m_{n-1}(0)}
\xrightarrow{d^{\nabla^{\dagger}}} \ldots
\xrightarrow{d^{\nabla^{\dagger}}} \text{Ker}(\ml{L}_{-V,\nabla^{\dagger}}^{(n)})^{m_0(0)}\xrightarrow{d^{\nabla^{\dagger}}} 0.$$
From~\eqref{e:dual} and from section~\ref{s:spectral}, we know that the operator $-\ml{L}_{-V,\nabla^{\dagger}}^{(k)}$ can be identified with the 
dual of $-\ml{L}_{V,\nabla}^{(n-k)}.$ These two complexes are dual to each other via the duality bracket between $\ml{H}^{m+k}_k(M,\ml{E})$ and 
$\ml{H}_{n-k}^{-(m+k)}(M,\ml{E}')$. In other words, we have, for every $\psi_1$ in $\text{Ker}(\ml{L}_{V,\nabla}^{(k)})^{m_k(0)}$ and 
every $\psi_2$ in $\text{Ker}(\ml{L}_{-V,\nabla^{\dagger}}^{(n-k)})^{m_k(0)}$,
$$\la \psi_1,\psi_2\ra=\la \psi_1,\psi_2\ra_{\ml{H}_{k}^{m+k}(M,\ml{E})\times\ml{H}_{n-k}^{-(m+k)}(M,\ml{E}')}=\int_M\psi_2\wedge\psi_1.$$
We can then define a Poincar\'e map:
$$\ml{P}_0^{(k)}:\psi\in\text{Ker}(\ml{L}_{V,\nabla}^{(k)})^{m_k(0)}\mapsto\la\psi,.\ra\in\left(\text{Ker}(\ml{L}_{-V,\nabla^{\dagger}}^{(n-k)})^{m_k(0)}\right)'.$$
By a classical argument for complex which are dual to each other -- see~\cite[Paragraph~8.5]{DaRi16} for a brief reminder, we can then deduce that 
$\ml{P}_0^{(k)}$ induces an isomorphism between the $k$-th cohomological group of 
$(\operatorname{Ker}(\ml{L}_{V,\nabla}^{(*)})^{m_*(0)},d^{\nabla})$ and the dual of 
the $(n-k)$-th cohomological group of $(\operatorname{Ker}(\ml{L}_{-V,\nabla^{\dagger}}^{(*)})^{m_*(0)},d^{\nabla^{\dagger}})$. Combining this 
with Theorem~\ref{t:quasiisomorphism}, one can recover the following version of Poincar\'e duality:
$$\forall 0\leq k\leq n,\ b_k(M,\ml{E})=b_{n-k}(M,\ml{E}').$$

\section{Koszul homological complex}\label{s:koszul}

As a warm-up for the calculation of the spectral torsion, we start with some consideration on the following homological complex:
\begin{equation}\label{e:ruelle-koszul-complex}0\xrightarrow{\iota_V} C_{V,\nabla}^n(0)\xrightarrow{\iota_V} C_{V,\nabla}^{n-1}(0)\xrightarrow{\iota_V} \ldots\xrightarrow{\iota_V} 
C_{V,\nabla}^0(0)\xrightarrow{\iota_V} 0.\end{equation}
This complex will be refered as the \textbf{Morse-Smale-Koszul} complex and we will verify that, in the case of Morse-Smale vector fields, its homology counts the number of critical points in each degree. More precisely, 
we will prove:
\begin{prop}\label{p:koszul} Suppose that $V$ is a Morse-Smale vector field which is $\ml{C}^{\infty}$-diagonalizable and that $\nabla$ 
preserves a hermitian 
structure on $\ml{E}$ (which is of dimension $N$). 

Then, the dimension of the homology of degree $k$ of $(C^{\bullet}_{V,\nabla},\iota_V)$ is equal to 
$Nc_k(V)$, where $c_k(V)$ is the number of critical points such that $\operatorname{dim} W^s(\Lambda)=k$. 

\end{prop}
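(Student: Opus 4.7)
The plan is to exploit the explicit description of Pollicott--Ruelle resonant states at $z_{0}=0$ that we obtained in the companion paper \cite{DaRi17b}. To each critical element $\Lambda$ of the flow (critical point or closed orbit) is attached a finite dimensional space of resonant states microlocalized near $\Lambda$, and the direct sum of these local spaces exhausts $C^{\bullet}_{V,\nabla}(0)$. Because $\iota_{V}$ is a pointwise operator preserving wavefront sets, it respects this splitting, and the Morse--Smale--Koszul complex decomposes as a direct sum of subcomplexes indexed by the critical elements. It therefore suffices to compute the homology of each such local subcomplex.

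\textbf{Critical points.} Near a critical point $\Lambda$ of index $k$, the $\ml{C}^{\infty}$-diagonalizability hypothesis provides linearizing coordinates $(x,y)\in\IR^{k}\times\IR^{n-k}$ in which $V=\sum_{i}\lambda_{i}x_{i}\partial_{x_{i}}+\sum_{j}\mu_{j}y_{j}\partial_{y_{j}}$ and $W^{u}(\Lambda)=\{x=0\}$. The description of the resonant states at $0$ given in paragraph~\ref{sss:local-coordinates-critical-point} of \cite{DaRi17b} shows that the $N$ local generators attached to $\Lambda$ sit in degree $k$ and are proportional to $\delta(x)\,dx_{1}\wedge\cdots\wedge dx_{k}$, tensored with a flat section of $\ml{E}_{\Lambda}$. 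Using the identity $x_{i}\delta(x)=0$, one computes directly
$$\iota_{V}\bigl(\delta(x)\,dx_{1}\wedge\cdots\wedge dx_{k}\bigr)=\delta(x)\sum_{i=1}^{k}(-1)^{i-1}\lambda_{i}x_{i}\,dx_{1}\wedge\cdots\widehat{dx_{i}}\cdots\wedge dx_{k}=0.$$
Hence the $\Lambda$-subcomplex is concentrated in degree $k$ with zero differential, contributing $N$ to the homology in that degree, and $Nc_{k}(V)$ in total after summing over critical points of index $k$.

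\textbf{Closed orbits.} For a closed orbit $\Lambda$ with $\dim W^{s}(\Lambda)=j$, choose local coordinates $(\theta,x,y)$ with $\theta\in\IR/\ml{P}_{\Lambda}\IZ$ parametrizing the orbit and $V|_{\Lambda}=\partial_{\theta}$, the remaining directions carrying the strongly (un)stable splitting. Paragraph~\ref{sss:local-coordinates-closed-orbit} of \cite{DaRi17b} furnishes a basis $(\omega_{i})_{i=1}^{m_{\Lambda}}$ of the degree-$j$ part and a basis $(d\theta\wedge\omega_{i})_{i=1}^{m_{\Lambda}}$ of the degree-$(j+1)$ part of the $\Lambda$-local subspace, both indexed by the eigenvectors of $\Delta_{\Lambda}M_{\ml{E}}(\Lambda)$ with eigenvalue $1$, where each $\omega_{i}$ is a current without any $d\theta$ factor. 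On the support of these currents $V=\partial_{\theta}$, so $\iota_{V}(\omega_{i})=0$ while $\iota_{V}(d\theta\wedge\omega_{i})=\omega_{i}$. Thus $\iota_{V}$ realizes an isomorphism from the degree-$(j+1)$ piece onto the degree-$j$ piece of the $\Lambda$-subcomplex, which is consequently acyclic.

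\textbf{Conclusion and main obstacle.} Summing the local contributions, the homology of the Morse--Smale--Koszul complex in degree $k$ is entirely accounted for by critical points of index $k$ and has dimension $Nc_{k}(V)$. The principal technical point, and the reason we lean on \cite{DaRi17b}, is the decomposition of $C^{\bullet}_{V,\nabla}(0)$ into local subcomplexes indexed by critical elements: the resonant states are distributions on all of $M$ (they are not genuinely supported at a single $\Lambda$, only microlocalized there along appropriate conormals), and one must verify that the explicit generators produced in \cite{DaRi17b} together with the local formulas for $\iota_{V}$ above are compatible, so that this formal decomposition really splits the global Koszul differential. Once this is granted the homology computation itself is elementary.
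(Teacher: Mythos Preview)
Your proof has a genuine gap: the decomposition of $C^{\bullet}_{V,\nabla}(0)$ into local subcomplexes indexed by critical elements, which you rely on throughout, does \emph{not} hold as a splitting of the Koszul differential. The resonant state $U_{j}^{\Lambda}$ (resp.\ $\tilde{U}_{j,p}^{\Lambda}$) is not supported at $\Lambda$ but on the full closure $\overline{W^{u}(\Lambda)}$, which by Smale's theorem is a union of unstable manifolds of other critical elements $\Lambda'\leqq\Lambda$. When you apply $\iota_{V}$, components along the basis elements attached to those $\Lambda'$ can and do appear. Concretely, the paper's computation gives
\[
\iota_{V}(\tilde{U}_{j,p}^{\Lambda})=U_{j,p}^{\Lambda}+\sum_{\Lambda'\neq\Lambda,\ \dim W^{s}(\Lambda')=k}\alpha_{\Lambda',j',p'}\,U_{j',p'}^{\Lambda'},
\]
with the off-diagonal sum in general nonzero. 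So $\iota_{V}$ is not block-diagonal in your decomposition, and the ``formal decomposition really splits the global Koszul differential'' that you say must be verified is in fact false. Your wavefront-set argument does not help here: the supports (hence wavefront sets) of states attached to different $\Lambda$ genuinely overlap.

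What replaces the splitting is an \emph{upper-triangular} structure with respect to Smale's partial order $\leqq$. For a critical point $\Lambda$ of index $k$, your local computation shows $\iota_{V}(U_{j}^{\Lambda})=0$ near $\Lambda$; one then propagates this along $W^{u}(\Lambda)$ using the eigenvalue equation, so that $\iota_{V}(U_{j}^{\Lambda})$ is supported in $\overline{W^{u}(\Lambda)}\setminus W^{u}(\Lambda)$, and a dimension count via Smale's lemma forces this to vanish globally (no basis element of $C^{k-1}_{V,\nabla}(0)$ has small enough unstable dimension to fit in that boundary). The same support/propagation/dimension argument gives $\iota_{V}(U_{j,p}^{\Lambda})=0$ for closed orbits and the displayed formula above for $\iota_{V}(\tilde{U}_{j,p}^{\Lambda})$. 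Ordering the basis compatibly with $\leqq$, the matrix of $\iota_{V}$ on the closed-orbit generators is then unipotent upper-triangular, and the homology in degree $k$ is represented exactly by the span of the $U_{j}^{\Lambda}$ over critical points of index $k$, giving $Nc_{k}(V)$. Your local computations are the correct diagonal entries of this matrix; the missing ingredient is the global support argument that controls the off-diagonal part.
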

This proposition proves\footnote{This follows from 
Corollary~\ref{c:morse-smale-inequality}.} Theorem~\ref{t:koszul} and it generalizes the results of~\cite[Sect.~8.6]{DaRi16} which were only valid for gradient flows and for 
the trivial bundle $M\times\IC$. We refer 
to~\cite{DaRi16} for more details and examples on Koszul complexes. Recall in particular that, for any smooth vector field $V$ with nondegenerate 
critical points, the homology of degree $0$ of $(\Omega^{\bullet}(M,\ml{E}),\iota_V)$ is equal to the total number of critical points while the 
homology in higher degree vanishes. Here, the Morse-Smale-Koszul complex allows to compute something slightly more precise as its homology 
allows to count the critical points of any degree.

In all this section, we will suppose that $V$ is a Morse-Smale vector field which is $\ml{C}^{\infty}$-diagonalizable and that $\nabla$ preserves 
a Hermitian structure on $\ml{E}$.

\subsection{Spectrum of Morse-Smale flows on the imaginary axis}

Let us first use our assumptions that the flow is smoothly diagonalizable in order to put it into a normal form near the critical 
elements $\Lambda$ of the vector field $V$, namely its closed orbits and critical points. Thanks to our results 
from~\cite[Sect.~7]{DaRi17b}, we can write down the local expression of the Pollicott-Ruelle resonant 
states in this system of normal coordinates.

\subsubsection{Critical points}\label{sss:local-coordinates-critical-point} Suppose that $\Lambda$ is a critical point of 
$V$ of index $k$, i.e. such that $\text{dim}\ W^s(\Lambda)=k$. Thanks to the fact that the flow is supposed to be smoothly 
diagonalizable, we can find a system of smooth coordinates $(x,y)\in\IR^k\times\IR^{n-k}$ near $\Lambda$ such that the flow $\varphi^t$ induced by $V$ 
can be written as follows:
$$\varphi^t(x,y)=\left(e^{tA_s^{\Lambda}}x,e^{tA_u^{\Lambda}}y\right),$$
where $A_s^{\Lambda}$ (resp. $A_u^{\Lambda}$) belongs to $M_k(\IR)$ (resp. $M_{n-k}(\IR)$), is diagonalizable (in $\IC$) and has all its eigenvalues on the half 
plane $\text{Re} (z)<0$ (resp. $\text{Re}(z)>0$). In particular, the local stable manifold of $\Lambda$ corresponds to the set $y=0$ while the 
unstable manifold is given by $x=0$. Near $\Lambda$, we can also construct a moving frame $(\mathbf{c}_1^{\Lambda},\ldots,\mathbf{c}_N^{\Lambda})$ 
of $\ml{E}$ such that $\nabla c_{j}^{\Lambda}=0$ for every $0\leq j\leq N$ -- see e.g.~\cite[Th.~12.25]{Lee09}. According 
to~\cite[Sect.~7]{DaRi17b}, one can associate to every such critical element of $V$ a family of $N$ Pollicott-Ruelle resonant states of degree $k$ associated with 
the resonance $z_0=0$. We denote them by
$$U_{1}^{\Lambda},\ldots, U_N^{\Lambda},$$
and their local form near $\Lambda$ is given, for every $1\leq j\leq N$ by
$$U_j^{\Lambda}(x,y,dx,dy)=\delta_0(x_1,\ldots, x_k)dx_1\wedge\ldots\wedge dx_k \otimes \mathbf{c}_j^{\Lambda}(x,y).$$
Moreover, one knows that the support of $U_j^{\Lambda}$ is equal to $\overline{W^u(\Lambda)}$

\subsubsection{Closed orbits}\label{sss:local-coordinates-closed-orbit} Suppose that $\Lambda$ is a closed orbit of 
$V$ of index $k$, i.e. such that $\text{dim}\ W^s(\Lambda)=k$. The situation is now slightly more complicated. We denote by $\ml{P}_{\Lambda}$ 
the minimal period of the closed orbit. Thanks to the hypothesis that $V$ is $\ml{C}^{\infty}$-diagonalizable, we can find a system of 
smooth coordinates $(x,y,\theta)\in\IR^{k-1}\times\IR^{n-k}\times(\IR/\ml{P}_{\Lambda}\IZ)$ near $\Lambda$ such that the flow can be put 
under the following normal form
$$\varphi^t(x,y)=\left(P(\theta+t)e^{tA}P(\theta)^{-1}(x,y),\theta+t\right),$$
where
\begin{itemize}
 \item $A$ is of the form $\text{diag} (A_s^{\Lambda},A_u^{\Lambda})$ with $A_s^{\Lambda}$ 
 (resp. $A_u^{\Lambda}$) belonging to $M_{k-1}(\IR)$ (resp. $M_{n-k}(\IR)$) and diagonalizable (in $\IC$). Moreover, $A_s^{\Lambda}$ (resp. $A_u^{\Lambda}$) 
 has all its eigenvalues on the half plane $\text{Re} (z)<0$ (resp. $\text{Re}(z)>0$).
 \item $P(\theta)$ is $2\ml{P}_{\Lambda}$-periodic. Moreover, it satisfies $P(0)=\text{Id}_{\IR^{n-1}}$, $J_{\Lambda}:=P(\ml{P}_{\Lambda})=\text{diag}(\pm 1)$ and 
 $P(\theta+\ml{P}_{\Lambda})=J_{\Lambda}P(\theta)$.
\end{itemize}
The manifold $W^s(\Lambda)$ (and thus $W^u(\Lambda)$) is orientable if and only if the determinant of $J_{\Lambda}$ restricted to $\IR^{k-1}\oplus\{0\}$ 
is equal to $1$. We set $\varepsilon_{\Lambda}=0$ whenever $\Lambda$ is orientable and $\varepsilon_{\Lambda}=1/2$ otherwise. It is related to the twisting 
index by the identity $\Delta_{\Lambda}=e^{2i\pi\varepsilon_{\Lambda}}$. 
In this system of coordinates, $\Lambda$ is given by the set $\{(0,0),\theta)\}$, the stable manifold by $\{P(\theta)(x,0),\theta)\}$ 
and the unstable one by $\{P(\theta)(0,y),\theta)\}$. As for critical points, we can construct a moving frame 
$(\mathbf{c}_1^{\Lambda},\ldots,\mathbf{c}_N^{\Lambda})$ of $\ml{E}$ such that there exists $\gamma_1^{\Lambda},\ldots,\gamma_N^{\Lambda}$ in $[0,1)$ for which one has
$\nabla \mathbf{c}_{j}^{\Lambda}=\frac{2i\pi\gamma_j^{\Lambda}}{\ml{P}_{\Lambda}}\mathbf{c}_{j}^{\Lambda} d\theta$ 
for every $0\leq j\leq N$ -- see e.g.~\cite[Sect.~4]{DaRi17b} for details. Here, we note that 
$\left(e^{2i\pi\gamma_j^{\Lambda}}\right)_{j=1}^N$ are the 
eigenvalues of the monodromy matrix $M_{\ml{E}}(\Lambda)$ for the parallel transport around $\Lambda$.

Following one more time our previous work~\cite[Sect.~7]{DaRi17b}, we can associate families of Pollicott-Ruelle resonant states of degree $k-1$ and $k$. They are now 
indexed by $(p,j)\in\IZ\times\{1,\ldots, N\}$ and we denote them by $U_{j,p}^{\Lambda}$ and $\tilde{U}_{j,p}^{\Lambda}$. Their local form near $\Lambda$ 
is given by
$$U_{j,p}^{\Lambda}=e^{\frac{2i\pi( p+\varepsilon_{\Lambda})\theta}{\ml{P}_{\Lambda}}}
\left(P(\theta)^{-1}\right)^*\left(\delta_0(x_1,\ldots, x_{k-1})dx_1\wedge\ldots\wedge dx_{k-1}\right) \otimes \mathbf{c}_j^{\Lambda}(x,y,\theta),$$
and
$$\tilde{U}_{j,p}^{\Lambda}=e^{\frac{2i\pi (p+\varepsilon_{\Lambda})\theta}{\ml{P}_{\Lambda}}}
\left(P(\theta)^{-1}\right)^*\left(\delta_0(x_1,\ldots, x_{k-1})dx_1\wedge\ldots\wedge dx_{k-1}\right)\wedge d\theta\otimes \mathbf{c}_j^{\Lambda}(x,y,\theta).$$
For a fixed $(j,p)$, the resonant states $U_{j,p}^{\Lambda}$ and $\tilde{U}_{j,p}^{\Lambda}$ are associated with the resonance 
$$z_0=-\frac{2i\pi}{\ml{P}_{\Lambda}}\left(p+\varepsilon_{\Lambda}+\gamma_j^{\Lambda}\right),$$
and the support of $U_{j,p}^{\Lambda}$ is equal to $\overline{W^u(\Lambda)}$

\subsubsection{Resonant states on the imaginary axis}

Finally, the main result of~\cite[Sect.~7]{DaRi17b} states that the sets
$$\bigcup_{\Lambda\ \text{critical point}:\text{dim} W^s(\Lambda)=k}\left\{U_{j}^{\Lambda}:1\leq j\leq N\right\},$$
$$\bigcup_{\Lambda\ \text{closed orbit}:\text{dim} W^s(\Lambda)=k}\left\{\tilde{U}_{j,p}^{\Lambda}:1\leq j\leq N,\ p\in\IZ\right\},$$
and
$$\bigcup_{\Lambda\ \text{closed orbit}:\text{dim} W^s(\Lambda)=k+1}\left\{U_{j,n}^{\Lambda}:1\leq j\leq N,\ p\in\IZ\right\}$$
generate all the Pollicott-Ruelle resonances of degree $k$ associated with a resonance lying on the imaginary axis. Moreover, all 
these states are linearly independent. We also recall from~\cite{DaRi17b} that each of these states verifies 
$(\ml{L}_{V,\nabla}-z_0)U^{\Lambda}=0$ on $W^u(\Lambda)$ but that $U^\Lambda$ is only a generalized eigenvalue equation 
$(\ml{L}_{V,\nabla}-z_0)^mU^{\Lambda}=0$ on $M$.

\begin{rema}\label{r:trivial-kernel} 
We note that there is no resonant state of degree $k$ associated with $z_0=0$ if the Morse-Smale flow is nonsingular and if
$\Delta_{\Lambda}$ is not an eigenvalue of $M_{\ml{E}}(\Lambda)$ for every closed orbit $\Lambda$. 
\end{rema}

\subsection{Proof of Proposition~\ref{p:koszul}}\label{ss:proof-koszul} In order to prove Proposition~\ref{p:koszul}, we start by fixing some critical point $\Lambda$ of index $k$ and 
$1\leq j\leq N$. Applying the contraction operator $\iota_V$, we find that $\iota_V(U_j^{\Lambda})$ is equal to $0$ near $\Lambda$. Thus, as 
$\iota_V(U_j^{\Lambda})$ verifies $\ml{L}_{V,\nabla}\circ\iota_V(U_{j}^{\Lambda})=0$ on $W^u(\Lambda)$ and as it is supported on $\overline{W^u(\Lambda)}$, we find 
that $\iota_V(U_j^{\Lambda})$ is supported on $\overline{W^u(\Lambda)}-W^u(\Lambda)$. As $(C^{\bullet}_{V,\nabla},\iota_V)$ is an homological complex,
we know that $\iota_V(U_j^{\Lambda})$ will be a linear combination of elements inside $C_{V,\nabla}^{k-1}(0)$. According to the above 
discussion, such elements are supported either by the closure of unstable manifolds of dimension $n-k+1$ (in the case of a critical point) or
by the closure of unstable manifolds of dimension $n-k+1$ or $n-k+2$ (in the case of closed orbits). 
Yet, according to Smale~\cite[Lemma~3.1]{Sm60} -- see~\cite{DaRi17a} for a brief reminder, such unstable submanifolds cannot be contained in the closure of 
$W^u(\Lambda)$. It means that the linear combination is equal to $0$, i.e. $\iota_V(U_j^{\Lambda})=0$.

Fix now a closed orbit $\Lambda$ such that\footnote{Hence, $\text{dim}\ W^u(\Lambda)=n-k+1$.} $\text{dim}\ W^s(\Lambda)=k$ and $(j,p)$ such that $U_{j,p}^{\Lambda}$ and $\tilde{U}_{j,p}^{\Lambda}$ are associated with the resonance $0$. Let 
us first compute $\iota_V(U_{j,p}^{\Lambda})$. 
As before, we find, by using the local form of 
the resonant state and the eigenvalue equation on 
$W^u(\Lambda)$, that $\iota_V(U_{j,p}^{\Lambda})$ 
is supported on $\overline{W^u(\Lambda)}-W^u(\Lambda)$ which is  
the union of unstable manifolds whose dimension is $\leq n-k+1$
thanks to Smale's 
Lemma~\cite[Lemma~3.1]{Sm60}. 
Recall also that if $W^u(\Lambda')\subset\overline{W^u(\Lambda)}-W^u(\Lambda)$ with 
$\text{dim}\ W^u(\Lambda')=n-k+1$, then $\Lambda'$ is a closed orbit. 
As we have a homological complex, $\iota_V(U_{j,p}^{\Lambda})$ is an element in 
$C^{k-2}_{V,\nabla}(0)$. From the description of the resonant state we just gave, we can verify as above that 
this subspace is generated by critical elements associated with unstable manifolds of dimension $\geq n-k+2$. This implies that 
$\iota_V(U_{j,p}^{\Lambda})=0$. It remains to compute $\iota_V(\tilde{U}_{j,p}^{\Lambda})$ which is now an element in 
$C^{k-1}_{V,\nabla}(0)$. Contrary to the previous cases, $\iota_V(\tilde{U}_{j,p}^{\Lambda})$ is supported by $\overline{W^u(\Lambda)}$ and it 
is equal to $U_{j,p}^{\Lambda}$ near $\Lambda$. Hence, from the eigenvalue equation, $\iota_V(\tilde{U}_{j,p}^{\Lambda})-U_{j,p}^{\Lambda}$ is now supported 
by $\overline{W^u(\Lambda)}-W^u(\Lambda)$ which is a union of unstable manifold of dimension $\leq n-k+1$. Thus, using Smale's result one more time, we can conclude by similar arguments that 
$$\iota_V(\tilde{U}_{j,p}^{\Lambda})=U_{j,p}^{\Lambda}+\sum_{(\Lambda',j',p'):\Lambda'\neq\Lambda,\ \text{dim} W^s(\Lambda')=k}
\alpha_{\Lambda',j',p'}U_{j',p'}^{\Lambda'}.$$ 

In any degree, we can finally conclude that $$\text{Ker}\left(\iota_V^{(k)}\right)/\text{Ran}\left(\iota_V^{(k+1)}\right)\simeq
\text{span}\left\{U_j^{\Lambda}:\ \text{dim}\Lambda=k,\ 1\leq j\leq N\right\},$$
where the kernel and the range are understood as for operators acting on $C^{\bullet}_{V,\nabla}(0)$.

\subsection{The case $z_0\neq 0$}\label{ss:koszul-nonzero}

As for the case of the coboundary operator, we can verify that the homology of the complex $(C^{\bullet}_{V,\nabla}(z_0),\iota_V)$ is trivial for 
any $z_0\neq 0$. If we restrict to the case where $z_0\in i\IR^*$, we can in fact be slightly more precise. In fact, we should emphasize that, 
in the second part of the argument, we did not use the fact that $z_0$ is equal to $0$ and the argument is in fact valid for any 
 resonance $z_0$ lying on the imaginary axis $i\IR^*$. More precisely, for every $(\Lambda,j,p)$ such that
 $$z_0=-\frac{2i\pi(p+\varepsilon_{\Lambda}+\gamma_j^{\Lambda})}{\ml{P}_{\Lambda}},$$
we will have $\iota_V(U_{j,p}^{\Lambda})=0$ and
$$\iota_V(\tilde{U}_{j,p}^{\Lambda})=U_{j,p}^{\Lambda}+\sum_{(\Lambda',j',p'):\Lambda'\neq\Lambda,\ \text{dim} W^s(\Lambda')=k}
\alpha_{\Lambda',j',p'}U_{j',p'}^{\Lambda'}.$$ 
In particular, we have, for $z_0\in i\IR^*$ and for all $0\leq k\leq n$,
$$\text{Ker}\left(\iota_V^{(k)}\rceil_{C^k_{V,\nabla}(z_0)}\right)=\text{span}\left\{U_{j,p}^{\Lambda}:\ \text{dim} W^s(\Lambda)=k+1\ 
\text{and}\ z_0=-\frac{2i\pi}{\ml{P}_{\Lambda}}\left(p+\varepsilon_{\Lambda}+\gamma_j^{\Lambda}\right)\right\}.$$

\section{Torsion of Pollicott-Ruelle resonant states}\label{s:torsion}

In this section, we will discuss the topological properties of the resonant states lying on the imaginary axis together with the proof of Theorem~\ref{t:fried0} 
which is related to the spectral zeta functions associated with the Pollicott-Ruelle resonances on the imaginary axis.

%, together
%with the following corollaries 
%of Theorem~\ref{t:fried0}. 
%We use the notations 
%from section \ref{s:mainresults}. 
%\begin{coro}\label{c:fried} Suppose that the assumptions of Theorem~\ref{t:fried0} are satisfied. Then,
%all the poles and zeros of $Z_{V,\nabla}(z)$ are contained in 
%$$\cup_{k=0}^n\left(\ml{R}_{k}(V,\nabla)\cap i\IR\right).$$
%More precisely, near $z_0\in i\IR$, one has 
%$$Z_{V,\nabla}(z)=\left(z-z_0\right)^{\sum_{k=0}^n(-1)^{n-k+1}\operatorname{dim} 
%\left(C_{V,\nabla}^k(z_0)\cap\operatorname{Ker}(\iota_V)\right)} h_{z_0}(z),$$
%where $h_{z_0}$ is an holomorphic function satisfying $h_{z_0}(z_0)\neq 0$.
%\end{coro}

In the present section, we first present 
a new definition of
torsion for the infinite dimensional chain complex
of Pollicott--Ruelle resonances states corresponding
by the spectrum on the imaginary axis which is directly related
to Theorem~\ref{t:fried0}. 
 More specifically, we know from paragraph~\ref{ss:chain-homotopy} that, 
for every $z_0\in i\IR^*$,
$$0\xrightarrow{d^{\nabla}} C_{V,\nabla}^{0}(z_0)\xrightarrow{d^{\nabla}} 
C_{V,\nabla}^{1}(z_0)\xrightarrow{d^{\nabla}} \ldots
\xrightarrow{d^{\nabla}} C_{V,\nabla}^{n}(z_0)\xrightarrow{d^{\nabla}} 0.$$
defines an acyclic cohomological complex. Moreover, we constructed in~\cite{DaRi17b} a natural basis 
for this complex -- see 
paragraph~\ref{sss:local-coordinates-closed-orbit} for a brief reminder. Recall now that, as soon
as we are given a cohomological 
complex with a preferred basis, we can compute its torsion which is just a certain determinant associated with $d^{\nabla}$ and computed 
in this basis. Hence, we will first 
compute the torsion of each individual complex $(C_{V,\nabla}^\bullet(z_0),d^{\nabla})$ for a purely imaginary resonance $z_0$ 
in the basis of 
paragraph~\ref{sss:local-coordinates-closed-orbit}. 
Then, mimicking standard procedures used in spectral theory and QFT, we will define 
the \textbf{torsion of the infinite dimensional}
complex $\bigoplus_{z_0\in i\mathbb{R}} (C_{V,\nabla}^\bullet(z_0),d^{\nabla}) $ as the
zeta regularized infinite product of the individual torsions in order 
to compute the torsion of the complex carried by the 
entire imaginary axis. Finally, we will prove Theorem~\ref{t:fried0} which is in fact related to the above regularized torsion.

From this point on, we will suppose that $V$ is a Morse-Smale vector field which is $\ml{C}^{\infty}$-diagonalizable 
and that $\nabla$ preserves a Hermitian structure on $\ml{E}$.

\subsection{Torsion of individual complexes $(C_{V,\nabla}^{\bullet}(z_0),d^{\nabla})$} As was already pointed out, the finite dimensional 
complex $(C_{V,\nabla}^{\bullet}(z_0),d^{\nabla})$ is acyclic as soon as $z_0\neq 0$. Recall also from 
paragraph~\ref{sss:local-coordinates-closed-orbit} that it has a preferred basis which is associated to the unstable 
manifolds of the closed orbits of the vector field $V$. Thus, given $z_0\neq 0$ and its preferred basis of 
Pollicott-Ruelle resonant states, it is natural to compute the torsion of the complex 
$(C_{V,\nabla}^{\bullet}(z_0),d^{\nabla})$~\cite{F87, Mn14} and we shall start with this preliminary calculation when $z_0\in i\IR^*$.

Using the conventions of paragraph~\ref{sss:local-coordinates-closed-orbit}, we have, for every $0\leq k\leq n$,
$$\bigoplus_{k=0}^nC^k_{V,\nabla}(z_0)=\text{span}\left\{U_{j,p}^{\Lambda},\tilde{U}_{j,p}^{\Lambda}:\ \Lambda\ \text{closed orbit and}\ 
z_0=-\frac{2i\pi}{\ml{P}_{\Lambda}}\left(p+\varepsilon_{\Lambda}+\gamma_j^{\Lambda}\right)\right\},$$
which can be splitted as the direct sum
$$\bigoplus_{k=0}^nC^k_{V,\nabla}(z_0)=C^{\text{even}}_{V,\nabla}(z_0)\oplus C^{\text{odd}}_{V,\nabla}(z_0).$$
Recall from remark~\ref{r:chain-contraction-map} that $R$ is a chain contraction map for the complex $(C_{V,\nabla}^{\bullet}(z_0),d^{\nabla})$. 
Then, according to~\cite[p.~30]{F87}, the torsion of this finite dimensional complex (with respect to its preferred basis) is equal to
\begin{equation}\label{e:torsion-finite-dim}\tau\left(C_{V,\nabla}^{\bullet}(z_0),d^{\nabla}\right)
=\left|\det\left(d^{\nabla}+R\right)_{C^{\text{even}}\rightarrow C^{\text{odd}}}\right|,
\end{equation}
where the determinant is computed in the basis $(U_{j,p}^{\Lambda},\tilde{U}_{j,p}^{\Lambda})_{j,p,\Lambda}$. In order to make this computation, we will 
to prove a preliminary result:
\begin{lemm}\label{l:support} Let $(\Lambda,j,p)$ be such that
 $$z_0=-\frac{2i\pi}{\ml{P}_{\Lambda}}\left(p+\varepsilon_{\Lambda}+\gamma_j^{\Lambda}\right).$$
Then, the supports of $\ml{L}_{V,\nabla}^{-1}(U_{j,p}^{\Lambda})$ and $\ml{L}_{V,\nabla}^{-1}(\tilde{U}_{j,p}^{\Lambda})$ are equal to 
$\overline{W^u(\Lambda)}$.
\end{lemm}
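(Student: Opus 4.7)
The plan is to prove the two inclusions $\supp(\ml{L}_{V,\nabla}^{-1} u) \subseteq \overline{W^u(\Lambda)}$ and $\overline{W^u(\Lambda)} \subseteq \supp(\ml{L}_{V,\nabla}^{-1} u)$ separately, writing $u$ for either resonant state $U_{j,p}^\Lambda$ or $\tilde U_{j,p}^\Lambda$.

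For the upper inclusion, the key observation is that on the finite dimensional invariant subspace $C_{V,\nabla}^\bullet(z_0)$ the operator $\ml{L}_{V,\nabla}^{-1}$ is a polynomial in a differential operator, hence a differential operator itself. Indeed, one writes $\ml{L}_{V,\nabla}\rest_{C_{V,\nabla}^\bullet(z_0)} = -z_0\,\operatorname{Id} + N$ with $N=(\ml{L}_{V,\nabla}+z_0)\rest_{C_{V,\nabla}^\bullet(z_0)}$ nilpotent of index at most $m_k(z_0)$. Since $z_0\neq 0$, one has the finite Neumann expansion
\[
\ml{L}_{V,\nabla}^{-1}\rest_{C_{V,\nabla}^\bullet(z_0)} \;=\; -\frac{1}{z_0}\sum_{\ell=0}^{m_k(z_0)-1}\frac{N^\ell}{z_0^\ell}.
\]
Since differential operators cannot enlarge supports and $\supp(u)=\overline{W^u(\Lambda)}$ by paragraph~\ref{sss:local-coordinates-closed-orbit}, the first inclusion follows.

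For the reverse inclusion, the crucial input is the strengthened eigenvalue equation recalled at the end of paragraph~\ref{sss:local-coordinates-closed-orbit}: $u$ is a \emph{genuine} (not merely generalized) eigenvector on the open set $W^u(\Lambda)$, so that $Nu = (\ml{L}_{V,\nabla}+z_0)u$ is supported inside $\overline{W^u(\Lambda)}\setminus W^u(\Lambda)$. By Smale's Lemma~\cite[Lemma~3.1]{Sm60}, this boundary is a finite union of unstable manifolds of strictly smaller dimension; hence $\Omega := M\setminus(\overline{W^u(\Lambda)}\setminus W^u(\Lambda))$ is an open neighborhood of $W^u(\Lambda)$. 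Restricting the Neumann expansion to $\Omega$, every term with $\ell\geq 1$ vanishes and one is left with the identity of distributions
\[
\ml{L}_{V,\nabla}^{-1}u\rest_\Omega \;=\; -\frac{1}{z_0}\,u\rest_\Omega.
\]
Since $z_0\neq 0$, this gives $\supp(\ml{L}_{V,\nabla}^{-1}u)\cap\Omega = \supp(u)\cap\Omega = W^u(\Lambda)$, and taking closures in $M$ yields the desired inclusion $\overline{W^u(\Lambda)}\subseteq\supp(\ml{L}_{V,\nabla}^{-1}u)$.

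The main delicate point is verifying that $u\rest_\Omega$ has support \emph{all} of $W^u(\Lambda)$, not merely a proper subset. Near $\Lambda$ itself this is explicit from the normal forms of paragraph~\ref{sss:local-coordinates-closed-orbit}: the $\delta$-factor along the stable directions, the unitary exponential in $\theta$, and the nowhere-vanishing local frame $\mathbf{c}_j^\Lambda$ together force the support to fill $W^u(\Lambda)$ in a tubular neighborhood of $\Lambda$. At a general point $\varphi^t(m_0)\in W^u(\Lambda)$, the true eigenvalue equation on $W^u(\Lambda)$ (equivalently $e^{tz_0}\Phi_k^{-t*}u=u$ there) propagates this local form along the flow and keeps the support full, a fact that is already implicit in the construction carried out in~\cite{DaRi17b}.
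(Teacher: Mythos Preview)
Your proof is correct and takes a genuinely different route from the paper's. The paper argues by induction on Smale's partial order: it expands $\ml{L}_{V,\nabla}U_{j,p}^{\Lambda}$ (and $\ml{L}_{V,\nabla}\tilde U_{j,p}^{\Lambda}$) in the basis of resonant states, observes the diagonal coefficient is $-z_0$, applies $\ml{L}_{V,\nabla}^{-1}$ to both sides, and reduces the support question to the analogous one for strictly smaller $\Lambda'\lneqq\Lambda$; for minimal $\Lambda'$ there is no remainder and the induction starts. You instead exploit directly that on $C_{V,\nabla}^{\bullet}(z_0)$ one has the finite Neumann expansion $\ml{L}_{V,\nabla}^{-1}=-z_0^{-1}\sum_{\ell\geq 0}z_0^{-\ell}N^{\ell}$ with $N=\ml{L}_{V,\nabla}+z_0$ a differential operator (hence support--nonincreasing), which immediately gives the upper inclusion; and since $Nu$ vanishes on $W^u(\Lambda)$ (genuine eigenvalue equation there), all higher terms vanish on the open set $\Omega$, giving $\ml{L}_{V,\nabla}^{-1}u=-z_0^{-1}u$ on $\Omega$ and hence the lower inclusion. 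Your argument is shorter and avoids expanding in the resonant basis or invoking the induction; the paper's approach, by contrast, makes the triangular structure with respect to Smale's order explicit, which is what is actually used later in the torsion computation.

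Two minor points worth tightening. First, to conclude that $\Omega$ is open you need $\overline{W^u(\Lambda)}\setminus W^u(\Lambda)$ to be \emph{closed}; this does not follow merely from it being a finite union of unstable manifolds, but from Smale's theorem and the partial order: each $W^u(\Lambda')$ with $\Lambda'\lneqq\Lambda$ has closure contained in $\bigcup_{\Lambda''\lneqq\Lambda}W^u(\Lambda'')$, so the union is already closed. Second, your final paragraph is unnecessary: once $\supp(u)=\overline{W^u(\Lambda)}$ is granted (as stated in paragraph~\ref{sss:local-coordinates-closed-orbit}), the identity $\supp(u)\cap\Omega=W^u(\Lambda)$ is immediate and no further propagation argument is needed.
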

\begin{proof} Recall from paragraph~\ref{sss:local-coordinates-closed-orbit} that the supports of $U_{j,p}^{\Lambda}$ and 
$\tilde{U}_{j,p}^{\Lambda}$ are equal to $\overline{W^u(\Lambda)}$.
According to Smale's Theorem~\ref{t:smale}, $\overline{W^u(\Lambda)}$ is the union of 
unstable manifolds $W^u(\Lambda')$ with $\text{dim} W^u(\Lambda')\leq \text{dim} W^u(\Lambda)$. As $\ml{L}_{V,\nabla}^{(k)}$ 
preserves $C^k_{V,\nabla}(z_0)$ and as the support of $\ml{L}_{V,\nabla}U_{j,p}^{\Lambda}$ is contained in $\overline{W^u(\Lambda)}$, we can write
$$\ml{L}_{V,\nabla}U_{j,p}^{\Lambda} = \alpha_{j,p,\Lambda}U_{j,p}^{\Lambda}
+\sum_{\Lambda'\leqq\Lambda, j',p':\text{dim} W^s(\Lambda')=k}\alpha_{\Lambda',j',p'}U_{\Lambda',j',p'},$$
where $\leqq$ is the partial order relation of Smale's Theorem~\ref{t:smale} and where $\alpha_*$ are complex numbers. Similarly,
 \begin{eqnarray*}\ml{L}_{V,\nabla}(\tilde{U}_{j,p}^{\Lambda}) &= &\tilde{\alpha}_{j,p,\Lambda}\tilde{U}_{j,p}^{\Lambda}
+\sum_{\Lambda'\leqq\Lambda, j',p':\text{dim} W^s(\Lambda')=k}\tilde{\alpha}_{\Lambda',j',p'}\tilde{U}^{\Lambda'}_{j',p'}\\
& &+\sum_{\Lambda'\leqq\Lambda, j',p':\text{dim} W^s(\Lambda')=k+1}\tilde{\alpha}_{\Lambda',j',p'}U^{\Lambda'}_{j',p'}, 
\end{eqnarray*}
 where $\alpha_{*}$ and $\tilde{\alpha}_*$ are complex numbers. From the explicit expressions 
 of $U_{j,p}^{\Lambda}$ and $\tilde{U}_{j,p}^{\Lambda}$, 
 we can already observe that 
 $\alpha_{j,p,\Lambda}=\tilde{\alpha}_{j,p,\Lambda}=-z_0$. As $z_0\neq 0$, we can also apply $\ml{L}_{V,\nabla}^{-1}$ to both sides of these equalities and we find that
$$U_{j,p}^{\Lambda} = -z_0 \ml{L}_{V,\nabla}^{-1}(U_{j,p}^{\Lambda})
+\sum_{\Lambda'\leqq\Lambda, j',p':\text{dim} W^s(\Lambda')=k}\alpha_{\Lambda',j',p'}\ml{L}_{V,\nabla}^{-1}(U^{\Lambda'}_{j',p'}),$$
 and
 \begin{eqnarray*}\tilde{U}_{j,p}^{\Lambda} &= &-z_0\ml{L}_{V,\nabla}^{-1}(\tilde{U}_{j,p}^{\Lambda})
+\sum_{\Lambda'\leqq\Lambda, j',p':\text{dim} W^s(\Lambda')=k}\tilde{\alpha}_{\Lambda',j',p'}\ml{L}_{V,\nabla}^{-1}(\tilde{U}^{\Lambda'}_{j',p'})\\
& &+\sum_{\Lambda'\leqq\Lambda, j',p':\text{dim} W^s(\Lambda')=k+1}\tilde{\alpha}_{\Lambda',j',p'}\ml{L}_{V,\nabla}^{-1}(U^{\Lambda'}_{j',p'}). 
\end{eqnarray*}
Thus, determining the support of $\ml{L}_{V,\nabla}^{-1}(U_{j,p}^{\Lambda})$ and of $\ml{L}_{V,\nabla}^{-1}(\tilde{U}_{j,p}^{\Lambda})$ 
follows from the fact that we are able to compute the support of 
their analogues for $\Lambda'\leqq\Lambda$ with $\text{dim} W^s(\Lambda')=k$ or $k+1$. Recall that the partial order relation $\leqq$ is defined 
in Theorem~\ref{t:smale} and that all the sums are implicitely over indices $(\Lambda',j',p')$ such that
\begin{equation}\label{e:relation-z0}z_0=-\frac{2i\pi}{\ml{P}_{\Lambda'}}\left(p'+\varepsilon_{\Lambda'}+\gamma_{j'}^{\Lambda'}\right).\end{equation}
If we consider the smallest such $\Lambda'$ and if we are able to prove that the corresponding supports are equal to $\overline{W^u(\Lambda')}$, 
then the lemma will follow from an induction argument 
on $\text{dim} W^u(\Lambda')$. Fix now $\Lambda'\leqq \Lambda$ which is minimal and $(j',p')$ such that~\eqref{e:relation-z0} is satisfied. 
As $\Lambda'$ is minimal, we can decompose 
$\ml{L}_{V,\nabla}U_{j',p'}^{\Lambda'}$ and $\ml{L}_{V,\nabla}\tilde{U}_{j',p'}^{\Lambda'}$ but there is now no remainder term, i.e. one has
$$\ml{L}_{V,\nabla}U_{j',p'}^{\Lambda'}=-z_0U_{j',p'}^{\Lambda'}\ \text{and}\ \ml{L}_{V,\nabla}\tilde{U}_{j',p'}^{\Lambda'}=-z_0\tilde{U}_{j',p'}^{\Lambda'}.$$
Applying $\ml{L}_{V,\nabla}^{-1}$ and using the fact $z_0\neq 0$, we can conclude that the supports of $\ml{L}_{V,\nabla}^{-1}U_{j',p'}^{\Lambda'}$ and 
$\ml{L}_{V,\nabla}^{-1}\tilde{U}_{j',p'}^{\Lambda'}$ are equal to $\overline{W^u(\Lambda')}$. By induction, we can then recover that the supports of 
$\ml{L}_{V,\nabla}^{-1}(U_{j,p}^{\Lambda})$ and of $\ml{L}_{V,\nabla}^{-1}(\tilde{U}_{j,p}^{\Lambda})$ are equal to $\overline{W^u(\Lambda)}$.
\end{proof}

We can now compute the torsion $(C_{V,\nabla}^{\bullet}(z_0),d^{\nabla})$ using formula~\eqref{e:torsion-finite-dim}. For that purpose, we 
fix $\Lambda$ to be a closed orbit of the flow such that $\text{dim}\ W^s(\Lambda)=k$. Suppose also that $(j,p)$ verifies the equality
$$z_0=-\frac{2i\pi}{\ml{P}_{\Lambda}}\left(p+\varepsilon_{\Lambda}+\gamma_j^{\Lambda}\right).$$
On the one hand, if $k$ is even, we would like to express $(d^{\nabla}+R)(\tilde{U}^{\Lambda}_{j,p})$ in the preferred basis of 
$C^{\text{odd}}_{V,\nabla}(z_0)$. 
On the other hand, if $k$ is odd, we are interested in the expression of $(d^{\nabla}+R)(U^{\Lambda}_{j,p})$. 
In order to compute these determinants, 
it will be convenient to order these basis which are indexed by $(\Lambda,j,p)$ 
according to Smale's partial order on the unstable manifolds. 
More precisely, we will order them in such a way that $(\Lambda',j',p')$ is less than $(\Lambda,j,p)$ whenever $\Lambda'\leqq\Lambda$.

\subsubsection{The case $k$ even} Let us start with 
the case $k\equiv 0\ \text{mod}\ 2$. Recall that near $\Lambda$, $\tilde{U}^{\Lambda}_{j,p}$ is of the form
$$\tilde{U}_{j,p}^{\Lambda}=e^{\frac{2i\pi (p+\varepsilon_{\Lambda})\theta}{\ml{P}_{\Lambda}}}
\left(P(\theta)^{-1}\right)^*\left(\delta_0(x_1,\ldots, x_{k-1})dx_1\wedge\ldots\wedge dx_{k-1}\right)\wedge d\theta \otimes \mathbf{c}_j^{\Lambda}(x,y,\theta).$$
First, from this expression, we can verify that $d^{\nabla}\tilde{U}_{j,p}^{\Lambda}$ is equal to $0$ in a neighborhood of $\Lambda$. Moreover, as 
it satisfies $\ml{L}_{V,\nabla}d^{\nabla}\tilde{U}_{j,p}^{\Lambda}$ is equal to $0$ on $W^u(\Lambda)$ and as $d^{\nabla}\tilde{U}_{j,p}^{\Lambda}$ 
is supported on $\overline{W^u(\Lambda)}$, we can deduce that $d^{\nabla}\tilde{U}_{j,p}^{\Lambda}$ is supported on 
$\overline{W^u(\Lambda)}-W^u(\Lambda)$. From Smale's Theorem~\cite{Sm60} and from the fact that $(C_{V,\nabla}^{\bullet}(z_0),d^{\nabla})$ 
is a cohomological complex, we can deduce that
$$d^{\nabla}\tilde{U}_{j,p}^{\Lambda}=\sum_{\Lambda'\leqq\Lambda, j',p':\text{dim} W^s(\Lambda')=k+2}\alpha_{\Lambda',j',p'}U^{\Lambda'}_{j',n'}
+\sum_{\Lambda'\leqq\Lambda, j',p':\text{dim} W^s(\Lambda')=k+1}\alpha_{\Lambda',j',p'}\tilde{U}^{\Lambda'}_{j',p'}.$$
where we only sum over the $(\Lambda',j',p')$ satisfying $$z_0=-\frac{2i\pi}{\ml{P}_{\Lambda'}}\left(p'+\varepsilon_{\Lambda'}+\gamma_{j'}^{\Lambda'}
\right).$$
Now, we can compute $R\tilde{U}_{j,p}^{\Lambda}$. First, we write that $\ml{L}_{V,\nabla}(\tilde{U}_{j,p}^{\Lambda})$ belongs to 
$C^k_{V,\nabla}(z_0)$ and that its support is contained in $\overline{W^u(\Lambda)}$. Hence, one can write
\begin{eqnarray*}\ml{L}_{V,\nabla}(\tilde{U}_{j,p}^{\Lambda}) &= &\alpha_{j,n,\Lambda}\tilde{U}_{j,p}^{\Lambda}
+\sum_{\Lambda'\leqq\Lambda, j',p':\text{dim} W^s(\Lambda')=k}\alpha_{\Lambda',j',p'}\tilde{U}^{\Lambda'}_{j',n'}\\
& &+\sum_{\Lambda'\leqq\Lambda, j',p':\text{dim} W^s(\Lambda')=k+1}\alpha_{\Lambda',j',p'}U^{\Lambda'}_{j',p'}. 
\end{eqnarray*}
From the expression of $\tilde{U}_{j,p}^{\Lambda}$ near $\Lambda$, we know that $\alpha_{\Lambda,j,p}=-z_0$. Hence, we find
\begin{eqnarray*}\ml{L}_{V,\nabla}^{-1}(\tilde{U}_{j,p}^{\Lambda}) &= &-\frac{1}{z_0}\tilde{U}_{j,p}^{\Lambda}
-\sum_{\Lambda'\leqq\Lambda, j',p':\text{dim} W^s(\Lambda')=k}\alpha_{\Lambda',j',p'}\ml{L}_{V,\nabla}^{-1}\tilde{U}^{\Lambda'}{j',p'}\\
& &-\sum_{\Lambda'\leqq\Lambda, j',p':\text{dim} W^s(\Lambda')=k+1}\alpha_{\Lambda',j',p'}\ml{L}_{V,\nabla}^{-1}U^{\Lambda'}_{j',p'}. 
\end{eqnarray*}
Recall now that $R=\iota_V\ml{L}_{V,\nabla}^{-1}$ when it acts on $C^{\bullet}_{V,\nabla}(z_0)$. Hence, one has
\begin{eqnarray*}R(\tilde{U}_{j,p}^{\Lambda}) &= &-\frac{1}{z_0}\iota_V(\tilde{U}_{j,p}^{\Lambda})
-\sum_{\Lambda'\leqq\Lambda, j',p':\text{dim} W^s(\Lambda')=k}\alpha_{\Lambda',j',p'}\iota_V\left(\ml{L}_{V,\nabla}^{-1}\tilde{U}^{\Lambda'}_{j',p'}\right)\\
& &-\sum_{\Lambda'\leqq\Lambda, j',p':\text{dim} W^s(\Lambda')=k+1}\alpha_{\Lambda',j',p'}\iota_V\left(\ml{L}_{V,\nabla}^{-1}U^{\Lambda'}_{j',p'}\right). 
\end{eqnarray*}
Arguing as in the proof of Proposition~\ref{p:koszul} -- see paragraph~\ref{ss:koszul-nonzero}, we can verify 
that $\iota_V(\tilde{U}_{j,p}^{\Lambda})$ is equal to $U_{j,p}^{\Lambda}$ plus some remainder term which 
is carried on $\overline{W^u(\Lambda)}-W^u(\Lambda)$. Hence to summarize, one finds that
\begin{equation}\label{e:torsion-even}(d^{\nabla}+R)(\tilde{U}_{j,p}^{\Lambda})=-\frac{1}{z_0}U_{j,p}^{\Lambda}+T_{j,p}^{\Lambda},\end{equation}
with $T_{j,p}^{\Lambda}$ belonging to $C^{\text{odd}}_{V,\nabla}(z_0)$ and supported in $\overline{W^u(\Lambda)}-W^u(\Lambda)$.

\subsubsection{The case $k$ odd} Suppose now that $k\equiv 1\ \text{mod}\ 2$. Near $\Lambda$, $U^{\Lambda}_{j,p}$ is of the form
$$U_{j,p}^{\Lambda}=e^{\frac{2i\pi (p+\varepsilon_{\Lambda})\theta}{\ml{P}_{\Lambda}}}
\left(P(\theta)^{-1}\right)^*\left(\delta_0(x_1,\ldots, x_{k-1})dx_1\wedge\ldots\wedge dx_{k-1}\right) \otimes \mathbf{c}_j^{\Lambda}(x,y,\theta).$$
From paragraph~\ref{ss:koszul-nonzero}, we know that $\iota_V(U_{j,p}^{\Lambda})=0$. Hence, from the definition of $R$, one finds that 
$(d^{\nabla}+R)(U_{j,p}^{\Lambda})=d^{\nabla}(U_{j,p}^{\Lambda})$. Applying $d^{\nabla}$ to $U_{j,p}^{\Lambda}$, we find that, in a neighborhood of $\Lambda$, it is equal 
to $z_0 \tilde{U}_{j,p}^{\Lambda}$ (recall from paragraph~\ref{sss:local-coordinates-closed-orbit} that $\nabla \mathbf{c}_j^{\Lambda}=\frac{2i\pi\gamma_j^{\Lambda}}{\ml{P}_{\Lambda}}\mathbf{c}_j^{\Lambda}$). 
The current $d^{\nabla}(U_{j,p}^{\Lambda})-z_0\tilde{U}_{j,p}^{\Lambda}$ belongs to $C^{k}_{V,\nabla}(z_0)$, it is supported inside 
$\overline{W^u(\Lambda)}$ and it identically vanishes near $\Lambda$. Hence, by propagation, it is supported inside $\overline{W^u(\Lambda)}-W^u(\Lambda)$. Equivalently, one has
\begin{equation}\label{e:torsion-odd}(d^{\nabla}+R)(U_{j,p}^{\Lambda})= z_0 \tilde{U}_{j,p}^{\Lambda}+T_{j,p}^{\Lambda},\end{equation}
with $T_{j,p}^{\Lambda}$ belonging to $C^{\text{odd}}_{V,\nabla}(z_0)$ and supported in $\overline{W^u(\Lambda)}-W^u(\Lambda)$.

\subsubsection{Conclusion}

Combining~\eqref{e:torsion-finite-dim} with~\eqref{e:torsion-even} and~\ref{e:torsion-odd}, one finally finds the following expression for the torsion:
\begin{equation}\label{e:torsion-z0}\tau\left(C_{V,\nabla}^{\bullet}(z_0),d^{\nabla}\right)
=\prod_{(\Lambda,j,p):(*)}\left|\frac{2\pi(n+\varepsilon_{\Lambda}+\gamma_j^{\Lambda})}{\ml{P}_{\Lambda}}\right|^{(-1)^{n+\text{dim}\ W^u(\Lambda)}},
\end{equation}
where $(*)$ means that we take the product over the triples $(\Lambda,j,p)$ satisfying
$$z_0=-\frac{2i\pi}{\ml{P}_{\Lambda}}\left(p+\varepsilon_{\Lambda}+\gamma_j^{\Lambda}\right).$$
In other words, the torsion $\tau\left(C_{V,\nabla}^{\bullet}(z_0),d^{\nabla}\right)$ is of the form $|z_0|^{(-1)^nm(z_0)}$ with 
$$m(z_0)=\sum_{\Lambda\ \text{closed orbit}}(-1)^{n+\text{dim}\ W^u(\Lambda)}\left|\left\{(p,j):z_0=-\frac{2i\pi}{\ml{P}_{\Lambda}}\left(p+\varepsilon_{\Lambda}+\gamma_j^{\Lambda}\right)\right\}\right|.$$
In a more spectral manner, we have, using the results of paragraphs~\ref{ss:koszul-nonzero},
\begin{equation}\label{e:torsion-subspace}\ln\ \tau\left(C_{V,\nabla}^{\bullet}(z_0),d^{\nabla}\right)=\ln |z_0| \sum_{k=0}^n(-1)^{k+1}\text{dim}\left( C^k_{V,\nabla}(z_0)\cap\text{Ker}(\iota_V)\right).
\end{equation}
Observe that, if we set
\begin{eqnarray*}\zeta_{z_0}(s) &:= &\sum_{(\Lambda,j,p):(*)}(-1)^{n+\text{dim}\ W^u(\Lambda)}
\left|\frac{2\pi(p+\varepsilon_{\Lambda}+\gamma_j^{\Lambda})}{\ml{P}_{\Lambda}}\right|^{-s}\\ 
& = &\frac{1}{|z_0|^s}\sum_{k=0}^n(-1)^{k+1}\text{dim}\left( C^k_{V,\nabla}(z_0)\cap\text{Ker}(\iota_V)\right),\end{eqnarray*}
then one easily finds that
\begin{equation}\label{e:zeta-trivial}\tau\left(C_{V,\nabla}^{\bullet}(z_0),d^{\nabla}\right)=e^{-\zeta_{z_0}'(0)},\end{equation}
which motivates the upcoming definitions.

\subsection{Another zeta function associated to the flow.}\label{ss:zeta-torsion} For every $0\leq k\leq n$, we define the infinite dimensional vector space of Pollicott-Ruelle resonant states:
$$C^{k}_{V,\nabla}(i\IR^*)=\bigoplus_{z_0\in \ml{R}_k(V,\nabla)\cap i\IR^*}C^k_{V,\nabla}(z_0).$$
This induces an infinite dimensional complex $(C^{\bullet}_{V,\nabla}(i\IR^*),d^{\nabla})$ and we define the following zeta function:
\begin{equation}\label{e:zeta-dynamic}\zeta_{V,\nabla}(s):=\sum_{\Lambda,j,p}(-1)^{n+\text{dim}\ W^u(\Lambda)}\left|\frac{\ml{P}_{\Lambda}}{2\pi(p+\varepsilon_{\Lambda}+\gamma_j^{\Lambda})}\right|^{s},\end{equation}
where the sum runs over the closed orbits $\Lambda$ of the flow, $1\leq j\leq N$ and $p\in\IZ$ with the assumption that 
$p+\varepsilon_{\Lambda}+\gamma_j^{\Lambda}\neq 0$. In a more spectral manner, this can be written
$$\zeta_{V,\nabla}(s)=\sum_{k=0}^n(-1)^{k+1}\sum_{z_0\in\ml{R}_k(V,\nabla)\cap i\IR^*}\frac{1}{|z_0|^s}\text{dim}\left( C^k_{V,\nabla}(z_0)\cap\text{Ker}(\iota_V)\right).$$
We shall explain below by classical arguments from Hodge theory~\cite{Mn14} that, up to the modulus, this zeta function is related to the spectral zeta function $\zeta_{RS}$ 
from corollary~\ref{c:Spectralzetaruelle}. Yet, we emphasize that \emph{it appears here really from 
the computation of the torsion of an acyclic complex in a preferred basis} and not just as a reproduction of Ray-Singer's definition for the Laplace operator. In any case, 
this function is well defined 
for $\text{Re}(s)>1$ and we aim at describing its meromorphic extension to $\IC$. In particular, motivated by~\eqref{e:zeta-trivial}, 
we would like to define the torsion of the cohomological complex $(C^{\bullet}_{V,\nabla}(i\IR^*),d^{\nabla})$
as $e^{-\zeta_{V,\nabla}'(0)}$ provided that it makes sense.

%Before doing that, we recall that, 
%compared with Hodge theory, the operator $-\ml{L}_{V,\nabla}$ is \text{nonselfadjoint} and thus, there was something slightly arbitrary in 
%considering this operator instead of its dual. According to section~\ref{s:spectral}, the dual operator consists just in replacing 
%$V$ by $-V$ and in acting on the dual Sobolev space -- see also our discussion on Poincar\'e duality for related issue. In order to remove this 
%assymmetry, we define
%\begin{equation}\label{e:average-zeta}
% \zeta_{V,\nabla}(s):=\frac{1}{2}\left(\tilde{\zeta}_{V,\nabla}(s)+\tilde{\zeta}_{-V,\nabla}(s)\right).
%\end{equation}
%This can be rewritten as
%$$\zeta_{V,\nabla}(s)=
%\frac{1}{2}\sum_{\Lambda,j,p}\left((-1)^{n+\text{dim}\ W^u(\Lambda)}+(-1)^{n+\text{dim}\ W^s(\Lambda)}\right)
%\left|\frac{1}{2\pi(p+\varepsilon_{\Lambda}+\gamma_j^{\Lambda})}\right|^{s}.$$
%Note that this quantity identically vanishes if $n$ is even (as $\text{dim}\ W^u(\Lambda)+\text{dim}\ W^s(\Lambda)=n+1$) and that, for $n$ odd, 
%one has $\zeta_{V,\nabla}(s)=\tilde{\zeta}_{V,\nabla}(s)$.

Let us now study the meromorphic continuation of the zeta functions we have just defined. For that purpose, we write
$$\zeta_{V,\nabla}(s)=\sum_{\Lambda,1\leq j\leq N}(-1)^{n+\text{dim}\ W^u(\Lambda)}\left(\frac{1}{2\pi}\right)^s
\sum_{p\in\IZ:\ p+\varepsilon_{\Lambda}+\gamma_j^{\Lambda}\neq 0}\left|\frac{\ml{P}_{\Lambda}}{(p+\varepsilon_{\Lambda}+\gamma_j^{\Lambda})}\right|^{s}.$$
Hence, equivalently, it amounts to understand the meromorphic continuation of
$$\xi_{j,\Lambda}(s):=\sum_{p\in\IZ:\ p+\varepsilon_{\Lambda}+\gamma_j^{\Lambda}\neq 0}\left|\frac{1}{(p+\varepsilon_{\Lambda}+\gamma_j^{\Lambda})}\right|^{s},$$
for every closed orbit and for every $1\leq j\leq N$. This can in fact easily be rewritten in terms of Hurwitz zeta functions and of Riemann zeta functions. Recall that 
$\gamma_j^{\Lambda}\in[0,1)$ and that $\varepsilon_{\Lambda}\in\{0,1/2\}$ (the value depending on the orientability of $W^u(\Lambda)$). In particular, there exists a unique 
$q_j^{\Lambda}\in(0,1]$ such that $\varepsilon_{\Lambda}+\gamma_j^{\Lambda}$ is equal to $q_j^{\Lambda}$ modulo $1$. Introduce now the so-called Hurwitz zeta 
function~\cite[Ch.~12]{Ap76}:
$$\forall q\in (0,1],\ \zeta(s,q):=\sum_{n=0}^{+\infty}\frac{1}{(n+q)^s}.$$
Hence, for $q_j^{\Lambda}\neq 1$, one has
$$\xi_{j,\Lambda}(s)=\zeta(s,q_j^{\Lambda})+\zeta(s,1-q_j^{\Lambda}),$$
while, for $q_j^{\Lambda}= 1$, 
$$\xi_{j,\Lambda}(s)=2\zeta(s,1),$$
which is nothing else than twice the Riemann zeta function. From~\cite[Th.~12.4-5]{Ap76}, we can conclude that, for every $1\leq j\leq N$ and for every 
closed orbit $\Lambda$, $\xi_{j,\Lambda}(s)$ has a meromorphic extension to $\IC$ which is analytic except for a simple pole at $s=1$ (with residue $2$). In particular, 
$\zeta_{V,\nabla}(s)$ extends meromorphically to $\IC$ with a unique pole at $s=1$ which is simple and whose residue is given by 
$$\frac{N}{\pi}\sum_{\Lambda\ \text{closed orbit}}(-1)^{n+\text{dim}\ W^u(\Lambda)}\ml{P}_{\Lambda}.$$

\subsection{The regularized torsion of the infinite dimensional complex $ (C^{\bullet}_{V,\nabla}(i\IR^*),d^{\nabla})$.}
\label{ss:reg-torsion} 
We define the \textbf{regularized torsion} of the infinite dimensional complex 
$(C^{\bullet}_{V,\nabla}(i\IR^*),d^{\nabla})$ as follows:
$$T\left(C^{\bullet}_{V,\nabla}(i\IR^*),d^{\nabla}\right):=e^{-\zeta_{V,\nabla}'(0)},$$
which depends implicitely on the choice of the basis from paragraph~\ref{sss:local-coordinates-closed-orbit}. Let us compute the contribution coming from 
the derivative of the zeta function. We have
$$\zeta_{V,\nabla}'(s)=(-1)^n\sum_{\Lambda,1\leq j\leq N}(-1)^{\text{dim}\ W^u(\Lambda)}\left(\frac{1}{2\pi}\right)^s\left(\xi_{j,\Lambda}'(s)
+\ln\left(\frac{\ml{P}_{\Lambda}}{2\pi}\right)\xi_{j,\Lambda}(s)\right).$$
In order to compute the derivative at $s=0$, we shall come back to the expression in terms of Hurwitz zeta function. According to~\cite[Th.~12.13]{Ap76}, we know that
$\zeta(0,q)=\frac{1}{2}-q$. Hence, for $q_j^{\Lambda}=1$, one has $\xi_{j,\Lambda}(0)=-1$ while, for $q_j^{\Lambda}\neq 1$, $\xi_{j,\Lambda}(0)=0$. For 
the derivatives, one has $\xi_{j,\Lambda}'(0)=-\ln(2\pi)$ if $q_j^{\Lambda}=1$ and 
$\xi_{j,\Lambda}'(0)=-\ln(2\sin(\pi q_j^{\Lambda}))$ otherwise~\cite[p.~271]{WhWa96}~\cite[p.~195]{Ni08}.

Recall that we used the notations $\Delta_{\Lambda}=e^{2i\varepsilon_{\Lambda}\pi}$ and $M_{\ml{E}}(\Lambda)$ for the monodromy matrix 
around $\Lambda$. Also, we have set $q_j^{\Lambda}=\varepsilon_{\Lambda}+\gamma_j^{\Lambda}\ \text{mod}\ 1$ where $e^{2i\pi\gamma_j^{\Lambda}}$ are the eigenvalues 
of the monodromy matrix $M_{\ml{E}}(\Lambda)$. With these notations, we finally find
$$\zeta_{V,\nabla}'(0)=-\sum_{\Lambda,1\leq j\leq N: q_j^{\Lambda}\neq 1}(-1)^{n+\text{dim}\ W^u(\Lambda)}\ln\left(2\sin(\pi q_j^{\Lambda})\right)
-\sum_{\Lambda}(-1)^{n+\text{dim}\ W^u(\Lambda)}m_{\Lambda}\ln\ml{P}_{\Lambda},$$
and thus
$$e^{-\zeta_{V,\nabla}'(0)}=\prod_{\Lambda,1\leq j\leq N: \Delta_{\Lambda}e^{2i\pi\gamma_j^{\Lambda}}\neq 1}
\left|1-\Delta_{\Lambda}e^{2i\pi\gamma_j^{\Lambda}}\right|^{(-1)^{n+\text{dim}\ W^u(\Lambda)}}
\prod_{\Lambda}\ml{P}_{\Lambda}^{(-1)^{n+\text{dim}\ W^u(\Lambda)}m_{\Lambda}}.$$
Note that if $\Delta_{\Lambda}$ is not an eigenvalue of $M_{\ml{E}}(\Lambda)$ for every closed orbit\footnote{This is the assumption made in~\cite{F87}.}, then we find
$$e^{-\zeta_{V,\nabla}'(0)}=\prod_{\Lambda}
\left|\det\left(\text{Id}-\Delta_{\Lambda}M_{\ml{E}}(\Lambda)\right)\right|^{(-1)^{n+\text{dim}\ W^u(\Lambda)}}
,$$
which is equal to the Reidemester torsion (if $V$ is also nonsingular) thanks to the work of Fried in~\cite[Sect.~3]{F87}. Recall that the left hand side can be interpreted 
either in terms of the nonzero Pollicott-Ruelle resonances on the imaginary axis or in terms of the torsion of the corresponding resonant states. To summarize, we have shown

\begin{prop}\label{c:RS} Suppose that the assumptions of Theorem~\ref{t:fried0} are satisfied. Then, one has 
\begin{itemize} 
\item $\zeta_{V,\nabla}(s)$ has a meromorphic 
extension to $\IC$ with a unique pole at $s=1$ which is simple and whose residue equals
 $$\frac{N}{\pi}\sum_{\Lambda\ \text{closed orbit}}(-1)^{n+\operatorname{dim}\ W^u(\Lambda)}\ml{P}_{\Lambda}.$$
\item one has
$$e^{(-1)^n\zeta_{V,\nabla}'(0)}=\prod_{\Lambda,1\leq j\leq N: \Delta_{\Lambda}e^{2i\pi\gamma_j^{\Lambda}}\neq 1}
\left|1-\Delta_{\Lambda}e^{2i\pi\gamma_j^{\Lambda}}\right|^{(-1)^{\operatorname{dim}\ W^u(\Lambda)+1}}
\prod_{\Lambda}\ml{P}_{\Lambda}^{(-1)^{\operatorname{dim}\ W^u(\Lambda)+1}m_{\Lambda}},$$
where $(e^{2i\pi\gamma_j^{\Lambda}})_{j=1,\ldots, N}$ are the eigenvalues of $M_{\ml{E}}(\Lambda)$. 
\item if $V$ is nonsingular and $m_{\Lambda}=0$ for every closed orbit, then $e^{(-1)^n\zeta_{V,\nabla}'(0)}$ is equal to the Reidemeister torsion of 
$(\ml{E},\nabla)$.
\end{itemize}
\end{prop}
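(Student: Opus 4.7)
The plan is to derive all three statements by reducing $\zeta_{V,\nabla}(s)$ to a finite combination of Hurwitz zeta functions, then invoking the classical analytic theory of those functions together with Fried's identification from \cite{F87} in the last step.

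First I would group the sum in \eqref{e:zeta-dynamic} according to each closed orbit $\Lambda$ and each $1\leq j\leq N$, pulling out the prefactor $(\ml{P}_{\Lambda}/2\pi)^{s}$. The inner sum over $p\in\IZ$ becomes $\xi_{j,\Lambda}(s)=\sum_{p+q_j^\Lambda\neq 0}|p+q_j^\Lambda|^{-s}$, where $q_j^\Lambda\in(0,1]$ is $\varepsilon_\Lambda+\gamma_j^\Lambda$ reduced modulo $1$; this splits as $\zeta(s,q_j^\Lambda)+\zeta(s,1-q_j^\Lambda)$ when $q_j^\Lambda\neq 1$ and as $2\zeta(s,1)$ (twice the Riemann zeta) when $q_j^\Lambda=1$. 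Since $\zeta(s,q)$ is meromorphic on $\IC$ with a single simple pole at $s=1$ of residue $1$ (see \cite[Ch.~12]{Ap76}), each $\xi_{j,\Lambda}(s)$ is meromorphic with a simple pole at $s=1$ of residue $2$. Summing and multiplying by $(\ml{P}_{\Lambda}/2\pi)$ yields precisely the first part of the proposition, since the sum over $1\leq j\leq N$ contributes a factor $N$.

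For the second part I would differentiate at $s=0$ the formula $\zeta_{V,\nabla}(s)=(-1)^{n}\sum_{\Lambda,j}(-1)^{\dim W^u(\Lambda)}(2\pi)^{-s}\ml{P}_\Lambda^{s}\xi_{j,\Lambda}(s)$. Using the standard identities $\zeta(0,q)=\frac{1}{2}-q$ and Lerch's formula $\zeta'(0,q)=\ln\Gamma(q)-\frac{1}{2}\ln(2\pi)$, combined with the Euler reflection $\Gamma(q)\Gamma(1-q)=\pi/\sin(\pi q)$, I get $\xi_{j,\Lambda}(0)=0$ and $\xi_{j,\Lambda}'(0)=-\ln(2\sin(\pi q_j^\Lambda))$ when $q_j^\Lambda\neq 1$, together with $\xi_{j,\Lambda}(0)=-1$ and $\xi_{j,\Lambda}'(0)=-\ln(2\pi)$ when $q_j^\Lambda=1$. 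Recall that $q_j^\Lambda=1$ corresponds exactly to $\Delta_\Lambda e^{2i\pi\gamma_j^\Lambda}=1$, so the count of such $j$'s for a given $\Lambda$ is $m_\Lambda$. Collecting the contributions, the terms $-\ln(2\pi)$ cancel the prefactor contribution in a convenient way, while each $-\ln(2\sin(\pi q_j^\Lambda))$ matches $-\ln|1-\Delta_\Lambda e^{2i\pi\gamma_j^\Lambda}|$ via $|1-e^{i\theta}|=2|\sin(\theta/2)|$, and the $m_\Lambda\ln\ml{P}_\Lambda$ contributions arise from the resonant modes with $q_j^\Lambda=1$. Raising the result to $e^{(-1)^n(\cdot)}$ then gives the stated product formula; all factors $(-1)^{n+\dim W^u(\Lambda)}$ are turned into $(-1)^{\dim W^u(\Lambda)+1}$ by the overall $(-1)^n$.

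Finally, for part three, under the additional hypotheses that $V$ is nonsingular and $m_\Lambda=0$ for every closed orbit, the second product collapses (every $\ml{P}_\Lambda^{\dots m_\Lambda}=1$) and, since no eigenvalue $\Delta_\Lambda e^{2i\pi\gamma_j^\Lambda}$ equals $1$, the first product rearranges as $\prod_\Lambda|\det(\operatorname{Id}-\Delta_\Lambda M_{\ml{E}}(\Lambda))|^{(-1)^{\dim W^u(\Lambda)+1}}$. This is exactly the dynamical expression that Fried proved coincides with the Reidemeister torsion of $(\ml{E},\nabla)$ in \cite[Sect.~3]{F87}, so quoting his theorem closes the argument.

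The main obstacle I expect is bookkeeping: correctly tracking the conventions $\gamma_j^\Lambda\in[0,1)$, $\varepsilon_\Lambda\in\{0,1/2\}$, and $q_j^\Lambda\in(0,1]$, and making sure the exponents $(-1)^{n+\dim W^u(\Lambda)}$ flip to $(-1)^{\dim W^u(\Lambda)+1}$ uniformly regardless of the parity of $n$. Once the Hurwitz reduction is in place, nothing else is genuinely hard: meromorphic continuation, residues, and the derivative at zero are all direct consequences of well-known identities, and the final geometric identification is an appeal to Fried's theorem.
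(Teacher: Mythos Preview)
Your proposal is correct and follows essentially the same route as the paper: both reduce $\zeta_{V,\nabla}(s)$ to a finite sum $\sum_{\Lambda,j}(-1)^{n+\dim W^u(\Lambda)}(\ml{P}_\Lambda/2\pi)^s\xi_{j,\Lambda}(s)$ with $\xi_{j,\Lambda}$ a two-sided Hurwitz zeta, read off the meromorphic continuation and residue from~\cite[Ch.~12]{Ap76}, evaluate $\xi_{j,\Lambda}(0)$ and $\xi_{j,\Lambda}'(0)$ (the paper cites~\cite{WhWa96,Ni08} directly, while you derive them from Lerch's formula and Euler reflection, which is the same content), and then invoke Fried~\cite[Sect.~3]{F87} for the identification with Reidemeister torsion.
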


\subsection{Twisted Fuller measures}\label{s:fried}

In this paragraph, we prove Theorem~\ref{t:fried0}. First of all, 
we rewrite the Fuller measure in terms of the dimension of the unstable manifolds of the critical elements and the twisting 
index $\Delta_\Lambda$ of the closed orbits, i.e.
\begin{eqnarray*}\mu_{V,\nabla}(t) & =&-\frac{N}{t}\sum_{\Lambda\ \text{fixed point}}(-1)^{\text{dim}\ W^u(\Lambda)}\\
& -& \sum_{\Lambda\ \text{closed orbit}}
\sum_{m\geq 1}\frac{1}{m}(-1)^{\text{dim}\ W^u(\Lambda)}\Delta_{\Lambda}^m\text{Tr}\left(M_{\ml{E}}(\Lambda)^m\right) \delta(t-m\ml{P}_{\Lambda}).
\end{eqnarray*}

The Morse inequality from Corollary~\ref{c:morse-smale-inequality}  turns out to be an \textbf{equality} in the case $k=n$, 
thus we find that $\sum_{\Lambda\ \text{fixed point}}(-1)^{\text{dim}\ W^u(\Lambda)}=\sum_k (-1)^kb_k(M,\cE) =\chi(M,\cE) $. Therefore the above formula can be rewritten in a more compact way
as~:
\begin{equation}\label{e:trace-guillemin}\mu_{V,\nabla}(t)=-\frac{\chi(M,\ml{E})}{t}-\sum_{\Lambda\ \text{closed orbit}}
\sum_{m\geq 1}\frac{1}{m}(-1)^{\text{dim}\ W^u(\Lambda)}\Delta_{\Lambda}^m\text{Tr}\left(M_{\ml{E}}(\Lambda)^m\right) \delta(t-m\ml{P}_{\Lambda}).\end{equation}
More explicitely, we can also write the sum on the r.h.s of the above identity in terms 
of the eigenvalues $\left(e^{ 2i\pi \gamma_j^\Lambda }\right)_{j=1}^N $ of the monodromy matrices
and the coefficients $\varepsilon_\Lambda$ such that $\Delta_\Lambda=e^{2i\pi\varepsilon_\Lambda}$
%{\red , we replace the twisting index $\Delta_\Lambda$ by $e^{2i\pi\varepsilon_\Lambda}$
%where $\varepsilon_\Lambda=1/2$ or $1$ if $\Lambda$ is twisted or untwisted 
%and the eigenvalues of $M_\Lambda$ are denoted by $\left(e^{ 2i\pi \gamma_j^\Lambda }\right)_{j=1}^N $ 
%hence $Tr(M_\Lambda^m)=\sum_{j=1}^N e^{ 2i\pi \gamma_j^\Lambda m }$ hence
%$Tr\left(\Delta^m_{\Lambda}M_{\ml{E}}(\Lambda)^m\right)=\sum_{j=1}^N e^{ 2i\pi \frac{(\gamma_j^\Lambda+\varepsilon_\Lambda )}{\mathcal{P}_\Lambda}m\ml{P}_\Lambda }$ and replacing in the big formula
%yields
%}
which yields~:
$$t\mu_{V,\nabla}(t)=-N\sum_{\Lambda\ \text{fixed point}}(-1)^{\text{dim}\ W^u(\Lambda)}-\sum_{\Lambda,j}\ml{P}_{\Lambda}(-1)^{\text{dim}\ W^u(\Lambda)}
\sum_{m\geq 1}\left(e^{2i\pi\frac{\gamma_j^{\Lambda}+\varepsilon_{\Lambda}}{\ml{P}_{\Lambda}}}\right)^{m\ml{P}_{\Lambda}} \delta(t-m\ml{P}_{\Lambda}).$$
For $T>0$ and $\gamma\in\IR$, recall that the Poisson
summation formula 
%{\red $\sum_{m\in \mathbb{Z}} \delta(t-mT)=\frac{1}{T}\sum_m e^{\frac{2i\pi mt}{T}}$, then multiply on both sides by 
%$e^{2i\pi\gamma t}$ gives $\sum_{m\in \mathbb{Z}} e^{2i\pi\gamma t}\delta(t-mT)=\sum_{m\in \mathbb{Z}} e^{2i\pi\gamma mT}\delta(t-mT)  = \frac{1}{T}\sum_m e^{\frac{2i\pi mt}{T}}e^{2i\pi\gamma t}=\frac{1}{T}\sum_m e^{\frac{2i\pi (m+T\gamma)t}{T}}$ which}
implies that~:
$$\sum_{m\in\IZ}\delta(t-mT) e^{2i\pi\gamma mT}=\frac{1}{T}\sum_{l\in\IZ}e^{\frac{2i\pi}{T}(l+\gamma T)t} .$$
We can now apply the above formula for $T=\ml{P}_\Lambda$ and $\gamma=\frac{\gamma_j^{\Lambda}+\varepsilon_{\Lambda}}{\ml{P}_{\Lambda}}$,
which gives us~:
$$t\mu_{V,\nabla}(t)=-N\sum_{\Lambda\ \text{fixed point}}(-1)^{\text{dim}\ W^u(\Lambda)}-\sum_{\Lambda,j}(-1)^{\text{dim}\ W^u(\Lambda)}
\sum_{l\in\IZ}e^{2i\pi t\frac{l+\gamma_j^{\Lambda}+\varepsilon_{\Lambda}}{\ml{P}_{\Lambda}}},$$
in the sense of distributions in $\ml{D}'(\IR_+^*)$. 

Finally, this quantity can be rewritten in a more spectral manner thanks to the results of paragraphs~\ref{ss:proof-koszul} 
and~\ref{ss:koszul-nonzero}.
More precisely,
\begin{equation}\label{e:beautiful}
\boxed{t\mu_{V,\nabla}(t)=\sum_{k=0}^n(-1)^{n-k+1}\sum_{z_0\in\ml{R}_k(V,\nabla)\cap i\IR}\text{dim} \left(C_{V,\nabla}^k(z_0)\cap\text{Ker}(\iota_V)\right)
e^{ z_0 t},}
\end{equation}
which concludes the proof of Theorem~\ref{t:fried0}.

\subsection{Fried's torsion functions}\label{ss:hodge-torsion}

It now remains to prove Corollary~\ref{c:Spectralzetaruelle} related to the torsion function. Recall that 
we introduced in subsection~\ref{ss:mainresultstorsion} the following zeta function~:
\begin{eqnarray*}
\zeta_{V,\nabla}^{\flat}(s,z)=\frac{1}{\Gamma(s)}\int_{0}^{+\infty} e^{-tz}t\mu_{V,\nabla}(t) t^{s-1}dt,
\end{eqnarray*}
which, by definition of the twisted Fuller measure, can be rewritten thanks to~\eqref{e:trace-guillemin} as
$$\zeta_{V,\nabla}^{\flat}(s,z)=-\frac{\chi(M,\ml{E})}{z^s}-\frac{1}{\Gamma(s)}\sum_{\Lambda\ \text{closed orbit}}(-1)^{\text{dim}\ W^u(\Lambda)}
\ml{P}_{\Lambda}^s
\sum_{m\geq 1} \text{Tr}\left(\left(e^{-\ml{P}_{\Lambda }z}\Delta_{\Lambda}M_{\ml{E}}(\Lambda)\right)^{m}\right) m^{s-1}.$$
Recall that $\Gamma(s)^{-1}=s+o(s)$. Hence, if we differentiate $\zeta_{V,\nabla}^{\flat}$ with respect to $s$, then we find that, for $\text{Re}(z)$ large enough,
$$\partial_s\zeta_{V,\nabla}^{\flat}(0,z)=\chi(M,\ml{E})\log z-\sum_{\Lambda\ \text{closed orbit}}(-1)^{\text{dim}\ W^u(\Lambda)}
\sum_{m\geq 1} \frac{\text{Tr}\left(\left(e^{-\ml{P}_{\Lambda }z}\Delta_{\Lambda}M_{\ml{E}}(\Lambda)\right)^{m}\right)}{m}$$
where one has 
$$\sum_{m\geq 1} \frac{\text{Tr}\left(\left(e^{-\ml{P}_{\Lambda }z}\Delta_{\Lambda}M_{\ml{E}}(\Lambda)\right)^{m}\right)}{m}=-\log\text{det}\left(\text{Id}-e^{-\ml{P}_{\Lambda }z}\Delta_{\Lambda}M_{\ml{E}}(\Lambda) \right)$$
as soon as $\text{Re}(z)>0$ since $\Vert e^{-\ml{P}_{\Lambda }z}\Delta_{\Lambda}M_{\ml{E}}(\Lambda)\Vert\leqslant e^{-\ml{P}_{\Lambda }Re(z)}<1$ by unitarity of $\Delta_{\Lambda}M_{\ml{E}}(\Lambda)$.

Equivalently, this can be rewritten as
$$\partial_s\zeta_{V,\nabla}^{\flat}(0,z)=\chi(M,\ml{E})\log z+\sum_{\Lambda\ \text{closed orbit}}(-1)^{\text{dim}\ W^u(\Lambda)}
\log\text{det}\left(\text{Id}-e^{-\ml{P}_{\Lambda }z}\Delta_{\Lambda}M_{\ml{E}}(\Lambda)\right).$$
Note that this expression is well defined for $\text{Re}(z)>0$ and since the torsion function $Z_{V,\nabla}$ was defined as
$Z_{V,\nabla}(z)=e^{-\partial_s\zeta_{V,\nabla}^{\flat}(0,z)}$, we deduce the following identity relating the torsion function and the weighted Ruelle zeta function~:
$$Z_{V,\nabla}(z)=z^{-\chi(M,\ml{E})} \prod_{\Lambda\ \text{closed orbit}}
\text{det}\left(\text{Id}-e^{-\ml{P}_{\Lambda }z}\Delta_{\Lambda}M_{\ml{E}}(\Lambda)\right)^{-(-1)^{\text{dim}\ W^u(\Lambda)}},$$
as was stated in paragraph~\ref{ss:mainresultstorsion}.

In the case where $V$ is a nonsingular Morse-Smale vector field, we recognize the torsion function introduced by Fried in~\cite[p.~51-53]{F87}, also called twisted Ruelle zeta function. 
In any case, we can already verify from the exact expressions of the eigenvalues and their multiplicities (see paragraphs~\ref{sss:local-coordinates-critical-point} 
and~\ref{sss:local-coordinates-closed-orbit}) that the poles and zeros of $Z_{V,\nabla}(z)$ are completely determined by the resonances on the imaginary axis.

Let us now come back to Corollary~\ref{c:Spectralzetaruelle} and perform the Mellin transform on the right hand side of~\eqref{e:beautiful}. We find that
$$\zeta_{V,\nabla}^{\flat}(s,z)=\frac{1}{\Gamma(s)}\sum_{k=0}^n(-1)^{n-k+1}\sum_{z_0\in\ml{R}_k(V,\nabla)\cap i\IR}\text{dim} \left(C_{V,\nabla}^k(z_0)\cap\text{Ker}(\iota_V)\right)
\int_0^{+\infty}e^{ (z_0-z) t}t^{s-1}dt.$$
This is also equal to
$$\zeta_{V,\nabla}^{\flat}(s,z)=\sum_{k=0}^n(-1)^{n-k+1}\sum_{z_0\in\ml{R}_k(V,\nabla)\cap i\IR}\text{dim} \left(C_{V,\nabla}^k(z_0)\cap\text{Ker}(\iota_V)\right)
(z-z_0)^{-s}.$$

\begin{rema}
\label{r:formal}

This expression can be \emph{formally} differentiated at $s=0$ yielding~:
$$\partial_s\zeta_{V,\nabla}^{\flat}(0,z)=-\sum_{k=0}^n(-1)^{n-k+1}\sum_{z_0\in\ml{R}_k(V,\nabla)\cap i\IR}\text{dim} \left(C_{V,\nabla}^k(z_0)\cap\text{Ker}(\iota_V)\right)
\log(z-z_0),$$
from which we deduce the \textbf{formal} expression relating
Fried's torsion and some infinite product indexed by the resonances 
lying on the imaginary axis~:
$$Z_{V,\nabla}(z)=\prod_{k=0}^n\prod_{z_0\in\ml{R}_k(V,\nabla)\cap i\IR}
(z-z_0)^{(-1)^{n-k+1}\text{dim} \left(C_{V,\nabla}^k(z_0)\cap\text{Ker}(\iota_V)\right)}.$$
\end{rema}

Using our expression for the kernel of $\mathcal{L}_{V,\nabla}$, we now find that
\begin{eqnarray*}\zeta_{V,\nabla}^{\flat}(s,z) & = &-\left(\chi(M,\ml{E})+\sum_{\Lambda\ \text{closed orbit}}(-1)^{\text{dim}\ W^u(\Lambda)}m_{\Lambda} \right)z^{-s}\\
& +& \sum_{k=0}^n(-1)^{n-k+1}\sum_{z_0\in\ml{R}_k(V,\nabla)\cap i\IR^*}\text{dim} \left(C_{V,\nabla}^k(z_0)\cap\text{Ker}(\iota_V)\right)
(z-z_0)^{-s}.\end{eqnarray*}
Now, in the spirit of Ray-Singer definition of analytic torsion, we can rewrite the second term in the right-hand side in a slightly different manner.
As in the case of Hodge theory, we can observe that
$$\text{dim}\ C_{V,\nabla}^k(z_0)=\text{dim} \left(C_{V,\nabla}^k(z_0)\cap\text{Ker}(\iota_V)\right)+
\text{dim} \left(C_{V,\nabla}^k(z_0)\cap\text{Ker}(d^{\nabla})\right)$$
and that
$$\text{dim} \left(C_{V,\nabla}^k(z_0)\cap\text{Ker}(\iota_V)\right)=\text{dim} \left(C_{V,\nabla}^{k+1}(z_0)\cap\text{Ker}(d^{\nabla})\right).$$
As in~\cite[paragraph 8.2 p.~53]{Mn14}, this implies that
$$\zeta_{V,\nabla}^{\flat}(s,z)=-\left(\chi(M,\ml{E})+\sum_{\Lambda\ \text{closed orbit}}(-1)^{\text{dim}\ W^u(\Lambda)}m_{\Lambda} \right)z^{-s}+\zeta_{RS}(s,z),$$
as expected.

\begin{rema}
Following Dyatlov and Zworski~\cite{DyZw13}, we could also have defined the (twisted) dynamical zeta function as~:
\begin{eqnarray*}
\tilde{\zeta}_{V,\nabla}^{\flat}(z):=\int_{0}^{+\infty} e^{-tz}t\mu_{V,\nabla}(t) dt,
\end{eqnarray*}
which can be rewritten thanks to~\eqref{e:trace-guillemin} as
$$\tilde{\zeta}_{V,\nabla}^{\flat}(z)=-\frac{\chi(M,\ml{E})}{z}-\sum_{\Lambda\ \text{closed orbit}}
(-1)^{\text{dim}\ W^u(\Lambda)}\ml{P}_{\Lambda}\sum_{m\geq 1}\text{Tr}\left(\Delta_{\Lambda}M_{\ml{E}}(\Lambda)e^{-z\ml{P}_{\Lambda}}\right)^m,$$
or equivalently
$$\tilde{\zeta}_{V,\nabla}^{\flat}(z)=-\frac{\chi(M,\ml{E})}{z}-\sum_{\Lambda\ \text{closed orbit}}
(-1)^{\text{dim}\ W^u(\Lambda)}
\text{Tr}\left(\frac{\ml{P}_{\Lambda}\Delta_{\Lambda}M_{\ml{E}}(\Lambda)e^{-z\ml{P}_{\Lambda}}}{\text{Id}-\Delta_{\Lambda}M_{\ml{E}}(\Lambda)e^{-z\ml{P}_{\Lambda}}}\right).$$
In that case, we would find that the poles of this zeta function are contained inside the intersection of the Pollicott-Ruelle resonances with 
the imaginary axis.
\end{rema}

%\begin{rema}\label{r:faure-tsujii} Following Faure and Tsujii~\cite{FaTs17}, it would be natural to consider the following variant of the 
%twisted Fuller measure:
% \begin{eqnarray*}\mu_{V,\nabla}(t) & =&-\frac{N}{t}\sum_{\Lambda\ \text{fixed point}}
% \frac{(-1)^{\text{dim}\ W^u(\Lambda)}}{\sqrt{\left|\text{det}\left(\text{Id}-d_{\Lambda}\varphi^t\right)\right|}}\\
%& -& \sum_{\Lambda\ \text{closed orbit}}
%\sum_{m\geq 1}\frac{1}{m}\frac{(-1)^{\text{dim}\ W^u(\Lambda)}}{\sqrt{\left|\text{det}\left(\text{Id}-P_{\Lambda}^m\right)\right|}}\Delta_{\Lambda}^m\text{Tr}\left(M_{\ml{E}}(\Lambda)^m\right) \delta(t-m\ml{P}_{\Lambda}).
%\end{eqnarray*}
%\end{rema}

\appendix

\section{A brief reminder on Anosov and Morse-Smale flows}\label{a:flows}

In this appendix, we briefly review some classical definitions and results from dynamical systems.

\subsection{Morse-Smale flows}

We say that $\Lambda\subset M$ is an elementary critical element if $\Lambda$ is either a fixed point or a closed orbit of $\varphi^t$. 
Such an element is said to be hyperbolic if the fixed point or the closed orbit is hyperbolic -- see~\cite{AbMa08} or the appendix of~\cite{DaRi17a} for a brief reminder. 
Following~\cite[p.~798]{Sm67}, $\varphi^t$ is \textbf{a Morse-Smale flow} if the following properties hold:
\begin{enumerate}
 \item the non-wandering set $\operatorname{NW}(\varphi^t)$ is the union of finitely many elementary critical element $\Lambda_1,\ldots,\Lambda_K$ 
 which are hyperbolic,
 \item for every $i,j$ and for every $x$ in $W^u(\Lambda_j)\cap W^s(\Lambda_i)$, one has
 \footnote{See appendix of~\cite{DaRi17a} for the precise definition of the stable/unstable manifolds $W^{s/u}(\Lambda)$.} 
 $T_xM=T_xW^u(\Lambda_j)+T_xW^s(\Lambda_i)$.
\end{enumerate}
 We now briefly expose some important properties of Morse-Smale flows and we refer to~\cite{Fr82, PaDeMe82, DaRi17a} for a more 
detailed exposition on the dynamical properties of these flows.
Under such assumptions, one can show that, for every $x$ in $M$, there exists a unique couple $(i,j)$ such 
that $x\in W^u(\Lambda_j)\cap W^s(\Lambda_i)$ (see e.g. Lemma~3.1 in~\cite{DaRi17a}). 
In particular, \emph{the unstable manifolds $(W^u(\Lambda_j))_{j=1,\ldots, K}$ form a partition of $M$,} i.e.
$$M=\bigcup_{j=1}^KW^u(\Lambda_j),\ \ \text{and}\ \ \forall i\neq j,\ W^u(\Lambda_i)\cap W^u(\Lambda_j)=\emptyset.$$
The same of course holds for stable manifolds. One of the main feature of such flows is the following result which is due to Smale~\cite{Sm60, Sm67}:
\begin{theo}[Smale]\label{t:smale} Suppose that $\varphi^t$ is a Morse-Smale flow. Then, for every $1\leq j\leq K$, the closure of $W^u(\Lambda_j)$ is the union of certain $W^u(\Lambda_{j'})$. 
Moreover, if we say that $W^u(\Lambda_{j'})\leqq 
 W^u(\Lambda_j)$ if $W^u(\Lambda_{j'})$ is contained in the closure of $W^u(\Lambda_{j})$, then, $\leqq$ is a partial ordering.
Finally if $W^u(\Lambda_{j'})\leqq 
W^u(\Lambda_j)$, then $\operatorname{dim}W^u(\Lambda_{j'})\leq \operatorname{dim}W^u(\Lambda_{j}).$
\end{theo}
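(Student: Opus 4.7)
The plan is to establish the three assertions in sequence, using two basic tools beyond the definition of a Morse-Smale flow: the invariance of $\overline{W^u(\Lambda_j)}$ under the flow (which follows because each $\varphi^t$ is a homeomorphism preserving $W^u(\Lambda_j)$) and the inclination lemma (also called $\lambda$-lemma), which controls how smooth disks transverse to $W^s(\Lambda_{j'})$ at a hyperbolic critical element $\Lambda_{j'}$ accumulate in the $C^1$ topology on $W^u(\Lambda_{j'})$ under forward iteration. The partition $M = \bigsqcup_i W^u(\Lambda_i)$ recalled just before the statement will also be used repeatedly.

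For the first assertion, I would show that $\overline{W^u(\Lambda_j)}$ is saturated by this partition: whenever it meets some $W^u(\Lambda_{j'})$, it contains it entirely. Pick $x \in W^u(\Lambda_{j'}) \cap \overline{W^u(\Lambda_j)}$; since $\varphi^t(x) \to \Lambda_{j'}$ as $t \to -\infty$ and $\overline{W^u(\Lambda_j)}$ is closed and flow-invariant, the critical element $\Lambda_{j'}$ itself lies in $\overline{W^u(\Lambda_j)}$. Hence there is a sequence $y_n \in W^u(\Lambda_j)$ converging to a point of $\Lambda_{j'}$. The Morse-Smale transversality $T_{y_n}M = T_{y_n}W^u(\Lambda_j) + T_{y_n}W^s(\Lambda_{j'})$, valid for $n$ large by continuity and the local product structure near $\Lambda_{j'}$, lets me apply the inclination lemma to a small disk inside $W^u(\Lambda_j)$ through $y_n$ transverse to $W^s(\Lambda_{j'})$; its forward iterates $C^1$-accumulate on any compact piece of $W^u(\Lambda_{j'})$, so $W^u(\Lambda_{j'}) \subset \overline{W^u(\Lambda_j)}$.

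For the partial ordering property, reflexivity is immediate and transitivity follows from the elementary fact that $A \subset \overline{B}$ and $B \subset \overline{C}$ imply $A \subset \overline{C}$. The delicate point, and what I expect to be the main obstacle, is antisymmetry: assuming $\Lambda_j \leqq \Lambda_{j'} \leqq \Lambda_j$ with $\Lambda_j \neq \Lambda_{j'}$, I must derive a contradiction. Iterating the first assertion in both directions produces a heteroclinic cycle $\Lambda_j \rightsquigarrow \Lambda_{j'} \rightsquigarrow \Lambda_j$. Applying the inclination lemma along this cycle together with the transversality condition (2), I would produce a point of $W^u(\Lambda_j) \cap W^s(\Lambda_j)$ outside $\Lambda_j$, i.e.\ a genuine homoclinic trajectory at a hyperbolic critical element; but every such point lies in $\operatorname{NW}(\varphi^t)$ while belonging to no $\Lambda_i$, contradicting hypothesis (1) that $\operatorname{NW}(\varphi^t)$ is exactly the finite union of critical elements.

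Finally, for the dimension statement, fix $x \in W^u(\Lambda_{j'})$ and a sequence $x_n \in W^u(\Lambda_j)$ with $x_n \to x$ (possible since $W^u(\Lambda_{j'}) \subset \overline{W^u(\Lambda_j)}$ by the first part). Passing to a subsequence, $T_{x_n} W^u(\Lambda_j)$ converges in the Grassmannian $\operatorname{Gr}_{\dim W^u(\Lambda_j)}(TM)$ to some subspace $\Pi \subset T_x M$ of dimension equal to $\dim W^u(\Lambda_j)$. Pulling back by $\varphi^{-t}$ and using that the derivative cocycle strictly expands $T W^u(\Lambda_{j'})$ while contracting the complementary stable directions near $\Lambda_{j'}$, a compactness and limit argument on the Grassmannian forces $\Pi \supset T_x W^u(\Lambda_{j'})$. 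Hence $\dim W^u(\Lambda_j) = \dim \Pi \geq \dim W^u(\Lambda_{j'})$, which completes the proof.
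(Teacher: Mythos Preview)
The paper does not prove this theorem at all: it is stated in the appendix as a classical result of Smale, with references to \cite{Sm60, Sm67}, as part of a brief reminder on Morse--Smale flows. So there is no ``paper's own proof'' to compare against; what you have written is an independent sketch of the standard argument.

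Your outline is essentially correct and follows the classical route via the $\lambda$-lemma. Two places deserve tightening. First, in the saturation argument, the points $y_n \in W^u(\Lambda_j)$ converging to $\Lambda_{j'}$ need not lie in $W^s(\Lambda_{j'})$, so the Morse--Smale transversality hypothesis does not apply to them directly; you should either observe that if no $y_n$ lies on $W^s_{loc}(\Lambda_{j'})$ then forward orbits of the $y_n$ already accumulate on $W^u_{loc}(\Lambda_{j'})$ by hyperbolicity, or argue that one can always find a genuine intersection point $W^u(\Lambda_j)\cap W^s(\Lambda_{j'})$ (possibly through a chain of critical elements) before invoking the $\lambda$-lemma. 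Second, the Grassmannian argument for the dimension inequality is vague as written: pulling back by $\varphi^{-t}$ contracts the unstable directions of $\Lambda_{j'}$, which is the wrong monotonicity for the conclusion you want. A cleaner way is to note that the $C^1$ $\lambda$-lemma you already used gives disks inside $W^u(\Lambda_j)$ that $C^1$-converge to any compact disk in $W^u(\Lambda_{j'})$, and this immediately forces $\dim W^u(\Lambda_j) \geq \dim W^u(\Lambda_{j'})$. The antisymmetry step via a homoclinic/nonwandering contradiction is fine.
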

The partial order relation on the collection of subsets $W^u(\Lambda_j)_{j=1}^K$ defined above is called \textbf{Smale causality relation}. Following Smale, 
we define an oriented graph\footnote{This diagram is the Hasse diagram associated to the poset $\left(W^u(\Lambda_j)_{j=1}^K,\leqq\right)$.} 
$D$ whose $K$ vertices are given by $W^u(\Lambda_j)_{j=1}^K$. Two vertices $W^u(\Lambda_j), W^u(\Lambda_i) $ are connected
by an oriented path starting at $W^u(\Lambda_j)$ and ending at $W^u(\Lambda_i)$ iff $W^u(\Lambda_j)\leqq W^u(\Lambda_i)$. Recall from the works of 
Peixoto~\cite{Pe62} that Morse-Smale flows form an open and dense of all smooth vector fields on surfaces while it is an open set in higher 
dimensions~\cite{Pa68}.

In the constructions from~\cite{DaRi17a, DaRi17b}, we needed to make extra assumptions on our flows, namely that they are linearizable 
near every critical element $\Lambda_i$. 
More precisely, 
we fix $1\leq l\leq \infty$ and we say that the Morse-Smale flow is \textbf{$\ml{C}^l$-linearizable} if for every $1\leq i\leq k$, the following hold:
\begin{itemize}
 \item If $\Lambda_i$ is a fixed point, there exists a $\ml{C}^l$ diffeomorphism 
$h: B_n(0,r)\rightarrow W$ (where $W$ is a small open neighborhood of $\Lambda_i$ and $B_n(0,r)$ is a small ball of radius $r$ 
centered at $0$ in $\IR^n$) and a linear map $A_i$ on $\IR^n$ such that $V\circ h=dh\circ L$ where $V$ is the vector field generating $\varphi^t$ and where
$$L(x)=A_ix.\partial_x.$$
 \item If $\Lambda_i$ is closed orbit of period $\ml{P}_{\Lambda_i}$ if there exists a $\ml{C}^l$ diffeomorphism 
$h: B_{n-1}(0,r)\times\IR/(\ml{P}_{\Lambda_i}\IZ)\rightarrow W$ (where $W$ is a small open neighborhood of $\Lambda_i$ and $r>0$ is small) and a 
smooth map $A:\IR/(\ml{P}_{\Lambda_i}\IZ)\rightarrow M_{n-1}(\IR)$ such that $V\circ h=dh\circ L$ with
$$L_i(x,\theta)=A_i(\theta)x.\partial_x+\partial_{\theta}.$$
\end{itemize}
In other words, the flow can be put into a normal form in a certain chart of class $\ml{C}^l$. We shall say that \textbf{a Morse-Smale flow is 
$\ml{C}^l$-diagonalizable} 
if it is $\ml{C}^k$-linearizable and if, for every critical element $\Lambda$, either the linearized matrix $A\in GL_n(\IR)$ or the monodromy matrix 
$M$ associated with $A(\theta)$ is diagonalizable in $\IC$. Such properties are satisfied as soon as certain (generic) non 
resonance assumptions are made on the Lyapunov exponents thanks to the Sternberg-Chen Theorem~\cite{Ne69, WWL08}. We refer to the 
appendix of~\cite{DaRi17a} for a detailed description of these nonresonant assumptions.

\subsection{Anosov flows}

We say that a flow $\varphi^t:M\rightarrow M$ $\Lambda$ is of \textbf{Anosov type}~\cite{Ano67, AbMa08} if there exist $C>0$ and $\chi>0$ and a family of spaces
$E_u(\rho),E_s(\rho)\subset T_{\rho}\mathbb{M}$ (for every $\rho$ in $M$) satisfying the following properties, 
for every $\rho$ in $M$ and for every $t\geq 0$,
\begin{enumerate}
 \item $T_{\rho}\mathbb{M}=\IR V(\rho)\oplus E_u(\rho)\oplus E_s(\rho)$ with $V(\rho)=\frac{d}{d}(\varphi^t(\rho))\rceil_{t=0}$,
 \item $d_{\rho}\varphi^t E_{u/s}(\rho)=E_{u/s}(\varphi^t(\rho))$,
 \item for every $v$ in $E_u(\rho)$, $\|d_{\rho}\varphi^{-t}v\|\leq Ce^{-\chi t}\|v\|$,
 \item for every $v$ in $E_s(\rho)$, $\|d_{\rho}\varphi^{t}v\|\leq Ce^{-\chi t}\|v \|$.
\end{enumerate}
Again, it is known from the works of Anosov that such flows form an open set inside smooth vector fields~\cite{Ano67}.

\end{document}